\newcommand{\eval}[2][\right]{\relax
  \ifx#1\right\relax \left.\fi#2#1\rvert}
\newcommand{\pd}{{\partial}}
\newcommand{\dfb}{q}
\newcommand{\al}{{\alpha}}
\newcommand{\la}{{\lambda}}
\newcommand{\mcov}{M'}
\newcommand{\ecov}{\CE'}
\newcommand{\acov}{a'}	
\newcommand{\knpr}{v}
\newcommand{\mnd}{\mathcal{M}}
\newcommand{\smf}{\mathcal{S}}
\newcommand{\dis}{\mathcal{D}}
\newcommand{\slr}{\mathfrak{sl}_2(\mathbb{K})}
\newcommand{\ost}{\mathbb{U}}
\newcommand{\er}{\eqref}
\newcommand{\cl}{\colon}
\newcommand{\beq}{\begin{equation}}
\newcommand{\ee}{\end{equation}}
\newcommand{\bmu}{\begin{multline}}
\newcommand{\emul}{\end{multline}}
\newcommand{\wga}{\mathcal{A}}
\newcommand{\wgb}{\mathcal{B}}
\newcommand{\wea}{\mathfrak{W}_a}
\newcommand{\swe}{\mathfrak{R}}
\newcommand{\be}{{\beta}}
\newcommand{\anA}{\mathsf{A}}
\newcommand{\anB}{\mathsf{B}}
\newcommand{\CE}{\mathcal{E}}
\newcommand{\ce}{\mathcal{E}}
\newcommand{\grh}{\mathfrak{K}}
\newcommand{\zp}{\mathbb{Z}_{\ge 0}}
\newcommand{\zsp}{\mathbb{Z}_{>0}}
\newcommand{\CCD}{\mathcal{D}}
\newcommand{\CF}{\mathcal{F}}
\newcommand{\msl}{\mathfrak{sl}}
\newcommand{\gl}{\mathfrak{gl}}
\newcommand{\mg}{\mathfrak{g}}
\newcommand{\bl}{\mathfrak{L}}
\newcommand{\ga}{\mathbb{A}}
\newcommand{\gb}{\mathbb{B}}
\newcommand{\mR}{\mathfrak{R}}
\newcommand{\mh}{\mathfrak{H}}
\newcommand{\lb}{\label}
\newcommand{\vf}{\varphi}
\newcommand{\codim}{\mathrm{codim}\,}
\newcommand{\Com}{\mathbb{C}}
\newcommand{\kik}{\mathbb{K}}
\DeclareMathOperator{\fd}{\mathbb{F}}
\DeclareMathOperator{\fds}{\mathbb{F}}
\DeclareMathOperator{\rdc}{\mathfrak{S}}
\DeclareMathOperator{\ad}{\mathrm{ad}}
\DeclareMathOperator{\iqs}{\mathbf{Q}}
\DeclareMathOperator{\itw}{\mathbb{IT}}
\newcommand{\zcs}{\mathcal{V}}
\newcommand{\ct}{\mathcal{C}}
\newcommand{\cprime}{\/{\mathsurround=0pt$'$}}
\newcommand{\kne}{\mathrm{KN}}
\newcommand{\cur}{\mathrm{C}}
\newcommand{\oc}{p}
\newcommand{\sm}{N}
\newcommand{\hrf}{\mu}
\newcommand{\nv}{m}
\newcommand{\eo}{d}
\newtheorem{theorem}{Theorem}
\newtheorem{proposition}{Proposition}
\newtheorem{lemma}{Lemma}
\theoremstyle{definition}
\newtheorem{definition}{Definition}
\newtheorem{example}{Example}
\newtheorem{remark}{Remark}
\begin{document}

\title[On Lie algebras responsible for zero-curvature representations]
{On Lie algebras responsible for zero-curvature representations and 
B\"acklund transformations of $(1+1)$-dimensional scalar evolution PDEs}
\date{}

\author{Sergei Igonin}
\address{Centre of Integrable Systems, Yaroslavl State University, Yaroslavl, Russia, \\
INdAM, Dipartimento di Scienze Matematiche, Politecnico di Torino, \\
Corso Duca degli Abruzzi 24, 10129 Torino, Italy\\
\textup{E-mail address: s-igonin@yandex.ru}}

\author{Gianni Manno}
\address{Dipartimento di Scienze Matematiche, Politecnico di Torino, \\
Corso Duca degli Abruzzi 24, 10129 Torino, Italy\\
\textup{E-mail address: giovanni.manno@polito.it}}

\begin{abstract}
Zero-curvature representations (ZCRs) are one 
of the main tools in the theory of integrable PDEs. 
In particular, Lax pairs for $(1+1)$-dimensional PDEs can be interpreted as ZCRs. 

In [arXiv:1303.3575], for any $(1+1)$-dimensional scalar evolution equation~$\CE$, 
we defined a family of Lie algebras~$\fds(\CE)$ 
which are responsible for all ZCRs of~$\CE$ in the following sense.
Representations of the algebras~$\fds(\CE)$ 
classify all ZCRs of the equation~$\CE$ up to local gauge transformations. 
Also, using these algebras, 
one obtains necessary conditions for existence 
of a B\"acklund transformation between two given equations.
The algebras~$\fds(\CE)$ are defined in~[arXiv:1303.3575] 
in terms of generators and relations.

In this approach, ZCRs may depend on partial derivatives of arbitrary order, 
which may be higher than the order of the equation~$\CE$.
The algebras~$\fds(\CE)$ generalize Wahlquist-Estabrook prolongation algebras, 
which are responsible for a much smaller class of ZCRs.

In this preprint we prove a number of results 
on~$\fds(\CE)$ which were announced in~[arXiv:1303.3575].
We present applications of~$\fds(\CE)$ to the theory of B\"acklund transformations 
in more detail and describe 
the explicit structure (up to non-essential nilpotent ideals)
of the algebras~$\fds(\CE)$ for a number of equations of orders~$3$ and~$5$. 
\end{abstract}

\subjclass[2010]{37K30, 37K35}

\maketitle

\section{Introduction and the main results}

\subsection{Zero-curvature representations and the algebras $\fd^{\oc}(\CE,a)$}
\lb{subsint1}

Zero-curvature representations and B\"acklund transformations 
belong to the main tools in the theory of integrable PDEs 
(see, e.g.,~\cite{zakh-shab,ft,backlund}).
This preprint is part of a research program
on investigating the structure of zero-curvature representations (ZCRs)
and B\"acklund transformations (BTs)
for partial differential equations (PDEs) of various types. 

In this preprint we present a number of results on 
ZCRs and BTs for $(1+1)$-dimensional scalar evolution equations
\beq
\label{eveq_intr}
u_t=F(x,t,u_0,u_1,\dots,u_{\eo}),\qquad\quad  
u=u(x,t),
\ee
where we use the notation
\beq
\lb{dnot}
u_t=\frac{\pd u}{\pd t},\qquad\quad u_0=u,\qquad\quad
u_k=\frac{\pd^k u}{\pd x^k},\qquad\quad
k\in\zp.
\ee
The number $\eo\ge 1$ in~\er{eveq_intr} is such that 
the function $F$ may depend only on $x$, $t$, $u_k$ for $k\le\eo$. 

In the future we plan to update this preprint at arxiv.org,
in order to present more results and more details in proofs.

%


Let $\mg$ be a finite-dimensional Lie algebra.  
For an equation of the form~\er{eveq_intr}, 
a \emph{zero-curvature representation \textup{(}ZCR\textup{)} 
with values in~$\mg$} is given by $\mg$-valued functions
\beq
\lb{mnoc}
A=A(x,t,u_0,u_1,\dots,u_\oc),\qquad\quad B=B(x,t,u_0,u_1,\dots,u_{\oc+\eo-1})
\ee
satisfying
\beq
\lb{mnzcr}
D_x(B)-D_t(A)+[A,B]=0.
\ee

The \emph{total derivative operators} $D_x$, $D_t$ in~\er{mnzcr} are 
\beq
\lb{evdxdt}
D_x=\frac{\pd}{\pd x}+\sum_{k\ge 0} u_{k+1}\frac{\pd}{\pd u_k},\qquad\qquad
D_t=\frac{\pd}{\pd t}+\sum_{k\ge 0} D_x^k\big(F(x,t,u_0,u_1,\dots,u_{\eo})\big)\frac{\pd}{\pd u_k}.
\ee

The number $\oc$ in~\er{mnoc} is such that  
the function $A$ may depend only on the variables $x$, $t$, $u_{k}$ for $k\le\oc$.
Then equation~\er{mnzcr} implies that 
the function $B$ may depend only on $x$, $t$, $u_{k'}$ for $k'\le\oc+\eo-1$.

Such ZCRs are said to be \emph{of order~$\le\oc$}. 
In other words, a ZCR given by $A$, $B$ is of order~$\le\oc$ iff 
$\dfrac{\pd A}{\pd u_l}=0$ for all $l>\oc$. 

\begin{remark}
The right-hand side $F=F(x,t,u_0,u_1,\dots,u_{\eo})$ 
of~\er{eveq_intr} appears in condition~\er{mnzcr}, 
because $F$ appears in the formula for the operator $D_t$ in~\er{evdxdt}.
Note that~\er{mnzcr} can be written as $[D_x+A,\,D_t+B]=0$, because $[D_x,D_t]=0$.
See also Remark~\ref{als} below for another interpretation of equation~\er{mnzcr}.
\end{remark}

\begin{remark}
\lb{rfxtu}
When we consider a function $Q=Q(x,t,u_0,u_1,\dots,u_l)$ 
for some $l\in\zp$, we always assume that this function is analytic 
on an open subset of the manifold with the coordinates 
$x,t,u_0,u_1,\dots,u_l$.
For example, $Q$ may be a meromorphic function,  
because a meromorphic function is analytic on some open subset of the manifold.
In particular, this applies to the functions~\er{mnoc}.
\end{remark}

Without loss of generality, one can assume that $\mg$ is a Lie subalgebra 
of $\gl_\sm$ for some $\sm\in\zsp$, where $\gl_\sm$ is the algebra of 
$\sm\times\sm$ matrices with entries from $\mathbb{R}$ or $\mathbb{C}$. 
So our considerations are applicable to both cases $\gl_\sm=\gl_\sm(\mathbb{R})$ 
and $\gl_\sm=\gl_\sm(\mathbb{C})$.
And we denote by $\mathrm{GL}_\sm$ the group of invertible $\sm\times\sm$ matrices.

Let $\kik$ be either $\Com$ or $\mathbb{R}$.
Then $\gl_\sm=\gl_\sm(\kik)$ and $\mathrm{GL}_\sm=\mathrm{GL}_\sm(\kik)$.
In this preprint, all algebras are supposed to be over the field~$\kik$.

\begin{remark}
\lb{als}
So we suppose that functions $A$, $B$ in~\er{mnzcr} 
take values in $\mg\subset\gl_\sm$.
Then condition~\er{mnzcr} implies that the auxiliary linear system 
\beq
\lb{auxls}
\pd_x(W)=-AW,\qquad\quad
\pd_t(W)=-BW
\ee
is compatible modulo~\er{eveq_intr}.
Here $W=W(x,t)$ is an invertible $\sm\times\sm$ matrix-function.
\end{remark}

We need to consider also gauge transformations, which act on ZCRs and 
can be described as follows. 

Let $\mathcal{G}\subset\mathrm{GL}_\sm$ be the connected matrix Lie group 
corresponding to the Lie algebra $\mg\subset\gl_\sm$.
(That is, $\mathcal{G}$ is the connected 
immersed Lie subgroup of $\mathrm{GL}_\sm$ 
corresponding to the Lie subalgebra $\mg\subset\gl_\sm$.)
A \emph{gauge transformation} is given by an invertible matrix-function 
$G=G(x,t,u_0,u_1,\dots,u_l)$ with values in~$\mathcal{G}$.

For any ZCR~\er{mnoc},~\er{mnzcr} and 
any gauge transformation $G=G(x,t,u_0,u_1,\dots,u_l)$, the functions 
\beq
\lb{mnprint}
\tilde{A}=GAG^{-1}-D_x(G)\cdot G^{-1},\qquad\qquad
\tilde{B}=GBG^{-1}-D_t(G)\cdot G^{-1}
\ee
satisfy $D_x(\tilde{B})-D_t(\tilde{A})+[\tilde{A},\tilde{B}]=0$ and, therefore, form a ZCR.
Moreover, since $A$, $B$ take values in~$\mg$ and $G$ 
takes values in~$\mathcal{G}$, 
the functions $\tilde{A}$, $\tilde{B}$ take values in~$\mg$.

The ZCR~\er{mnprint} is said to be \emph{gauge equivalent} to the ZCR~\er{mnoc},~\er{mnzcr}. 
For a given equation~\er{eveq_intr}, formulas~\er{mnprint} determine an action of the group 
of gauge transformations on the set of ZCRs of this equation.

Recall that the \emph{infinite prolongation} $\CE$ of equation~\er{eveq_intr} 
is an infinite-dimensional manifold with the coordinates 
$x$, $t$, $u_k$ for $k\in\zp$.
The precise definition of the manifold $\CE$ is given in Section~\ref{spdejs} 
and is further clarified in Section~\ref{btcsev}.

Recall that $\kik$ is either $\Com$ or $\mathbb{R}$.
We suppose that the variables $x$, $t$, $u_k$ take values in $\kik$. 
A point $a\in\CE$ is determined by the values of the coordinates 
$x$, $t$, $u_k$ at $a$. Let
\begin{equation}
\notag
a=(x=x_a,\,t=t_a,\,u_k=a_k)\,\in\,\CE,\qquad\qquad x_a,\,t_a,\,a_k\in\kik,\qquad k\in\zp,
\end{equation}
be a point of $\CE$.
In other words, the constants $x_a$, $t_a$, $a_k$ are the coordinates 
of the point $a\in\CE$ in the coordinate system $x$, $t$, $u_k$.

For each $\oc\in\zp$ and each $a\in\CE$, 
the paper~\cite{scal13} defines a Lie algebra $\fd^{\oc}(\CE,a)$ 
so that the following property holds. 
For every finite-dimensional Lie algebra $\mg$,
on a neighborhood of $a\in\CE$, 
any $\mg$-valued ZCR~\er{mnoc},~\er{mnzcr} of order~$\le\oc$  
is locally gauge equivalent to the ZCR arising from a homomorphism 
$\fd^{\oc}(\CE,a)\to\mg$. 
(We suppose that the $\mg$-valued functions~\er{mnoc} are defined 
on a neighborhood of $a\in\CE$.)

The algebra $\fd^{\oc}(\CE,a)$ is defined in~\cite{scal13} 
in terms of generators and relations, 
using a normal form for ZCRs with respect to the action 
of the group of local gauge transformations.
The definition of $\fd^{\oc}(\CE,a)$ from~\cite{scal13}
is recalled in Section~\ref{btcsev} of the present preprint.
(To clarify the main idea, in Example~\ref{edfoc1} 
below we consider the case $\oc=1$.)

According to Section~\ref{btcsev}, 
the algebras $\fd^{\oc}(\CE,a)$ for $\oc\in\zp$ 
are arranged in a sequence of surjective homomorphisms 
\beq
\lb{intfdoc1}
\dots\to\fd^{\oc}(\CE,a)\to\fd^{\oc-1}(\CE,a)\to\dots\to\fd^1(\CE,a)\to\fd^0(\CE,a).
\ee
The family of Lie algebras $\fd(\CE)$ mentioned in the abstract 
of this preprint consists of the algebras $\fd^{\oc}(\CE,a)$ 
for all $\oc\in\zp$, $a\in\CE$.

\begin{remark}
\lb{rhfdfd}
According to Remark~\ref{fdrzcr}, for each $\oc\in\zsp$, 
the algebra $\fd^{\oc}(\CE,a)$ is responsible for ZCRs of order $\le\oc$, 
and the algebra $\fd^{\oc-1}(\CE,a)$ is responsible for ZCRs of order $\le\oc-1$.    
The surjective homomorphism $\fd^{\oc}(\CE,a)\to\fd^{\oc-1}(\CE,a)$ in~\er{intfdoc1}
reflects the fact that any ZCR of order $\le\oc-1$ is at the same time of order~$\le\oc$.

The homomorphism $\fd^{\oc}(\CE,a)\to\fd^{\oc-1}(\CE,a)$ is 
defined by formulas~\er{fdhff}, using generators of the algebras 
$\fd^{\oc}(\CE,a)$, $\fd^{\oc-1}(\CE,a)$.
\end{remark}

\begin{remark}
Consider the case when $\oc=0$ and the functions $F$, $A$, $B$ do not depend on $x$, $t$.
Then formulas~\er{mnoc},~\er{mnzcr} become 
\beq
\lb{wecov}
A=A(u_0),\qquad B=B(u_0,u_1,\dots,u_{\eo-1}),\qquad
D_x(B)-D_t(A)+[A,B]=0.
\ee
ZCRs of the form~\er{wecov} can be studied by 
the Wahlquist-Estabrook prolongation method (WE method for short). 

Namely, for a given equation of the form $u_t=F(u_0,u_1,\dots,u_{\eo})$, 
the WE method constructs a Lie algebra so that $\mg$-valued ZCRs 
of the form~\er{wecov} correspond to homomorphisms from this algebra to $\mg$ 
(see, e.g.,~\cite{dodd,mll-2012,nonl89,Prol}). 
It is called the \emph{Wahlquist-Estabrook prolongation algebra}.
Note that in~\er{wecov} the function $A=A(u_0)$ depends only on~$u_0$.

The WE method does not use gauge transformations in a systematic way. 
In the classification of ZCRs~\er{wecov} this is acceptable, 
because the class of ZCRs~\er{wecov} is relatively small.  

The class of ZCRs~\er{mnoc},~\er{mnzcr} is much larger than that of~\er{wecov}.
As is shown in~\cite{scal13}, 
gauge transformations play a very important role in the classification 
of ZCRs~\er{mnoc},~\er{mnzcr}. 
Because of this, the classical WE method does not produce satisfactory results 
for~\er{mnoc},~\er{mnzcr}, especially in the case~$\oc>0$. 


It is proved in~\cite{scal13} that,
if the function $F$ in~\er{eveq_intr} does not depend on $x$, $t$, 
then the algebra $\fd^{0}(\CE,a)$ is isomorphic to a certain subalgebra of 
the Wahlquist-Estabrook prolongation algebra for~\er{eveq_intr}.
We recall this result in Section~\ref{swealg} and use it 
for computation of~$\fd^{0}(\CE,a)$ for some equations.
\end{remark}

\begin{example}
\lb{edfoc1}
To clarify the definition of $\fd^{\oc}(\CE,a)$, 
let us consider the case $\oc=1$. 
To this end, 
we fix an equation~\er{eveq_intr} and study ZCRs of order~$\le 1$ of this equation. 

According to Theorem~\ref{thnfzcr} in Section~\ref{btcsev}, 
any ZCR of order~$\le 1$ 
\beq
\lb{zcru1}
A=A(x,t,u_0,u_1),\qquad B=B(x,t,u_0,u_1,\dots,u_{\eo}),\qquad
D_x(B)-D_t(A)+[A,B]=0
\ee
on a neighborhood of $a\in\CE$ is gauge equivalent to a ZCR of the form 
\begin{gather}
\lb{nfzcr}
\tilde{A}=\tilde{A}(x,t,u_0,u_1),\qquad \tilde{B}=\tilde{B}(x,t,u_0,u_1,\dots,u_{\eo}),\\
\lb{nfzcreq}
D_x(\tilde{B})-D_t(\tilde{A})+[\tilde{A},\tilde{B}]=0,\\
\lb{nfab}
\frac{\pd\tilde{A}}{\pd u_1}(x,t,u_0,a_1)=0,\qquad 
\tilde{A}(x,t,a_0,a_1)=0,\qquad\tilde{B}(x_a,t,a_0,a_1,\dots,a_{\eo})=0. 
\end{gather}
Moreover, according to Theorem~\ref{thnfzcr},
for any given ZCR of the form~\er{zcru1}, 
on a neighborhood of $a\in\CE$ there is a unique 
gauge transformation $G=G(x,t,u_0,\dots,u_l)$ such that
the functions $\tilde{A}=GAG^{-1}-D_x(G)\cdot G^{-1}$,
$\tilde{B}=GBG^{-1}-D_t(G)\cdot G^{-1}$ 
satisfy~\er{nfzcr},~\er{nfzcreq},~\er{nfab}  
and $G(x_a,t_a,a_0,\dots,a_l)=\mathrm{Id}$, 
where $\mathrm{Id}\in\mathrm{GL}_\sm$ is the identity matrix.

(In the case of ZCRs of order~$\le 1$, 
this gauge transformation $G$ depends on $x$, $t$, $u_0$, 
so $G=G(x,t,u_0)$. 
In a similar result about ZCRs of order~$\le\oc$, 
which is described in Theorem~\ref{thnfzcr}, the corresponding 
gauge transformation depends on $x$, $t$, $u_0,\dots,u_{\oc-1}$.)

Therefore, we can say that properties~\er{nfab} determine a 
normal form for ZCRs~\er{zcru1} 
with respect to the action of the group of gauge transformations 
on a neighborhood of $a\in\CE$.

A similar normal form for ZCRs~\er{mnoc},~\er{mnzcr} 
with arbitrary $\oc$ is described in Theorem~\ref{thnfzcr}  
and Remark~\ref{rnfzcr}.

Since the functions $\tilde{A}$, $\tilde{B}$ from~\er{nfzcr},~\er{nfab} 
are analytic on a neighborhood of $a\in\CE$, these functions
are represented as absolutely convergent power series
\begin{gather}
\label{aser1}
\tilde{A}=\sum_{l_1,l_2,i_0,i_1\ge 0} 
(x-x_a)^{l_1} (t-t_a)^{l_2}(u_0-a_0)^{i_0}(u_1-a_1)^{i_1}\cdot
\tilde{A}^{l_1,l_2}_{i_0,i_1},\\
\lb{bser1}
\tilde{B}=\sum_{l_1,l_2,j_0,\dots,j_{\eo}\ge 0} 
(x-x_a)^{l_1} (t-t_a)^{l_2}(u_0-a_0)^{j_0}\dots(u_{\eo}-a_{\eo})^{j_{\eo}}\cdot
\tilde{B}^{l_1,l_2}_{j_0\dots j_{\eo}}.
\end{gather}
Here $\tilde{A}^{l_1,l_2}_{i_0,i_1}$ and $\tilde{B}^{l_1,l_2}_{j_0\dots j_{\eo}}$ 
are elements of a Lie algebra, which we do not specify yet. 

Using formulas~\er{aser1},~\er{bser1}, we see that properties~\er{nfab} are equivalent to 
\beq
\lb{ab000int}
\tilde{A}^{l_1,l_2}_{i_0,1}=
\tilde{A}^{l_1,l_2}_{0,0}=
\tilde{B}^{0,l_2}_{0\dots 0}=0
\qquad\qquad\forall\,l_1,l_2,i_0\in\zp.
\ee
To define $\fd^1(\CE,a)$, we regard $\tilde{A}^{l_1,l_2}_{i_0,i_1}$, 
$\tilde{B}^{l_1,l_2}_{j_0\dots j_{\eo}}$ from~\er{aser1},~\er{bser1} 
as abstract symbols.  
By definition, the algebra $\fd^1(\CE,a)$ is generated by the symbols 
$\tilde{A}^{l_1,l_2}_{i_0,i_1}$, $\tilde{B}^{l_1,l_2}_{j_0\dots j_{\eo}}$
for $l_1,l_2,i_0,i_1,j_0,\dots,j_{\eo}\in\zp$.
Relations for these generators are provided by equations~\er{nfzcreq},~\er{ab000int}. 
A more detailed description of this construction 
is given in Section~\ref{btcsev}.

\end{example}

Applications of $\fd^{\oc}(\CE,a)$ to the theory of 
B\"acklund transformations are presented in 
Section~\ref{subsecbt} and in Sections~\ref{sflapde},~\ref{snebt}. 
In Section~\ref{sfdce} we describe 
the structure of $\fd^{\oc}(\CE,a)$ for some equations of orders~$3$ and~$5$, 
including the Krichever-Novikov equation  
and a $5$th-order equation from~\cite{fordy-hh}.
The algebra $\fd^0(\CE,a)$ and the 
Wahlquist-Estabrook prolongation algebra for the 
$5$th-order equation from~\cite{fordy-hh} are studied in Section~\ref{swealg}.

For completeness, in Theorem~\ref{pfkdv} we recall 
a result from~\cite{scal13} which describes the structure 
of~$\fd^{\oc}(\CE,a)$ for the KdV equation.

\begin{remark}
\lb{multevol}
It is possible to introduce an analog of $\fd^\oc(\CE,a)$ for 
multicomponent evolution PDEs
\begin{gather}
\notag
\frac{\pd u^i}{\pd t}
=F^i(x,t,u^1,\dots,u^\nv,\,u^1_1,\dots,u^\nv_1,\dots,u^1_{\eo},\dots,u^\nv_{\eo}),\\
\notag
u^i=u^i(x,t),\qquad u^i_k=\frac{\pd^k u^i}{\pd x^k},\qquad  
i=1,\dots,\nv. 
\end{gather}
In this preprint we study only the scalar case $\nv=1$.
For $\nv>1$ one gets interesting results as well, 
but the case $\nv>1$ requires much more computations, which will be presented elsewhere.
Some results for $\nv>1$ 
(including a normal form for ZCRs with respect to the action of gauge transformations 
and the main properties of $\fd^\oc(\CE,a)$ in the multicomponent case)
are sketched in the preprints~\cite{hjpa,zcrm17}.
\end{remark}

\begin{remark}
Some other approaches to the study of 
the action of local gauge transformations on ZCRs can be found 
in~\cite{marvan93,marvan97,marvan2010,sakov95,sakov2004,sebest2008} 
and references therein.
For a given ZCR with values in a matrix Lie algebra $\mg$, 
the papers~\cite{marvan93,marvan97,sakov95} define 
certain $\mg$-valued functions, which transform by conjugation 
when the ZCR transforms by gauge. 
Applications of these functions to construction and classification of 
some types of ZCRs are described  
in~\cite{marvan93,marvan97,marvan2010,sakov95,sakov2004,sebest2008}.

To our knowledge, 
the theory of~\cite{marvan93,marvan97,marvan2010,sakov95,sakov2004,sebest2008} 
does not produce any infinite-dimensional Lie algebras responsible for ZCRs. 
So this theory does not contain the algebras $\fd^\oc(\CE,a)$.
\end{remark}

\subsection{B\"acklund transformations}
\lb{subsecbt}

\begin{remark}
\lb{rgapde}
In the study of B\"acklund transformations we use the geometric 
approach to PDEs 
by means of infinite jet spaces~\cite{rb,86,olver8693}, 
which can be outlined as follows.

Let $\mnd$ be a manifold. Let $n$ be a nonnegative integer such that $n\le\dim\mnd$.
Recall that an \emph{$n$-dimensional distribution} $\dis$ on $\mnd$ is 
an $n$-dimensional subbundle of the tangent bundle $T\mnd$.
In other words, to define an $n$-dimensional distribution $\dis$ on $\mnd$, 
we choose an $n$-dimensional subspace $\dis_a\subset T_a\mnd$ for each point
$a\in\mnd$ such that $\dis_a$ depends smoothly on $a$.
Here $T_a\mnd$ is the tangent space of the manifold $\mnd$ at $a\in\mnd$.
We need the case when $\mnd$ is infinite-dimensional. 
The precise definitions of infinite-dimensional manifolds
and $n$-dimensional distributions on them are given in Section~\ref{sbidm}.

A submanifold $\smf\subset\mnd$ is an \emph{integral submanifold} 
of the distribution $\dis$ if $T_a \smf\subset\dis_a$ for each $a\in \smf$,
where $T_a \smf$ is the tangent space of $\smf$ at $a\in \smf$.

Consider a PDE for functions $u^i=u^i(x_1,\dots,x_n)$, $i=1,\dots,\nv$,
\beq
\lb{nifal0}
F_\al\Big(x_1,\dots,x_n,u^1,\dots,u^\nv,\dots,
\frac{\pd^k u^j}{\pd x_{i_1}\dots\pd x_{i_k}},\dots\Big)=0,\qquad 
\al=1,\dots,q. 
\ee
Geometrically, an $\nv$-component vector-function 
$\big(u^1(x_1,\dots,x_n),\dots,u^\nv(x_1,\dots,x_n)\big)$ corresponds 
to a section of a fiber bundle $\pi\colon E\to B$ with $\nv$-dimensional fibers.
Here $B$ is an $n$-dimensional manifold with coordinates $x_1,\dots,x_n$.
Then $u^1,\dots,u^\nv$ can be regarded as coordinates 
in the fibers of the bundle~$\pi$.

Let $J^\infty$ be the manifold of infinite jets of local sections 
of the bundle $\pi$.
A geometric coordinate-independent definition of $J^\infty$ 
can be found in~\cite{rb}. 
We recall that $x_i$, $u^j$, and all partial derivatives of $u^j$ play the role of coordinates for the manifold $J^\infty$. 

Let $\CE\subset J^\infty$ be the subset 
of infinite jets satisfying the PDE~\er{nifal0} and all its differential consequences.
(A detailed definition of $\CE$ is given in Section~\ref{spdejs}.)

On the manifold $J^\infty$, one has the $n$-dimensional distribution called 
the \emph{Cartan distribution}~\cite{rb}.
Integral submanifolds of this distribution provide a geometric interpretation 
for solutions of the PDE. Namely, solutions of the PDE correspond to $n$-dimensional
integral submanifolds $\smf\subset J^\infty$ satisfying $\smf\subset\CE$.
In coordinates, the Cartan distribution is spanned by the total derivative operators 
$D_{x_i}$, $i=1,\dots,n$, which are regarded as vector fields on $J^\infty$.
The explicit formula for $D_{x_i}$ is \er{dxi} in Section~\ref{spdejs}, 
where one uses the notation~\er{usi}.
In coordinates, 
the subset $\CE\subset J^\infty$ consists of the points $a\in J^\infty$ 
that obey the equations $F_\al=0$ and $D_{x_{i_1}}\dots D_{x_{i_s}}(F_\al)=0$ 
for all $\al$, $s$, $i_1,\dots,i_s$.

If the PDE satisfies some non-degeneracy conditions, 
then the set~$\CE$ is a nonsingular submanifold of~$J^\infty$ 
and the Cartan distribution is tangent to~$\CE$, which gives 
an $n$-dimensional distribution on~$\CE$. Then $\CE$ is called \emph{nonsingular}.

These non-degeneracy conditions are satisfied on an open dense subset 
of $J^\infty$ for practically all PDEs in applications.
(If there are some singular points in $\CE$, one can exclude these points from 
consideration and study only the nonsingular part of $\CE$, 
which is usually open and dense in $\CE$.)
In particular, as is shown in Example~\ref{eipe}, 
for any $(1+1)$-dimensional evolution PDE the set $\CE$ is nonsingular.

In what follows, we always assume that $\CE$ is nonsingular in the above-mentioned sense.
We often identify a PDE with the corresponding manifold $\CE$.
So we can speak about a PDE $\CE$.
Thus, in this geometric approach, a PDE is regarded as a manifold $\CE$ with 
an $n$-dimensional distribution (the Cartan distribution) such that 
solutions of the PDE correspond to $n$-dimensional integral submanifolds, 
where $n$ is the number of independent variables in the PDE.
A more detailed description of this approach is given in Section~\ref{sgapde}.

To clarify the main idea, 
in Examples~\ref{cekdv},~\ref{cesg} below we describe
the construction of $\CE$ for the KdV and sine-Gordon equations.
These examples are well known, but it is instructive to discuss them.
The general construction of $\CE$ for arbitrary PDEs is presented
in Section~\ref{spdejs}.



Suppose that two PDEs $\CE^1$ and $\CE^2$ are isomorphic 
(i.e., $\CE^1$ can be obtained from $\CE^2$ by an invertible change 
of variables, and vice versa).
Then the corresponding manifolds $\CE^1$ and $\CE^2$ 
are connected by a diffeomorphism that preserves the Cartan distribution. 
Therefore, the manifold of infinite jets and the Cartan distribution associated 
with a PDE are the right objects to study if one 
is interested in properties that are invariant with respect to changes of variables. 
(As has been said above, we identify a PDE with the corresponding manifold 
of infinite jets. So here, for $i=1,2$, a PDE $\CE^i$ and the corresponding manifold 
are denoted by the same symbol $\CE^i$.)
\end{remark}

\begin{remark}
\lb{ranm}
In the present preprint 
all manifolds and maps of manifolds are supposed to be analytic.
In fact some analogous results can be proved for smooth manifolds as well, 
but the smooth case requires some extra technical considerations, 
which will be described elsewhere. 

Several more conventions and assumptions that are used in the preprint are described 
in Section~\ref{subs-conv}.
\end{remark}



\begin{example}
\lb{cekdv}

In Remark~\ref{rgapde} we have discussed the construction 
of a manifold~$\CE$ and the Cartan distribution on~$\CE$ for a given PDE.
Here we describe this construction for the KdV equation $u_t-u_{xxx}-6uu_x=0$, 
where partial derivatives of $u=u(x,t)$ are denoted by subcripts. 

Consider the space $\mathbb{R}^2$ with coordinates $(x,t)$, 
the space $\mathbb{R}^3$ with coordinates $(x,t,u)$, 
and the bundle $\pi\colon\mathbb{R}^3\to\mathbb{R}^2$ such that $\pi\big((x,t,u)\big)=(x,t)$.
A function $u(x,t)$ defined on an open subset of~$\mathbb{R}^2$ can be regarded as a local section of the bundle $\pi$.

Let $J^\infty$ be the manifold of infinite jets of local sections of the bundle $\pi$.
Then $J^\infty$ can be viewed as the infinite-dimensional manifold with coordinates
\beq
\lb{xtuxt}
x,\quad t,\quad u,\quad u_x,\quad u_t,\quad u_{xx},\quad u_{xt},
\quad u_{tt},\quad\dots
\ee
All partial derivatives of $u$ are included in~\er{xtuxt}.
Here~\er{xtuxt} are regarded as $\mathbb{R}$-valued variables, 
which play the role of coordinates for the manifold $J^\infty$.
A detailed definition of such infinite-dimensional manifolds 
is given in Section~\ref{sgapde}.

The total derivative operators
\begin{gather}
\lb{dxkdv}
D_{x}=\frac{\pd}{\pd{x}}+u_x\frac{\pd}{\pd{u}}+u_{xx}\frac{\pd}{\pd{u_x}}+
u_{xt}\frac{\pd}{\pd{u_t}}+
u_{xxx}\frac{\pd}{\pd{u_{xx}}}+u_{xxt}\frac{\pd}{\pd{u_{xt}}}+
u_{xtt}\frac{\pd}{\pd{u_{tt}}}+\dots,\\
\lb{dtkdv}
D_{t}=\frac{\pd}{\pd{t}}+u_t\frac{\pd}{\pd{u}}+u_{xt}\frac{\pd}{\pd{u_x}}+
u_{tt}\frac{\pd}{\pd{u_t}}+
u_{xxt}\frac{\pd}{\pd{u_{xx}}}+u_{xtt}\frac{\pd}{\pd{u_{xt}}}+
u_{ttt}\frac{\pd}{\pd{u_{tt}}}+\dots
\end{gather}
can be viewed as vector fields on $J^\infty$.

Consider the differential consequences of the KdV equation
\begin{gather}
\lb{dc1}
u_t-u_{xxx}-6uu_x=0,\qquad 
D_x(u_t-u_{xxx}-6uu_x)=u_{xt}-u_{xxxx}-6u_xu_x-6uu_{xx}=0,\\
\lb{dc2}
D_t(u_t-u_{xxx}-6uu_x)=u_{tt}-u_{xxxt}-6u_tu_x-6uu_{xt}=0,\\
\lb{dc3}
D_x^{k_1}D_t^{k_2}(u_t-u_{xxx}-6uu_x)=0,\qquad\quad k_1,k_2\in\zp.
\end{gather}
Here \er{dc1}, \er{dc2}, \er{dc3} 
are regarded as equations on the manifold $J^\infty$ 
with coordinates~\er{xtuxt}. 
Then $\CE\subset J^\infty$ is the submanifold of the points $a\in J^\infty$ 
that satisfy equations \er{dc1}, \er{dc2}, \er{dc3}.

The vector fields~\er{dxkdv}, \er{dtkdv} 
are tangent to the submanifold $\CE\subset J^\infty$. 
Hence the vector fields $D_x$, $D_t$ can be restricted to $\CE$, 
which gives the $2$-dimensional Cartan distribution on $\CE$.

Using equations \er{dc1}, \er{dc2}, \er{dc3}, one can uniquely express 
each of the coordinates~\er{xtuxt} in terms of the following coordinates 
\beq
\lb{xtuxx}
x,\,\ t,\,\  u,\,\  u_x,\,\  u_{xx},\,\  u_{xxx},
\,\  u_{xxxx},\,\ \dots,\,\  u_{kx},\,\ \dots\qquad\quad
k\in\zp.
\ee
Therefore, \er{xtuxx} can be viewed as coordinates on the manifold $\CE$.
So $\CE$ is isomorphic to the space~$\mathbb{R}^\infty$ 
with coordinates~\er{xtuxx}.
However, geometry of the Cartan distribution on $\CE$ is highly nontrivial.
Solutions of the KdV equation correspond to $2$-dimensional 
integral submanifolds of the Cartan distribution on $\CE$.
(Note that the Frobenius theorem on integral submanifolds of involutive distributions is not applicable here, because $\CE$ is infinite-dimensional.)

The KdV equation is a $(1+1)$-dimensional evolution PDE.
A detailed description of $\CE$ for $(1+1)$-dimensional evolution PDEs 
is given in Example~\ref{eipe}.

One can also consider the case when $x$, $t$, $u$ take values in $\Com$.
Then~\er{xtuxt} take values in $\Com$ as well.
\end{example}

\begin{example}
\lb{cesg}
Let us describe the construction of $\CE$ 
for the sine-Gordon equation $u_{xt}-\sin u=0$.
Consider the differential consequences of this equation
\begin{gather}
\lb{sgdc1}
u_{xt}-\sin u=0,\qquad
D_x(u_{xt}-\sin u)=u_{xxt}-u_x\cos u=0,\\
\lb{sgdckk}
D_t(u_{xt}-\sin u)=u_{xtt}-u_t\cos u=0,\qquad
D_x^{k_1}D_t^{k_2}(u_{xt}-\sin u)=0,\qquad k_1,k_2\in\zp.
\end{gather}
In Example~\ref{cekdv} we have introduced 
the manifold $J^\infty$ with coordinates~\er{xtuxt}.
We regard \er{sgdc1}, \er{sgdckk} as equations on the manifold $J^\infty$. 
Then $\CE\subset J^\infty$ is the submanifold of the points $a\in J^\infty$ 
that satisfy equations \er{sgdc1}, \er{sgdckk}.

The vector fields~\er{dxkdv}, \er{dtkdv} 
are tangent to the submanifold $\CE\subset J^\infty$. 
Hence the vector fields $D_x$, $D_t$ can be restricted to $\CE$, 
which gives the $2$-dimensional Cartan distribution on $\CE$.

Using equations \er{sgdc1}, \er{sgdckk}, one can uniquely express 
each of the coordinates~\er{xtuxt} in terms of the following coordinates 
\beq
\lb{sgxtu}
x,\,\ t,\,\  u,\,\  u_x,\,\  u_t,\,\  
u_{xx},\,\  u_{tt},\,\  u_{xxx},\,\  u_{ttt},\,\  
\dots,\,\  u_{kx},\,\ 
u_{kt},\,\ \dots\qquad\quad
k\in\zp.
\ee
Therefore, \er{sgxtu} can be viewed as coordinates on the manifold $\CE$.
\end{example}

B\"acklund transformations (BTs) are a well-known tool to construct new solutions for PDEs 
from known solutions (see, e.g.,~\cite{backl_new,backlund} and references therein).
Applying BTs to trivial solutions, one can often obtain interesting solutions.
Also, using BTs, one can sometimes transform complicated PDEs to simpler ones.

In this subsection we outline the main idea of the notion of BTs. 
A more detailed description of BTs is given in Section~\ref{scbt}.

According to Remark~\ref{rgapde}, 
a PDE can be regarded as a manifold $\CE$ with 
an $n$-dimensional distribution (the Cartan distribution) such that 
solutions of the PDE correspond to $n$-dimensional integral submanifolds, 
where $n$ is the number of independent variables in the PDE.

Let $\CE^1$ and $\CE^2$ be PDEs.
The PDEs $\CE^1$ and $\CE^2$ 
\emph{are connected by a B\"acklund transformation} 
if there is another PDE $\CE^3$ with maps 
\beq
\lb{ibtt}
\tau_1\cl\CE^3\to\CE^1,\quad\qquad\tau_2\cl\CE^3\to\CE^2
\ee
such that for each $i=1,2$ one has the following properties:
\begin{itemize}
\item For any solution $s$ of the PDE $\CE^3$, applying the map $\tau_i\cl\CE^3\to\CE^i$ to $s$, 
we get a solution $\tau_i(s)$ of the PDE $\CE^i$.
\item For any solution $s_i$ of the PDE $\CE^i$, the preimage $\tau_i^{-1}(s_i)$ is a family 
of $\CE^3$ solutions depending on a finite number of parameters.
\end{itemize}
In other words, 
a B\"acklund transformation (BT) between the PDEs $\CE^1$ and $\CE^2$ 
is given by a PDE~$\CE^3$ and maps $\tau_i\cl\CE^3\to\CE^i$, $i=1,2$, satisfying the above properties. 

Following A.~M.~Vinogradov and I.~S.~Krasilshchik~\cite{vin-lnm84,nonl89}, 
in Section~\ref{scbt} we formulate the above 
properties more precisely, using the geometry 
of the manifolds $\CE^1$, $\CE^2$, $\CE^3$
and the corresponding Cartan distributions. 
The main idea is that for each $i=1,2$ the map $\tau_i\cl\CE^3\to\CE^i$ must 
be a surjective submersion with finite-dimensional fibers and must preserve
the Cartan distribution in a certain sense. 
This implies the above properties for solutions, which are regarded 
as integral submanifolds. See Section~\ref{scbt} for more details.
To our knowledge, this definition of BTs covers all known examples 
of BTs for $(1+1)$-dimensional PDEs.

\begin{remark}
\lb{rbtsol}
Using a BT~\er{ibtt}, one can obtain solutions of $\CE^2$ from solutions of $\CE^1$ 
(and vice versa) as follows:

\textbf{Step 1.} 
Take a solution $s_1$ of the PDE $\CE^1$ and compute its preimage $\tau_1^{-1}(s_1)$ 
under the map $\tau_1\cl\CE^3\to\CE^1$.
Then $\tau_1^{-1}(s_1)$ is a family of solutions of the PDE $\CE^3$.

\textbf{Step 2.} Apply the map $\tau_2\cl\CE^3\to\CE^2$ to the family $\tau_1^{-1}(s_1)$.
Then $\tau_2\big(\tau_1^{-1}(s_1)\big)$ is a family of solutions of the PDE $\CE^2$.

So, from a given solution $s_1$ of the PDE $\CE^1$, one obtains the family 
$\tau_2\big(\tau_1^{-1}(s_1)\big)$ of solutions of the PDE~$\CE^2$.
Similarly, from a given solution $s_2$ of the PDE~$\CE^2$, one obtains 
the family $\tau_1\big(\tau_2^{-1}(s_2)\big)$ of solutions of the PDE~$\CE^1$.

If $\CE^1=\CE^2$ and $\tau_1\neq\tau_2$, 
then in this way one obtains new solutions for $\CE^1$ from known solutions.
\end{remark}

We write a BT~\er{ibtt} as the following diagram
\beq
\lb{btd}
\xymatrix{
& \CE^3\ar[dl]_{\tau_1} \ar[dr]^{\tau_2} &\\
\CE^1 && \CE^2
}
\ee 
\begin{example}
\lb{btkdv}
A well-known BT for the KdV equation can be written as follows
\beq
\lb{dbtkdv}
\xymatrix{
& 
v_t=v_{xxx}-6v^2v_x+6\lambda v_x
  \ar[ddl]_{u=v_x-v^2+\lambda} \ar[ddr]^{u=-v_x-v^2+\lambda} &\\
 & & \\
u_t=u_{xxx}+6uu_x & & u_t=u_{xxx}+6uu_x &
}
\ee
where $\lambda\in\kik$ is a constant.
Comparing~\er{dbtkdv} with~\er{btd}, we see that in the BT~\er{dbtkdv} 
one has the following.
\begin{itemize}
\item $\CE^1=\CE^2$ is the KdV equation $u_t=u_{xxx}+6uu_x$.
\item $\CE^3$ is the equation $v_t=v_{xxx}-6v^2v_x+6\lambda v_x$.
\item Applying the map $\tau_1\cl\CE^3\to\CE^1$ to a solution $v=v(x,t)$ 
of $\CE^3$, we get the solution $u=v_x-v^2+\lambda$ of $\CE^1$.
This is the well-known Miura transformation.
\item Applying the map $\tau_2\cl\CE^3\to\CE^2$ to a solution $v=v(x,t)$ 
of $\CE^3$, we get the solution $u=-v_x-v^2+\lambda$ of $\CE^1$.
\end{itemize}
\end{example}

\begin{remark}
If $\CE^1$, $\CE^2$, $\CE^3$ in a BT~\er{btd} are evolution equations,
then this BT is said to be \emph{of Miura type}.

Note that, in general, $\CE^3$ in a BT~\er{btd} is not necessarily 
an evolution equation, even if $\CE^1$, $\CE^2$ are evolution equations.
For example, in V.~E.~Adler's BT for the Krichever-Novikov equation~\cite{adler98}, 
$\CE^1$ and $\CE^2$ are evolution equations 
(isomorphic to the Krichever-Novikov equation), 
but the equation $\CE^3$ is not evolution.
\end{remark}

We are going to show that 
the algebras $\fd^\oc(\CE,a)$ help to obtain 
necessary conditions for existence of a B\"acklund transformation 
between two given evolution equations.

For each $\oc\in\zsp$, 
consider the surjective homomorphism $\vf_\oc\cl\fd^\oc(\CE,a)\to\fd^{\oc-1}(\CE,a)$ 
from~\er{intfdoc1}. 

Let $\fd(\CE,a)$ be the inverse limit of the sequence~\er{intfdoc1}. 
An element of $\fd(\CE,a)$ is given by a sequence 
$(c_0,c_1,c_2,\dots)$, 
where $c_\oc\in\fd^\oc(\CE,a)$ and $\vf_\oc(c_\oc)=c_{\oc-1}$ for all $\oc$.

Since~\er{intfdoc1} consists of homomorphisms of Lie algebras 
and $\fd(\CE,a)$ is the inverse limit of~\er{intfdoc1}, 
the space $\fd(\CE,a)$ is a Lie algebra as well. 
If $(c_0,c_1,c_2,\dots)$ and $(c'_0,c'_1,c'_2,\dots)$ are elements of $\fd(\CE,a)$, 
where $c_\oc,c'_\oc\in\fd^\oc(\CE,a)$, 
then the corresponding Lie bracket is 
$$
\big[(c_0,c_1,c_2,\dots),\,(c'_0,c'_1,c'_2,\dots)\big]=
\big([c_0,c'_0],[c_1,c'_1],[c_2,c'_2],\dots\big)\in\fd(\CE,a).
$$

For each $k\in\zp$, we have the homomorphism 
\beq
\lb{drhok}
\rho_{k}\colon\fd(\CE,a)\to\fd^{k}(\CE,a),\qquad\quad
\rho_{k}\big((c_0,c_1,c_2,\dots)\big)=c_k.
\ee
Since the homomorphisms~\er{intfdoc1} are surjective,  
$\rho_{k}$ is surjective as well. 

We define a topology on the algebra $\fd(\CE,a)$ as follows.
For every $k\in\zp$ and every $v\in\fd^k(\CE,a)$,  
the subset~$\rho_k^{-1}(v)\subset\fd(\CE,a)$ is, by definition, open in~$\fd(\CE,a)$.
Such subsets form a base of the topology on $\fd(\CE,a)$. 

The meaning of the topology on $\fd(\CE,a)$ is clarified 
by the following lemma.

\begin{lemma}
\lb{lkeropen}
Let $\mathfrak{L}$ be a Lie algebra.
Consider a homomorphism $\psi\cl\fd(\CE,a)\to\mathfrak{L}$.
The subset $\ker\psi\subset\fd(\CE,a)$ is open in~$\fd(\CE,a)$ 
iff the homomorphism $\psi\cl\fd(\CE,a)\to\mathfrak{L}$ is of the form 
\beq
\lb{fdfdkl}
\fd(\CE,a)\xrightarrow{\rho_{k}}\fd^{k}(\CE,a)\to\mathfrak{L}
\ee
for some $k\in\zp$ and some homomorphism $\fd^{k}(\CE,a)\to\mathfrak{L}$.
\end{lemma}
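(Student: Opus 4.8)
The plan is to prove the two implications of the equivalence separately, after first isolating the one fact about the topology that drives everything: a description of the basic open neighborhoods of the zero element $0\in\fd(\CE,a)$. Since each $\rho_k$ is a homomorphism, $\rho_k(0)=0$, so a basic open set $\rho_k^{-1}(v)$ contains $0$ precisely when $v=\rho_k(0)=0$. Hence the basic open sets containing $0$ are exactly the kernels $\ker\rho_k=\rho_k^{-1}(0)$, $k\in\zp$. (They are nested: from $\rho_{k-1}=\vf_k\circ\rho_k$ one gets $\ker\rho_k\subset\ker\rho_{k-1}$.)

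For the direction ($\Leftarrow$), I would assume $\psi=\bar\psi\circ\rho_k$ for some $k\in\zp$ and some homomorphism $\bar\psi\cl\fd^k(\CE,a)\to\mathfrak{L}$, and then simply write $\ker\psi=\rho_k^{-1}(\ker\bar\psi)=\bigcup_{v\in\ker\bar\psi}\rho_k^{-1}(v)$. Each $\rho_k^{-1}(v)$ is a basic open subset of $\fd(\CE,a)$, so $\ker\psi$ is a union of basic open sets and therefore open. This direction needs nothing more.

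For the converse ($\Rightarrow$), I would start from the assumption that $\ker\psi$ is open. As $\psi$ is a homomorphism we have $0\in\ker\psi$, so openness produces a basic open set with $0\in\rho_k^{-1}(v)\subset\ker\psi$; by the neighborhood description above this forces $v=0$, hence $\ker\rho_k\subset\ker\psi$ for this $k$. The remaining task is to factor $\psi$ through $\rho_k$. Since $\rho_k\cl\fd(\CE,a)\to\fd^k(\CE,a)$ is a surjective homomorphism of Lie algebras whose kernel is contained in $\ker\psi$, the universal property of the quotient supplies a unique homomorphism $\bar\psi\cl\fd^k(\CE,a)\to\mathfrak{L}$ with $\psi=\bar\psi\circ\rho_k$, defined by $\bar\psi(w)=\psi(c)$ for any $c$ with $\rho_k(c)=w$; this is well defined because two such choices of $c$ differ by an element of $\ker\rho_k\subset\ker\psi$.

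I do not expect a serious obstacle: the substance of the lemma is just the dictionary between the topology on $\fd(\CE,a)$ and factorization through the projections $\rho_k$. The only two points that require a moment of care are the identification of the basic open neighborhoods of $0$ (where surjectivity of the $\rho_k$ and the homomorphism property of $\psi$ are used) and the well-definedness of $\bar\psi$ in the converse; both are routine, and I would present the argument in exactly the order above.
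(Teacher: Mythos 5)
Your proposal is correct and follows essentially the same route as the paper: identify the basic open neighborhoods of $0$ as the kernels $\ker\rho_k$, deduce $\ker\rho_k\subset\ker\psi$ from openness by taking $w=0$, and factor $\psi$ through the surjection $\rho_k$; the reverse direction writes $\ker\psi$ as a preimage $\rho_k^{-1}(Z)$, hence a union of basic open sets. The only difference is presentational: you make explicit the well-definedness of the induced map $\bar\psi$ via the universal property of the quotient, which the paper leaves implicit.
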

\begin{proof}
Suppose that $\ker\psi$ is open in~$\fd(\CE,a)$.
Since the subsets 
\beq
\rho_k^{-1}(v)\subset\fd(\CE,a),
\qquad k\in\zp,\quad v\in\fd^k(\CE,a),
\ee
form a base of the topology on $\fd(\CE,a)$, 
for any element $w\in\ker\psi$ there are $k\in\zp$ and $v\in\fd^k(\CE,a)$ 
such that 
\beq
\lb{zrhk}
w\in\rho_k^{-1}(v)\subset\ker\psi.
\ee
Let $w=0$ be the zero element in $\ker\psi$.
Then from \er{zrhk} we see that $v=\rho_k(w)=\rho_k(0)$
is the zero element in $\fd^k(\CE,a)$, and 
\beq
\lb{kskpsi}
\ker\rho_{k}\subset\ker\psi.
\ee
Relation~\er{kskpsi} implies that the homomorphism 
$\psi\cl\fd(\CE,a)\to\mathfrak{L}$ is of the form~\er{fdfdkl} 
for some homomorphism $\fd^{k}(\CE,a)\to\mathfrak{L}$.

Conversely, if $\psi\cl\fd(\CE,a)\to\mathfrak{L}$ is of the form~\er{fdfdkl}, 
then $\ker\psi=\rho_{k}^{-1}(Z)$, where $Z\subset\fd^{k}(\CE,a)$ 
is the kernel of the homomorphism $\fd^{k}(\CE,a)\to\mathfrak{L}$ from~\er{fdfdkl}.
According to the definition of the topology on $\fd(\CE,a)$,
the relation $\ker\psi=\rho_{k}^{-1}(Z)$ implies that $\ker\psi$ is open in~$\fd(\CE,a)$.
\end{proof}
According to Lemma~\ref{lkeropen},
the introduced topology on $\fd(\CE,a)$ allows us to remember 
which homomorphisms $\psi\cl\fd(\CE,a)\to\mathfrak{L}$ are of the form~\er{fdfdkl}.

\begin{definition}
\lb{dtame}
A Lie subalgebra $H\subset\fd(\CE,a)$ is called \emph{tame} 
if there are $k\in\zp$ and a subalgebra $\mathfrak{h}\subset\fd^{k}(\CE,a)$ 
such that $H=\rho_{k}^{-1}(\mathfrak{h})$. 
Since $\rho_{k}\colon\fd(\CE,a)\to\fd^{k}(\CE,a)$ is surjective, 
the codimension of~$H$ in $\fd(\CE,a)$ is equal 
to the codimension of~$\mathfrak{h}$ 
in $\fd^{k}(\CE,a)$.
\end{definition}

\begin{remark}
\lb{rtame}
It is easy to prove that a subalgebra $H\subset\fd(\CE,a)$ 
is tame iff $H$ is open and closed in $\fd(\CE,a)$ with respect to 
the topology on $\fd(\CE,a)$.
\end{remark}

The following theorem is proved in Section~\ref{sflapde}, 
using some results of~\cite{cfg2017}.
\begin{theorem}[Section~\ref{sflapde}]
\label{scfaexbt} 
Let $\CE^1$ and $\CE^2$ be $(1+1)$-dimensional scalar evolution equations. 
For each $i=1,2$, the symbol $\CE^i$ denotes also the infinite prolongation 
of the corresponding equation.

Suppose that $\CE^1$ and $\CE^2$ are connected by a B\"acklund transformation. 
Then for each $i=1,2$ there are a point $a_i\in\CE^i$ and a tame 
subalgebra $H_i\subset\fd(\CE^i,a_i)$ such that 
\begin{itemize}
\item $H_i$ is of finite codimension in $\fd(\CE^i,a_i)$,
\item $H_1$ is isomorphic to $H_2$, and this isomorphism is a homeomorphism 
with respect to the topology induced by the embedding $H_i\subset\fd(\CE^i,a_i)$. 
\end{itemize} 
\end{theorem}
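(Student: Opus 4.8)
The plan is to turn the Bäcklund transformation into its two covering maps and then reduce the statement to a single result of~\cite{cfg2017} on the behaviour of the algebras~$\fd$ under a covering with finite-dimensional fibres, applied twice. First I would recall from Section~\ref{scbt} that a Bäcklund transformation between $\CE^1$ and $\CE^2$ is precisely a third equation $\CE^3$ together with maps $\tau_1\cl\CE^3\to\CE^1$ and $\tau_2\cl\CE^3\to\CE^2$ that are surjective submersions with finite-dimensional fibres and preserve the Cartan distribution; these are exactly coverings of finite rank, so each $\tau_i$ meets the hypotheses of~\cite{cfg2017}. I would then fix a point $a_3\in\CE^3$ (in the open dense subset where any nondegeneracy needed by~\cite{cfg2017} holds) and set $a_i=\tau_i(a_3)\in\CE^i$; this produces the points $a_i$ in the statement, and it is exactly to have a single object over which to compare $\CE^1$ and $\CE^2$ that one passes through the common cover $\CE^3$.

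The key input is the per-covering result of~\cite{cfg2017}, which I would use in the following form: for a covering $\tau_i\cl\CE^3\to\CE^i$ with finite-dimensional fibres there are a tame subalgebra $H_i\subset\fd(\CE^i,a_i)$ of finite codimension, a tame subalgebra $K_i\subset\fd(\CE^3,a_3)$ of finite codimension, and a topological isomorphism $\theta_i\cl H_i\to K_i$ with respect to the induced topologies. (Here $\fd(\CE^3,a_3)$ is understood in the sense of~\cite{cfg2017}, where the construction is available beyond the evolution case; for Miura-type transformations $\CE^3$ is itself evolution.) Morally this comes from pulling ZCRs back along $\tau_i$: pullback raises the order by the order of the substitution, so it is only at the level of the inverse limit $\fd$ that the comparison stabilises, and the finite dimension of the fibre forces the discrepancy between the two sides to be of finite codimension.

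Granting this for $i=1,2$, the subalgebras $K_1$ and $K_2$ both sit inside the single algebra $\fd(\CE^3,a_3)$, so I can align them there. By Definition~\ref{dtame} each $K_i$ has the form $\rho_{k}^{-1}(\mathfrak{h}_i)$; writing both with a common index $k$ (using the tower of surjections defining the inverse limit) and noting that in the finite-dimensional $\fd^{k}(\CE^3,a_3)$ the codimension of an intersection is at most the sum of the codimensions, I get that $K_1\cap K_2=\rho_{k}^{-1}(\mathfrak{h}_1\cap\mathfrak{h}_2)$ is again tame of finite codimension. Now set $H_i'=\theta_i^{-1}(K_1\cap K_2)\subset H_i$, which makes sense since $K_1\cap K_2\subset K_i$. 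Because $\theta_i$ is a homeomorphism and, by Remark~\ref{rtame}, tame subalgebras are exactly the open and closed ones, $H_i'$ is open and closed in $H_i$; as $H_i$ is open and closed in $\fd(\CE^i,a_i)$, the subalgebra $H_i'$ is open and closed, hence tame, in $\fd(\CE^i,a_i)$, and its codimension equals that of $H_i'$ in $H_i$ plus that of $H_i$ in $\fd(\CE^i,a_i)$, both finite. Finally $\theta_2^{-1}\circ\theta_1$ restricts to a topological isomorphism from $H_1'$ onto $H_2'$, since $\theta_i$ carries $H_i'$ onto the common subalgebra $K_1\cap K_2$. Renaming $H_i'$ as $H_i$ gives the required tame finite-codimension subalgebras together with a homeomorphic isomorphism.

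The main obstacle is the per-covering statement imported from~\cite{cfg2017}: producing, for a single finite-rank covering, genuinely finite-codimension tame subalgebras on both sides and an isomorphism between them that is a homeomorphism for the inverse-limit topologies. Everything after that is topological bookkeeping with the topology on $\fd$ (intersections, preimages and restrictions of tame subalgebras), governed by Definition~\ref{dtame}, Lemma~\ref{lkeropen} and Remark~\ref{rtame}; the only care needed is to keep every codimension finite and every identification compatible with the induced topologies, which the finite-dimensionality of the covering fibres guarantees. I note that if~\cite{cfg2017} in fact yields $K_i=\fd(\CE^3,a_3)$ (the whole algebra), the intersection step is vacuous and one gets $H_1\cong\fd(\CE^3,a_3)\cong H_2$ at once.
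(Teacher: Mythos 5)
Your proposal is correct and follows the same route as the paper: realize the B\"acklund transformation as a pair of coverings $\tau_i\cl\CE^3\to\CE^i$, fix a point $a\in\CE^3$, set $a_i=\tau_i(a)$, and invoke the covering result of~\cite{cfg2017} for each $\tau_i$ separately. The only substantive difference is the strength of the imported lemma. The paper's Proposition~\ref{prcovfa} is stronger than the form you allowed yourself: for a covering with finite-dimensional fibers it gives an embedding $\vf_i\cl\fd(\CE^3,a)\hookrightarrow\fd(\CE^i,a_i)$ of the \emph{whole} fundamental algebra of the covering equation, with image open and closed (hence tame, by Remark~\ref{rtame}), of finite codimension, and homeomorphic onto its image. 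In your notation this means $K_i=\fd(\CE^3,a)$, so your intersection step is vacuous --- exactly the contingency you noted at the end --- and the paper concludes immediately with $H_i=\vf_i\big(\fd(\CE^3,a)\big)$ and the homeomorphic isomorphism $\vf_2\circ\vf_1^{-1}\cl H_1\to H_2$. Two wrinkles in your compensating bookkeeping are worth flagging, though both are avoidable rather than fatal. First, you write $K_i=\rho_k^{-1}(\mathfrak{h}_i)$ via Definition~\ref{dtame}, but that definition presupposes the tower $\rho_k\cl\fd(\CE^3,a)\to\fd^k(\CE^3,a)$, which the paper constructs only for evolution equations, and it stresses that $\CE^3$ in a BT need not be evolution (e.g.\ Adler's BT for the Krichever--Novikov equation); the intersection argument should instead be run purely topologically: $K_1\cap K_2$ is open and closed because each $K_i$ is, and has finite codimension by linear algebra, with tameness invoked (via Remark~\ref{rtame}) only for the resulting subalgebras of $\fd(\CE^i,a_i)$, where $\CE^i$ \emph{is} evolution. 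Second, the algebras $\fd^k$ are generally infinite-dimensional (for KdV, $\fd^0\cong\msl_2(\kik[\la])\oplus\kik^3$), although the codimension estimate you use does not actually require finite-dimensionality.
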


Theorem~\ref{scfaexbt} provides  
a powerful necessary condition for two given evolution equations  
to be connected by a B\"acklund transformation (BT). 
For example, Theorem~\ref{scknprop} below 
is obtained in Section~\ref{snebt} by means of Theorem~\ref{scfaexbt}.

For any constants $e_1,e_2,e_3\in\Com$, 
consider the Krichever-Novikov equation~\cite{krich80,svin-sok83} 
\begin{equation}
\label{knedef}
  \kne(e_1,e_2,e_3)=\left\{
 u_t=u_{xxx}-\frac32\frac{u_{xx}^2}{u_x}+
 \frac{(u-e_1)(u-e_2)(u-e_3)}{u_x},\qquad u=u(x,t)\right\}
\end{equation}
and the algebraic curve 
\beq
\lb{curez}
\cur(e_1,e_2,e_3)=\Big\{(z,y)\in\Com^2\ \Big|\ 
y^2=(z-e_1)(z-e_2)(z-e_3)\Big\}.
\ee
If $e_1\neq e_2\neq e_3\neq e_1$ then the curve~\er{curez} is elliptic.
\begin{theorem}[Section~\ref{snebt}]
\label{scknprop}
Let $e_1,e_2,e_3,e'_1,e'_2,e'_3\in\Com$ such that $e_1\neq e_2\neq e_3\neq e_1$, 
$e'_1\neq e'_2\neq e'_3\neq e'_1$.

If the curve $\cur(e_1,e_2,e_3)$ is not birationally equivalent to 
the curve $\cur(e'_1,e'_2,e'_3)$, 
then the equation $\kne(e_1,e_2,e_3)$ is not connected 
with the equation $\kne(e'_1,e'_2,e'_3)$ by any B\"acklund transformation 
\textup{(}BT\textup{)}. 

Also, if $e_1\neq e_2\neq e_3\neq e_1$, then $\kne(e_1,e_2,e_3)$ is not connected with the KdV equation by any BT. 
\end{theorem}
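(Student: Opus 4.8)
The plan is to argue by contradiction, using Theorem~\ref{scfaexbt} together with the explicit description of $\fd(\kne(e_1,e_2,e_3),a)$ obtained in Section~\ref{sfdce}. Suppose that $\kne(e_1,e_2,e_3)$ and $\kne(e'_1,e'_2,e'_3)$ are connected by a B\"acklund transformation. Then Theorem~\ref{scfaexbt} yields points $a_1$, $a_2$ and tame subalgebras $H_1\subset\fd(\kne(e_1,e_2,e_3),a_1)$ and $H_2\subset\fd(\kne(e'_1,e'_2,e'_3),a_2)$, each of finite codimension, together with an isomorphism $H_1\cong H_2$ which is a homeomorphism for the induced topologies. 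The goal is to show that such an isomorphism forces the curves $\cur(e_1,e_2,e_3)$ and $\cur(e'_1,e'_2,e'_3)$ to be birationally equivalent, contradicting the hypothesis.

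The main input is Section~\ref{sfdce}, where, up to a non-essential nilpotent ideal $\mathfrak{n}$, the algebra $\fd(\kne(e_1,e_2,e_3),a)$ is identified with a current-type algebra $\slr\otimes_{\kik}\mathcal{R}(e_1,e_2,e_3)$, the ring $\mathcal{R}(e_1,e_2,e_3)$ being a domain of functions on the affine curve $\cur(e_1,e_2,e_3)$ whose fraction field is the field of rational functions on that curve. First I would replace $H_i$ by its image $\bar H_i$ in the quotient $\fd(\kne(e_i),a_i)/\mathfrak{n}_i\cong\slr\otimes\mathcal{R}(e_i)$. Two points require care here: one must show that the intersection $H_i\cap\mathfrak{n}_i$ is intrinsic to $H_i$ (for instance, that it coincides with the nilradical of $H_i$), so that the topological isomorphism $H_1\cong H_2$ carries $H_1\cap\mathfrak{n}_1$ onto $H_2\cap\mathfrak{n}_2$ and descends to an isomorphism $\bar H_1\cong\bar H_2$; and one must verify that $\bar H_i$ remains a subalgebra of finite codimension in $\slr\otimes\mathcal{R}(e_i)$.

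Next I would analyze finite-codimension subalgebras of the current algebra $\slr\otimes\mathcal{R}$. The key lemma is that such a subalgebra contains $\slr\otimes I$ for some finite-codimension ideal $I\subset\mathcal{R}$, and that $I$, being nonzero in the domain $\mathcal{R}$, has the same fraction field as $\mathcal{R}$; hence $\bar H_i$ already determines the birational class of $\cur(e_i)$. Then I would invoke the reconstruction theorem for current algebras: since $\slr$ is simple, any $\kik$-linear isomorphism $\slr\otimes\mathcal{R}\to\slr\otimes\mathcal{R}'$ is induced by a $\kik$-algebra isomorphism $\mathcal{R}\to\mathcal{R}'$ composed with an automorphism of $\slr$. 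Applying this to $\bar H_1\cong\bar H_2$ produces an isomorphism of the associated function rings, hence of their fraction fields, i.e. a birational equivalence $\cur(e_1,e_2,e_3)\sim\cur(e'_1,e'_2,e'_3)$ — the desired contradiction. The same argument disposes of the KdV case: by Theorem~\ref{pfkdv} the KdV algebra is, up to a nilpotent ideal, a current algebra over a rational (genus~$0$) function ring, which cannot be birationally matched to the elliptic (genus~$1$) curve $\cur(e_1,e_2,e_3)$.

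The hardest part will be the passage through the nilpotent ideal and the finite-codimension reduction: one must verify that the topological isomorphism of Theorem~\ref{scfaexbt} respects the nilradical and descends to the semisimple-type quotient, and that neither restriction to a finite-codimension subalgebra nor the quotient changes the birational invariant extracted from the current algebra. Once these reductions are in place, the reconstruction of the ring, and hence of the curve up to birational equivalence, from the abstract Lie-algebra structure of $\slr\otimes\mathcal{R}$ is the conceptual core that converts the Lie-algebraic non-isomorphism into the non-existence of a B\"acklund transformation.
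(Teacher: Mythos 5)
Your overall strategy is the same as the paper's (argue by contradiction from Theorem~\ref{scfaexbt}, use the structure theory of Section~\ref{sfdce}, and extract a birational invariant of the curve from the tame subalgebras $H_i$), but the two steps you yourself flag as ``requiring care'' are genuine gaps, and they are exactly where the paper's technical machinery lives. First, the passage modulo the ``nilpotent ideal'': the kernel of $\fd(\CE,a)\to\fd^1(\CE,a)\cong\mR_{e_1,e_2,e_3}$ is an inverse limit of nilpotent ideals $\ker\big(\fd^\oc(\CE,a)\to\fd^1(\CE,a)\big)$ whose nilpotency class grows with $\oc$; it is only pro-nilpotent, not nilpotent, so your proposed intrinsic characterization of $H_i\cap\mathfrak{n}_i$ as ``the nilradical of $H_i$'' does not work (an infinite-dimensional Lie algebra need not have a maximal nilpotent ideal, and this kernel is not even solvable). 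The paper resolves this by defining the ideal $\iqs(\cdot)$ of \emph{quasi-solvable} elements with respect to \emph{open} ideals (Section~\ref{latqse}), i.e.\ it uses the topology — which is preserved by the homeomorphic isomorphism of Theorem~\ref{scfaexbt} — rather than any purely algebraic radical; this is what makes $\rdc(H_1)\cong\rdc(H_2)$ legitimate.

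Second, the reconstruction step. What you have after the reduction is an isomorphism between \emph{finite-codimension subalgebras} $\bar H_1\cong\bar H_2$, not between the full algebras, so the reconstruction theorem you invoke (every isomorphism $\slr\otimes\mathcal{R}\to\slr\otimes\mathcal{R}'$ comes from a ring isomorphism) does not apply: there is no reason such an isomorphism of subalgebras extends to the ambient algebras. Moreover $\mR_{e_1,e_2,e_3}$ is not a current algebra $\slr\otimes\mathcal{R}$ at all — it is the proper subalgebra of $\mathfrak{so}_3(\Com)\otimes E_{e_1,e_2,e_3}$ generated by the $\al_i\otimes\hat v_i$, with basis~\er{rbas} — so both your ``key lemma'' ($\bar H_i\supset\slr\otimes I$) and the reconstruction theorem would need nontrivial adaptation. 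The paper's answer to precisely this problem is the associative algebra $\itw(\cdot)$ of germs of intertwining operators defined on finite-codimension subalgebras (Section~\ref{sbarl}): by construction it is insensitive to replacing an algebra by a finite-codimension subalgebra, and Theorems~\ref{itre},~\ref{itsl} compute $\itw(L)\cong Q_{e_1,e_2,e_3}$ (resp.\ $\Com(\la)$) for \emph{every} finite-codimension $L\subset\mR_{e_1,e_2,e_3}$ (resp.\ $L\subset\msl_2(\Com[\la])$). That invariance is what converts $\bar H_1\cong\bar H_2$ into an isomorphism of function fields, and hence into the birational equivalence that yields the contradiction; without $\itw$ or a substitute with the same invariance property, your argument does not close.
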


BTs of Miura type (differential substitutions) for~\eqref{knedef} were studied 
in~\cite{mesh-sok13,svin-sok83}. 
According to~\cite{mesh-sok13,svin-sok83}, the equation $\kne(e_1,e_2,e_3)$
is connected with the KdV equation by a BT of Miura type 
iff $e_i=e_j$ for some $i\neq j$.

Theorems~\ref{scfaexbt},~\ref{scknprop} consider the most general class of BTs, 
which is much larger than the class of 
BTs of Miura type studied in~\cite{mesh-sok13,svin-sok83}. 

If $e_1\neq e_2\neq e_3\neq e_1$ and $e'_1\neq e'_2\neq e'_3\neq e'_1$, 
the curves $\cur(e_1,e_2,e_3)$ and $\cur(e'_1,e'_2,e'_3)$ are elliptic.
To clarify the first statement of Theorem~\ref{scknprop}, 
we need to recall the well-known classification 
of elliptic curves~\er{curez} up to birational equivalence.

Let $e_1,e_2,e_3,e'_1,e'_2,e'_3\in\Com$ such that $e_1\neq e_2\neq e_3\neq e_1$ and  
$e'_1\neq e'_2\neq e'_3\neq e'_1$.

The set $\{e_1,e_2,e_3\}$ is \emph{affine-equivalent} 
to the set $\{e'_1,e'_2,e'_3\}$ if there are $b_1,b_2\in\Com$, $b_1\neq 0$, such that 
$b_1e_i+b_2\in\{e'_1,e'_2,e'_3\}$ for all $i=1,2,3$. 
In other words, the affine map $g\cl\Com\to\Com$ given by $g(z)=b_1z+b_2$ satisfies 
$\{g(e_1),g(e_2),g(e_3)\}=\{e'_1,e'_2,e'_3\}$.
Here $\{g(e_1),g(e_2),g(e_3)\}$ and $\{e'_1,e'_2,e'_3\}$ are unordered sets.

Consider the elliptic curves $\cur(e_1,e_2,e_3)$ and $\cur(e'_1,e'_2,e'_3)$ 
given by~\er{curez}. If $\{e_1,e_2,e_3\}$ is affine-equivalent to $\{e'_1,e'_2,e'_3\}$ 
then the curve $\cur(e_1,e_2,e_3)$ is isomorphic to the curve $\cur(e'_1,e'_2,e'_3)$.
Indeed, if $\{e_1,e_2,e_3\}$ is affine-equivalent to $\{e'_1,e'_2,e'_3\}$, then 
the equation $y^2=(z-e_1)(z-e_2)(z-e_3)$ can be transformed to the equation 
$y^2=(z-e'_1)(z-e'_2)(z-e'_3)$ by a change of variables of the form
$$
z\mapsto c_1z+c_2,\qquad y\mapsto c_3y,\qquad
c_1,c_2,c_3\in\Com,\qquad c_1c_3\neq 0.
$$

Clearly, the set $\{e_1,e_2,e_3\}$ is affine-equivalent to $\{0,e_2-e_1,e_3-e_1\}$, 
and the set $\{0,e_2-e_1,e_3-e_1\}$ is affine-equivalent 
to $\Big\{0,1,\dfrac{e_3-e_1}{e_2-e_1}\Big\}$.
Hence the curve $\cur(e_1,e_2,e_3)$ is isomorphic to the curve 
$\cur\Big(0,1,\dfrac{e_3-e_1}{e_2-e_1}\Big)$.
Therefore, in order to classify elliptic curves~\er{curez} up to birational equivalence, 
it is sufficient to consider the curves $\cur(0,1,\knpr)$, 
where $\knpr\in\Com$, $\knpr\notin\{0,1\}$.

The next proposition is well known (see, e.g.,~\cite{hartshorne}).
\begin{proposition}[\cite{hartshorne}]
\lb{propla}
Recall that, for any $e_1,e_2,e_3\in\Com$, the algebraic curve $\cur(e_1,e_2,e_3)$ 
is given by~\er{curez}.
Let $\knpr_1,\knpr_2\in\Com$ such that $\knpr_i\notin\{0,1\}$ for $i=1,2$.

The curves $\cur(0,1,\knpr_1)$ and $\cur(0,1,\knpr_2)$ are birationally equivalent iff one has 
\beq
\lb{jla12}
\frac{\big((\knpr_1)^2-\knpr_1+1\big)^3}{(\knpr_1)^2(\knpr_1-1)^2}=
\frac{\big((\knpr_2)^2-\knpr_2+1\big)^3}{(\knpr_2)^2(\knpr_2-1)^2}.
\ee
The numbers $\knpr_1$, $\knpr_2$ satisfy~\er{jla12} iff 
the set $\{0,1,\knpr_1\}$ is affine-equivalent to the set $\{0,1,\knpr_2\}$.
\end{proposition}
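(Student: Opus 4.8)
The plan is to deduce both equivalences from the classical theory of the $j$-invariant of elliptic curves. Throughout, write $j(\knpr)=\dfrac{(\knpr^2-\knpr+1)^3}{\knpr^2(\knpr-1)^2}$ for the rational function appearing on both sides of~\er{jla12}. First I would recall the standard facts (see~\cite{hartshorne}) that over $\Com$ birational equivalence of smooth projective curves coincides with isomorphism, that a smooth projective genus-one curve with a chosen base point is an elliptic curve, and that two elliptic curves over $\Com$ are isomorphic if and only if they have equal $j$-invariant. Since $\knpr_i\notin\{0,1\}$, the curve $\cur(0,1,\knpr_i)$ from~\er{curez} is the affine part of the nonsingular Legendre-form elliptic curve $y^2=z(z-1)(z-\knpr_i)$, whose $j$-invariant is $256\,j(\knpr_i)$. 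The constant factor $256$ cancels, so equality of the $j$-invariants of $\cur(0,1,\knpr_1)$ and $\cur(0,1,\knpr_2)$ is exactly condition~\er{jla12}; this establishes the first equivalence of the proposition.

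For the second equivalence I would first carry out the elementary computation describing affine equivalence of the triples. An affine map $g(z)=b_1z+b_2$ with $b_1\neq 0$ sends $\{0,1,\knpr_1\}$ onto $\{0,1,\knpr_2\}$ precisely when two of the points $0,1,\knpr_1$ are carried to $0$ and $1$; running through the $3!=6$ matchings and solving for $g$ in each case, the image of the remaining point gives $\knpr_2$. This shows that $\{0,1,\knpr_1\}$ is affine-equivalent to $\{0,1,\knpr_2\}$ if and only if $\knpr_2$ is one of the six anharmonic values
\[
\knpr_1,\quad 1-\knpr_1,\quad \frac{1}{\knpr_1},\quad \frac{\knpr_1-1}{\knpr_1},\quad \frac{1}{1-\knpr_1},\quad \frac{\knpr_1}{\knpr_1-1}.
\]

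It then remains to show that~\er{jla12} holds if and only if $\knpr_2$ is one of these six values. The \emph{if} direction is a direct substitution: one checks that $j(\knpr)$ is unchanged under $\knpr\mapsto 1-\knpr$ and under $\knpr\mapsto 1/\knpr$, and these two involutions generate the group of six transformations listed above. For the \emph{only if} direction I would fix $c=j(\knpr_1)$ and clear denominators in $j(\knpr_2)=c$, obtaining a polynomial equation in $\knpr_2$ of degree $6$; it has at most six roots, and since all six anharmonic values of $\knpr_1$ are among its roots, for $\knpr_1$ in general position they are precisely the roots, so no spurious solution occurs. The main obstacle is the two degenerate orbits in which these values collide, namely $j(\knpr_1)=0$ (when $\knpr_1^2-\knpr_1+1=0$, an orbit of size two) and the harmonic case $\knpr_1\in\{-1,\tfrac12,2\}$ (an orbit of size three); here I would check the multiplicities of the roots of the degree-six equation and verify that its root set still coincides exactly with the orbit, which closes the argument.
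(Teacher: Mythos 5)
Your proof is correct. The paper contains no proof of this proposition at all---it is quoted as a well-known fact with a citation to Hartshorne---and your route (birational equivalence of the smooth projective models detected by the $j$-invariant $2^{8}\bigl(\lambda^{2}-\lambda+1\bigr)^{3}/\lambda^{2}(\lambda-1)^{2}$ of the Legendre curve, combined with the six anharmonic values generated by $\lambda\mapsto 1-\lambda$ and $\lambda\mapsto 1/\lambda$, with separate treatment of the two degenerate orbits $j=0$ and $\lambda\in\{-1,\tfrac12,2\}$) is precisely the classical argument contained in that cited reference, so there is nothing to compare beyond noting the agreement.
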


Using the above results on elliptic curves, we can reformulate
the first statement of Theorem~\ref{scknprop} as follows.

\begin{theorem}
\label{knkn}
Let $e_1,e_2,e_3,e'_1,e'_2,e'_3\in\Com$ such that $e_1\neq e_2\neq e_3\neq e_1$ and  
$e'_1\neq e'_2\neq e'_3\neq e'_1$. 
Consider the Krichever-Novikov equations $\kne(e_1,e_2,e_3)$, $\kne(e'_1,e'_2,e'_3)$ 
given by \er{knedef}.

If the numbers 
\beq
\lb{knnumb}
\knpr_1=\dfrac{e_3-e_1}{e_2-e_1},\qquad\quad
\knpr_2=\dfrac{e'_3-e'_1}{e'_2-e'_1}
\ee
satisfy
\beq
\lb{vv12n}
\frac{\big((\knpr_1)^2-\knpr_1+1\big)^3}{(\knpr_1)^2(\knpr_1-1)^2}\neq
\frac{\big((\knpr_2)^2-\knpr_2+1\big)^3}{(\knpr_2)^2(\knpr_2-1)^2},
\ee
then the equation $\kne(e_1,e_2,e_3)$ is not connected 
with the equation $\kne(e'_1,e'_2,e'_3)$ by any BT.
\end{theorem}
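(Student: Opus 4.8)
The plan is to deduce Theorem~\ref{knkn} directly from the first statement of Theorem~\ref{scknprop} by translating its birational-equivalence hypothesis into the explicit $j$-invariant inequality~\er{vv12n}, using Proposition~\ref{propla} together with the discussion on affine equivalence preceding it. Since Theorem~\ref{knkn} is explicitly described in the excerpt as a reformulation of the first statement of Theorem~\ref{scknprop}, the entire argument is a short chain of equivalences, and the genuine mathematical content resides in Theorem~\ref{scknprop} itself.

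First I would verify that the numbers $\knpr_1$ and $\knpr_2$ defined in~\er{knnumb} are well-defined and satisfy $\knpr_i\notin\{0,1\}$ for $i=1,2$, so that Proposition~\ref{propla} is applicable to them. This is immediate from the hypotheses: since $e_2\neq e_1$ the denominator of $\knpr_1$ is nonzero, so $\knpr_1$ is defined; since $e_3\neq e_1$ we get $\knpr_1\neq 0$; and since $e_3\neq e_2$ we get $\knpr_1\neq 1$. The same reasoning, applied to the primed constants, gives $\knpr_2\notin\{0,1\}$.

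Next I would recall from the discussion preceding Proposition~\ref{propla} that the curve $\cur(e_1,e_2,e_3)$ is isomorphic to $\cur(0,1,\knpr_1)$ and, likewise, $\cur(e'_1,e'_2,e'_3)$ is isomorphic to $\cur(0,1,\knpr_2)$, via affine changes of the variable $z$ together with a rescaling of $y$. In particular, these isomorphisms are birational equivalences. Hence $\cur(e_1,e_2,e_3)$ is birationally equivalent to $\cur(e'_1,e'_2,e'_3)$ if and only if $\cur(0,1,\knpr_1)$ is birationally equivalent to $\cur(0,1,\knpr_2)$, and by Proposition~\ref{propla} the latter holds if and only if the $j$-invariant equation~\er{jla12} is satisfied by $\knpr_1,\knpr_2$. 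Therefore the inequality~\er{vv12n} assumed in Theorem~\ref{knkn} is precisely the statement that $\cur(e_1,e_2,e_3)$ and $\cur(e'_1,e'_2,e'_3)$ are \emph{not} birationally equivalent.

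Finally, the first statement of Theorem~\ref{scknprop} asserts that non-birational-equivalence of these two curves implies that $\kne(e_1,e_2,e_3)$ and $\kne(e'_1,e'_2,e'_3)$ are not connected by any BT. Combining this with the equivalence established in the previous paragraph yields the conclusion of Theorem~\ref{knkn}. I do not expect any substantive obstacle in this step: the deep part of the argument — the passage from non-birational-equivalence of the elliptic curves to nonexistence of a BT — is carried out in the proof of Theorem~\ref{scknprop} in Section~\ref{snebt}, which in turn relies on the structure of the algebras $\fd(\CE,a)$ and on the necessary condition of Theorem~\ref{scfaexbt}. The present statement merely repackages that result through the classical $j$-invariant criterion, so the only care needed is the bookkeeping of the affine reductions and the verification that $\knpr_1,\knpr_2$ lie in the admissible range $\Com\setminus\{0,1\}$.
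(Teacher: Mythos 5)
Your proposal is correct and follows essentially the same route as the paper's own proof: reduce $\cur(e_1,e_2,e_3)$ and $\cur(e'_1,e'_2,e'_3)$ to $\cur(0,1,\knpr_1)$ and $\cur(0,1,\knpr_2)$ via affine equivalence, apply Proposition~\ref{propla} to translate the inequality~\er{vv12n} into non-birational-equivalence of the curves, and conclude by the first statement of Theorem~\ref{scknprop}. Your explicit check that $\knpr_1,\knpr_2\notin\{0,1\}$ (so that Proposition~\ref{propla} applies) is a small verification the paper leaves implicit, and is a welcome addition.
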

\begin{proof}

As has been shown above, 
the curve $\cur(e_1,e_2,e_3)$ is isomorphic to the curve 
$\cur\Big(0,1,\dfrac{e_3-e_1}{e_2-e_1}\Big)$.
Similarly, the curve $\cur(e'_1,e'_2,e'_3)$ is isomorphic to the curve 
$\cur\Big(0,1,\dfrac{e'_3-e'_1}{e'_2-e'_1}\Big)$.

By Proposition~\ref{propla}, if the numbers~\er{knnumb} satisfy~\er{vv12n}, 
then $\cur\Big(0,1,\dfrac{e_3-e_1}{e_2-e_1}\Big)$
is not birationally equivalent to $\cur\Big(0,1,\dfrac{e'_3-e'_1}{e'_2-e'_1}\Big)$ 
and, therefore, $\cur(e_1,e_2,e_3)$ is not birationally equivalent to 
$\cur(e'_1,e'_2,e'_3)$.

By the first statement of Theorem~\ref{scknprop}, 
if $\cur(e_1,e_2,e_3)$ is not birationally equivalent to $\cur(e'_1,e'_2,e'_3)$ 
then $\kne(e_1,e_2,e_3)$ is not connected with $\kne(e'_1,e'_2,e'_3)$ by any BT.
\end{proof}

\subsection{Abbreviations, conventions, and notation}
\lb{subs-conv}

The following abbreviations, conventions, and notation are used in this preprint.

ZCR = zero-curvature representation, WE = Wahlquist-Estabrook,
BT = B\"acklund transformation.  

The symbols $\zsp$ and $\zp$ denote the sets of positive and nonnegative 
integers respectively.

$\kik$ is either $\Com$ or $\mathbb{R}$.
All vector spaces and algebras are supposed to be over the field~$\kik$.
We denote by $\gl_\sm$ the algebra of 
$\sm\times\sm$ matrices with entries from $\kik$
and by $\mathrm{GL}_\sm$ the group of invertible $\sm\times\sm$ matrices.

By the standard Lie group -- Lie algebra correspondence, 
for every Lie subalgebra $\mg\subset\gl_\sm$ 
there is a unique connected immersed Lie subgroup 
$\mathcal{G}\subset\mathrm{GL}_\sm$ whose Lie algebra is $\mg$.
We call $\mathcal{G}$ the connected matrix Lie group 
corresponding to the matrix Lie algebra $\mg\subset\gl_\sm$.

We use the notation~\er{dnot} for partial derivatives of a 
$\kik$-valued function $u(x,t)$.
Our convention about functions of the variables $x$, $t$, $u_k$ 
is described in Remark~\ref{rfxtu}.
We use also the assumptions described in Remark~\ref{rgapde}.


\section{A geometric approach to PDEs and B\"acklund transformations}
\lb{sgapde}

In this section we recall 
a geometric approach to PDEs and B\"acklund transformations
by means of infinite jet spaces, in the analytic case.
In the smooth case, a similar approach is presented in~\cite{rb}.

\subsection{Infinite-dimensional manifolds}
\lb{sbidm}

Recall that $\kik$ is either $\Com$ or $\mathbb{R}$.
By definition, the space $\kik^\infty$ with coordinates 
$z_i$, $i\in\zsp$, is the space of infinite sequences 
\beq
\lb{inf-seq}
(z_1,z_2,z_3,\dots,z_k,z_{k+1},\dots),\quad\qquad z_i\in\kik.
\ee
For each $l\in\zsp$, one has the map
$$
\rho_{l}\cl\kik^\infty\to\kik^l,\qquad\quad
\rho_{l}(z_1,z_2,z_3,\dots,z_k,z_{k+1},\dots)=(z_1,z_2,\dots,z_l).
$$
The topology on $\kik^\infty$ is defined as follows. 

We have the standard topology on $\kik^l$. 
For any $l\in\zsp$ and any open subset $V\subset\kik^l$, 
the subset $\rho_l^{-1}(V)\subset\kik^\infty$ is, by definition, 
open in $\kik^\infty$. Such subsets form a base of the topology on~$\kik^\infty$. 
In other words, we consider the smallest topology on~$\kik^\infty$ such that 
the maps $\rho_l$, $l\in\zsp$, are continuous. 

By definition, 
a continuous $\kik$-valued function $g$ on 
an open subset $U\subset\kik^\infty$ 
is \emph{analytic} if for each point $a\in U$ there is 
a neighborhood $U_a\subset U$, $a\in U_a$, such that 
$g\big|_{U_a}$ depends analytically on a finite number of 
the coordinates $z_i$, $i\in\zsp$.
Here $g\big|_{U_a}$ is the restriction of $g$ to $U_a$.

Let $U,\tilde{U}\subset\kik^\infty$ be open subsets.
A continuous map $\tau\colon U\to\tilde{U}$ is said to be \emph{analytic}
if, for any open subset $V\subset\tilde{U}$ and any analytic function
$f\colon V\to\kik$, the function $\tau^*(f)\colon\tau^{-1}(V)\to\kik$ 
is analytic.
(Essentially, this means that in coordinates the map $\tau$ is given 
by analytic functions.) Here $\tau^*(f)$ is defined by the standard formula 
\beq
\lb{tsdef} 
\tau^*(f)(a)=f(\tau(a)),\qquad\quad a\in\tau^{-1}(V).
\ee

To describe a geometric approach to PDEs and B\"acklund transformations, 
we need to consider analytic infinite-dimensional 
manifolds modelled on $\kik^\infty$.
This is the analytic analog of
the class of  smooth infinite-dimensional manifolds
described in~\cite{bernshtein-inf-dim}.

So, in the present preprint, an \emph{infinite-dimensional manifold} 
is a Hausdorff topological space $M$ such that 
\begin{itemize}
\item for each point $a\in M$ there is a neighborhood
homeomorphic to an open subset of $\kik^\infty$, which 
is called a \emph{coordinate chart},
\item the transition maps between overlapping coordinate charts 
are analytic.
\end{itemize}

As usual, using coordinate charts, 
one introduces local coordinates on a neighborhood of each point $a\in M$.
By definition, 
a continuous $\kik$-valued function $f$ 
on an open subset of $M$ is \emph{analytic} if $f$ 
is analytic in local coordinates.
An analytic function on a connected coordinate chart 
may depend only on a finite number of the coordinates.

One can also define germs of analytic functions in the standard way.
For $a\in M$, we denote by $\CF_M(a)$ the algebra of germs 
of analytic functions at $a$.

A \emph{tangent vector} at a point $a\in M$ is a 
$\kik$-linear map $v\cl\CF_M(a)\to\kik$ satisfying 
\beq
\lb{vgg}
v(g_1g_2)=v(g_1)\cdot g_2(a)+g_1(a)\cdot v(g_2)
\ee
for all $g_1,g_2\in\CF_M(a)$.
The \emph{tangent space} $T_aM$ is the vector space of all tangent vectors at $a$. 

Using analytic functions, 
one can introduce the notion of vector fields 
on open subsets of $M$ in the standard way.
(In the language of sheaves, 
the sheaf of vector fields on $M$ is the sheaf of
derivations of the sheaf of analytic functions on $M$.)

In particular, a vector field $X$ on $M$ determines a derivation 
$X\cl\CF_M(a)\to\CF_M(a)$ of the algebra $\CF_M(a)$ for each $a\in M$.
This derivation determines the tangent vector $X\big|_a\in T_aM$  
which is the following map
$$
X\big|_a\cl\CF_M(a)\to\kik,\qquad\quad
X\big|_a(f)=X(f)(a)\qquad\quad\forall\,f\in\CF_M(a).
$$

On an open coordinate chart $U\subset M$ 
with coordinates $z_i$, $i\in\zsp$, 
a vector field $X$ can be written as the sum  
$X=\sum_{i=1}^\infty f_i\dfrac{\pd}{\pd z_i}$, 
where $f_i=X(z_i)$ are analytic functions.
Then $X\big|_a=\sum_{i=1}^\infty f_i(a)\dfrac{\pd}{\pd z_i}$.

Also, the notion of submanifolds of $M$ can be defined in the standard way.

Let $n\in\zp$. To define an 
\emph{$n$-dimensional distribution} $\dis$ on $M$, 
we need to choose an $n$-dimensional subspace $\dis_a\subset T_aM$ for each point
$a\in M$ such that $\dis_a$ depends analytically on $a$ in the following sense.
For any $a\in M$, 
there are an open subset $U_a\subset M$ and vector fields 
$X_1,\dots,X_n$ on~$U_a$ such that $a\in U_a$ and for each point $b\in U_a$ 
the tangent vectors $X_1\big|_b,\dots,X_n\big|_b\in T_b M$ 
determined by $X_1,\dots,X_n$ span the space $\CCD_b$.
(That is, the vectors $X_1\big|_b,\dots,X_n\big|_b$ form a basis for the 
space $\CCD_b\subset T_b M$.)
Then $\dis$ is the collection of the subspaces $\dis_a\subset T_aM$ for all $a\in M$.

\begin{remark}
If $M$ is finite-dimensional, then the above properties mean that 
$\dis$ is an $n$-dimensional subbundle of the tangent bundle of $M$.
\end{remark}

Let $\smf\subset M$ be a submanifold.
Then for each $a\in\smf$ we have $T_a\smf\subset T_aM$, 
where $T_a\smf$ is the tangent space of $\smf$ at $a\in\smf$.
A vector $v\in T_aM$ is \emph{tangent to $\smf$} if $v\in T_a\smf\subset T_aM$.
This means the following.

For $a\in\smf$, we denote by $\mathcal{I}_{\smf}(a)\subset\CF_M(a)$ 
the subspace of germs of analytic functions that vanish on~$\smf$.
So a germ $g\in\CF_M(a)$ belongs to $\mathcal{I}_{\smf}(a)$ iff the restriction 
of~$g$ to $\smf$ is zero. 
A vector $v\in T_aM$ is tangent to $\smf$ iff $v(g)=0$ 
for all $g\in\mathcal{I}_{\smf}(a)$.
Here we use the fact that $v$ is a map $v\cl\CF_M(a)\to\kik$ satisfying~\er{vgg},
according to the definition of~$T_aM$.

Consider again a distribution $\dis$ 
determined by subspaces $\dis_a\subset T_aM$, $a\in M$.
A submanifold $\smf\subset M$ is an \emph{integral submanifold} 
of the distribution $\dis$ if $T_a\smf\subset\dis_a$ for each $a\in\smf$.

A vector field $X$ \emph{belongs to $\dis$} if 
$X\big|_a\in\dis_a$ for all $a\in M$.
The distribution $\dis$ is said to be \emph{involutive} 
if, for any vector fields $X$, $Y$ belonging to $\dis$, 
the commutator $[X,Y]$ belongs to $\dis$ as well.
Note that, if $M$ is infinite-dimensional, the Frobenius theorem 
on integral submanifolds of involutive distributions is not applicable.

Let $M^1$, $M^2$ be (possibly infinite-dimensional) manifolds.
A continuous map $\tau\colon M^1\to M^2$ is said to be \emph{analytic}
if, for any open subset $V\subset M^2$ and any analytic function
$f\colon V\to\kik$, the function $\tau^*(f)\colon\tau^{-1}(V)\to\kik$ 
is analytic. 
Here $\tau^*(f)$ is defined by~\er{tsdef}.

Let $\tau\colon M^1\to M^2$ be an analytic map. Let $a\in M^1$.
According to our notation, $\CF_{M^1}(a)$ is the algebra of germs 
of analytic functions at $a\in M^1$, and 
$\CF_{M^2}(\tau(a))$ is the algebra of germs of analytic functions at 
$\tau(a)\in M^2$. One has the pull-back homomorphism 
$\tau^*\colon\CF_{M^2}(\tau(a))\to\CF_{M^1}(a)$.

The \emph{differential of $\tau$ at $a$} 
is the $\kik$-linear map $\tau_*\big|_a\colon T_aM^1\to T_{\tau(a)}M^2$ 
defined as follows.
A tangent vector $v\in T_aM^1$ is a 
$\kik$-linear map $v\cl\CF_{M^1}(a)\to\kik$ satisfying~\er{vgg} 
for all $g_1,g_2\in\CF_{M^1}(a)$. We define the $\kik$-linear map
\beq
\notag
\tau_*\big|_a(v)\cl \CF_{M^2}(\tau(a))\to\kik,\qquad\quad
\tau_*\big|_a(v)(g)=v(\tau^*(g)),\qquad\quad g\in\CF_{M^2}(\tau(a)).
\ee
Then $\tau_*\big|_a(v)(h_1h_2)=
\tau_*\big|_a(v)(h_1)\cdot h_2(\tau(a))+h_1(\tau(a))\cdot\tau_*\big|_a(v)(h_2)$ 
for all $h_1,h_2\in\CF_{M^2}(\tau(a))$, which means that 
$\tau_*\big|_a(v)\in T_{\tau(a)}M^2$.

As has been said in Remark~\ref{ranm},
in this preprint all manifolds and maps of manifolds are supposed to be analytic.

\begin{definition}
\label{bundle}
Let $M^1$, $M^2$ be (possibly infinite-dimensional) manifolds.
Let $\dfb\in\zp$.
A map $\varphi\colon M^2\to M^1$ is called
a \emph{bundle with $\dfb$-dimensional fibers} if
\begin{itemize}
 \item the map $\vf$ is surjective,
 \item for any point $a\in M^2$
 there are a neighborhood $U\subset M^2$ and
 a manifold $W$ of dimension $\dfb$ such that $\vf(U)$ is open in $M^1$
 and one has the commutative diagram
\beq
\notag
\xymatrix{
{U}\ar[rr]^{\xi}\ar[dr]_{\vf} && {\vf(U)\times W}\ar[dl]\\
& {\vf(U)} &
}
\ee
where $\xi$ is an analytic diffeomorphism.
\end{itemize}
For $b\in M^1$ the subset $\vf^{-1}(b)\subset M^2$ is called the \emph{fiber} of $\vf$ over $b$. 
\end{definition}
\begin{remark}
The introduced notion is different from the standard concept of
locally trivial bundle, because in our case fibers over different points
are not necessarily isomorphic to each other.

If $M^1$, $M^2$ are finite-dimensional manifolds then a bundle $M^2\to M^1$
is the same as a surjective submersion.
\end{remark}

\subsection{Jet spaces and PDEs}
\label{spdejs}

Fix $\nv,n\in\zsp$.
Let $J^\infty$ be the space of infinite jets 
of $\nv$-component vector functions 
$\big(u^1(x_1,\dots,x_n),\dots,u^\nv(x_1,\dots,x_n)\big)$.
Equivalently, one can say that $J^\infty$ 
is the space of infinite jets of local sections of the bundle 
\beq
\lb{pib}
\pi\cl\kik^{n+\nv}\to\kik^n,\qquad\quad
(x_1,\dots,x_n,u^1,\dots,u^\nv)\mapsto(x_1,\dots,x_n).
\ee

For an element $\mu\in\zp^n$, we denote by $\mu_i\in\zp$, $i=1,\dots,n$,
the $i$-th component of $\mu$. That is, $\mu=(\mu_1,\dots,\mu_n)$.
Also, we set $|\mu|=\mu_1+\dots+\mu_n$.

We use the following notation for partial derivatives of 
functions $u^j=u^j(x_1,\dots,x_n)$, $j=1,\dots,\nv$,
\begin{equation}
\label{usi}
u^j_\mu=\frac{\pd^{|\mu|} u^j}{\pd x_1^{\mu_1}\dots\pd x_n^{\mu_n}},\qquad
\mu=(\mu_1,\dots,\mu_n)\in\zp^n,\qquad u^j_{0,\dots,0}=u^j,
\qquad j=1,\dots,\nv.
\end{equation}
Then $J^\infty$ can be identified with the space $\kik^\infty$ 
with the coordinates
\beq
\lb{x1nuj}
x_1,\dots,x_n,\quad u^j_\mu,\qquad\quad
\mu\in\zp^n,\qquad j=1,\dots,\nv.
\ee

This allows us to say that $J^\infty$ 
is an infinite-dimensional manifold with coordinates~\er{x1nuj}.
In this approach, $u^j_\mu$ is regarded as a $\kik$-valued variable,
which belongs to the set of coordinates~\er{x1nuj} of the manifold $J^\infty$.

The topology on $J^\infty$ can be described as follows.
For each $k\in\zp$, consider the space $J^k$ with the coordinates 
$$
x_1,\dots,x_n,\quad u^j_{\tilde\mu},\qquad\quad
\tilde\mu\in\zp^n,\qquad |\tilde\mu|\le k,\qquad j=1,\dots,\nv.
$$
One has the natural projection $p_k\cl J^\infty\to J^k$
that ``forgets'' the coordinates $u^j_\mu$ with $|\mu|>k$. 
Since $J^k$ is finite-dimensional, we have the standard topology on $J^k$.
For any $k\in\zp$ and any open subset $V\subset J^k$, 
the subset $p_k^{-1}(V)\subset J^\infty$ is open in $\kik^\infty$. 
Such subsets form a base of the topology on~$J^\infty$. 
An analytic function on a connected open subset of~$J^\infty$ depends 
on a finite number of the coordinates~\er{x1nuj}.

For every $\mu\in\zp^n$ and $i\in\{1,\dots,n\}$, 
we denote by $\mu+1_i$ the element of $\zp^n$
whose $i$-th component is equal to $\mu_i+1$ and 
$l$-th component is equal to $\mu_l$ for any $l\neq i$.
That is,
$$
\mu+1_i=(\mu_1,\dots,\mu_{i-1},\mu_i+1,\mu_{i+1},\dots,\mu_n)
$$
For example, $\mu+1_n=(\mu_1,\dots,\mu_{n-1},\mu_n+1)$.

The \emph{total derivative operators}
\begin{equation}
\label{dxi}
D_{x_i}=\frac{\pd}{\pd x_i}+\sum_{\substack{j=1,\dots,\nv,\\ \mu\in\zp^n}}
u^j_{\mu+1_i}\frac{\pd}{\pd u^j_\mu},\quad\qquad i=1,\dots,n,
\end{equation}
can be regarded as vector fields on the manifold $J^\infty$.

It is easily seen that, for each point $a\in J^\infty$, 
the corresponding tangent vectors
\beq
\lb{dxitv}
D_{x_i}\big|_a\in T_aJ^\infty,\qquad\quad i=1,\dots,n,
\ee
are linearly independent. 
Here $T_aJ^\infty$ is the tangent space of 
the manifold $J^\infty$ at $a\in J^\infty$.

Let $\ct_a\subset T_aJ^\infty$ be the $n$-dimensional subspace 
spanned by the vectors~\er{dxitv}. 
It is called the \emph{Cartan subspace} at $a\in J^\infty$.
The \emph{Cartan distribution} $\ct$ 
on~$J^\infty$ is the $n$-dimensional distribution which consists 
of the Cartan subspaces $\ct_a$, $a\in J^\infty$.
So the Cartan distribution $\ct$ is spanned by the vector fields~\er{dxi}.
Geometric coordinate-independent definitions of~$J^\infty$ 
and the Cartan distribution can be found in~\cite{rb}.

Consider a PDE for functions $u^i=u^i(x_1,\dots,x_n)$, $i=1,\dots,\nv$,
\beq
\lb{falmu}
F_\al(x_i,u^j_\mu)=0,\qquad\quad\al=1,\dots,q,
\ee
where $u^j_\mu$ is given by~\er{usi}, and $q\in\zsp$. 
Here $F_\al(x_i,u^j_\mu)$ depends on a finite number 
of the variables~\er{x1nuj} and can be viewed as  
a function on an open subset of~$J^\infty$.

\begin{remark}
\lb{merfj}
We assume that $F_\al(x_i,u^j_\mu)$ 
is an analytic function on an open subset of $J^\infty$.
For example, $F_\al(x_i,u^j_\mu)$ may be a meromorphic function,  
because a meromorphic function is analytic on some open subset of $J^\infty$.
\end{remark}

Since $F_\al=F_\al(x_i,u^j_\mu)$ is a function on an open subset of $J^\infty$, 
we can consider the functions $D_{x_1}^{k_1}\dots D_{x_n}^{k_n}(F_\al)$
for $k_1,\dots,k_n\in\zp$.
The equations $D_{x_1}^{k_1}\dots D_{x_n}^{k_n}(F_\al)=0$ 
are differential consequences of~\er{falmu}.
To understand the meaning of these equations, it is instructive 
to look at the differential consequences \er{dc1}, \er{dc2}, \er{dc3}
of the KdV equation.

Let $\CE\subset J^\infty$ be the subset of the points $a\in J^\infty$ 
that obey the equations
\beq
\lb{dxxpde}
F_\al(a)=0,\qquad
D_{x_1}^{k_1}\dots D_{x_n}^{k_n}(F_\al)(a)=0,\qquad
k_1,\dots,k_n\in\zp,
\quad\al=1,\dots,q.
\ee

If the PDE~\er{falmu} satisfies some non-degeneracy conditions, 
then the set $\CE$ is a nonsingular submanifold of~$J^\infty$ 
and the Cartan distribution $\ct$ is tangent to~$\CE$, which gives 
an $n$-dimensional distribution on~$\CE$. 
Then $\CE$ is called \emph{nonsingular}.
The restriction of the distribution $\ct$ to~$\CE$ is 
denoted by the same symbol~$\ct$ and is called 
the \emph{Cartan distribution on~$\CE$}.

These non-degeneracy conditions are satisfied on an open dense subset 
of~$J^\infty$ for practically all PDEs in applications.
(If there are some singular points in~$\CE$, one can exclude these points from 
consideration and study only the nonsingular part of~$\CE$, 
which is usually open and dense in~$\CE$.)
In particular, as is shown in Example~\ref{eipe}, 
for any $(1+1)$-dimensional evolution PDE the set $\CE$ is nonsingular.

\begin{remark}
If $F_\al(x_i,u^j_\mu)$ depends polynomially on $x_i$, $u^j_\mu$, 
then equations~\er{dxxpde} are algebraic, 
and $\CE$ is an algebraic variety in~$J^\infty$. 
The case of the KdV equation discussed in Example~\ref{cekdv} is of this kind.

In general, we assume that $\CE$ is an analytic submanifold 
of an open subset of~$J^\infty$.
\end{remark}

In what follows, 
we always assume that $\CE$ is nonsingular in the above-mentioned sense.
According to Remark~\ref{riscd},
solutions of the PDE correspond to $n$-dimensional 
integral submanifolds of the Cartan distribution. 


We often identify a PDE with the corresponding manifold $\CE$.
So we can speak about a PDE $\CE$.
Thus, in this geometric approach, a PDE is regarded as a manifold $\CE$ with 
an $n$-dimensional distribution (the Cartan distribution $\ct$) such that 
solutions of the PDE correspond to $n$-dimensional integral submanifolds, 
where $n$ is the number of independent variables in the PDE.

Thus we can say that the pair $(\CE,\ct)$ is a PDE.
To simplify notation, we sometimes say that $\CE$ is a PDE, 
not mentioning the distribution $\ct$ explicitly.

\begin{remark}
\lb{riscd}
Let 
\beq
\lb{huxtu}
\hat{U}(x_1,\dots,x_n)=
\big(u^1(x_1,\dots,x_n),\dots,u^\nv(x_1,\dots,x_n)\big)
\ee
be an $\nv$-component vector function defined on an open subset $V\subset\kik^n$.
Consider the corresponding map 
\beq
\lb{huinf}
\hat{U}^\infty\colon V\to J^\infty,\qquad
(x_1,\dots,x_n)\mapsto 
\Big(x_1,\dots,x_n,\,\ 
u^j_\mu=\frac{\pd^{|\mu|} u^j}{\pd x_1^{\mu_1}\dots\pd x_n^{\mu_n}}\Big)
\in J^\infty.
\ee
In other words, $\hat{U}^\infty(x_1,\dots,x_n)$ is the infinite jet 
of the vector function~\er{huxtu} at the point $(x_1,\dots,x_n)\in V$.
 
It easy to check that the vector fields~\er{dxi} are tangent 
to the $n$-dimensional submanifold $\hat{U}^\infty(V)\subset J^\infty$, 
where $\hat{U}^\infty(V)$ is the image of the map~\er{huinf}.
Indeed, for each $i=1,\dots,n$, 
the differential of~$\hat{U}^\infty$ maps the vector field $\dfrac{\pd}{\pd x_i}$ 
to the vector field $D_{x_i}$ restricted to $\hat{U}^\infty(V)$.

Hence $\hat{U}^\infty(V)$ is an integral submanifold of the Cartan distribution 
on~$J^\infty$.
It is known that any $n$-dimensional integral submanifold of the Cartan distribution
is locally of this type.

We have $\hat{U}^\infty(V)\subset\CE$ iff \er{huxtu} 
is a solution of the PDE~\er{falmu}.
Therefore, solutions of the PDE~\er{falmu} correspond to 
$n$-dimensional integral submanifolds 
$\smf\subset\CE$ of the Cartan distribution on~$\CE$.
\end{remark}

\begin{example}
\lb{eipe} 
Consider the case $n=2$. Then we have two independent variables, 
which are denoted by $x$, $t$. We use the following notation 
for partial derivatives of functions $u^j(x,t)$, $j=1,\dots,\nv$, 
\begin{equation}
\label{pdn2}
u^j_{\mu_1,\mu_2}=\frac{\pd^{\mu_1+\mu_2} u^j}{\pd x^{\mu_1}\pd t^{\mu_2}},
\qquad\quad \mu_1,\mu_2\in\zp,\quad\qquad u^j_{0,0}=u^j,\qquad\quad j=1,\dots,\nv.
\end{equation}
In this notation, the total derivative operators are written as 
\beq
\lb{tdj}
D_x=\frac{\pd}{\pd x}+\sum_{\substack{j=1,\dots,\nv,\\ \mu_1,\mu_2\in\zp}} 
u^j_{\mu_1+1,\mu_2}\frac{\pd}{\pd u^j_{\mu_1,\mu_2}},\qquad\quad
D_t=\frac{\pd}{\pd t}+\sum_{\substack{j=1,\dots,\nv,\\ \mu_1,\mu_2\in\zp}} 
u^j_{\mu_1,\mu_2+1}\frac{\pd}{\pd u^j_{\mu_1,\mu_2}}.
\ee
In this case, $J^\infty$ is the infinite-dimensional manifold 
with the coordinates $x$, $t$, $u^j_{\mu_1,\mu_2}$.
The Cartan distribution on~$J^\infty$ is $2$-dimensional and is spanned 
by the vector fields $D_x$, $D_t$ given by~\er{tdj}.

Consider an $\nv$-component evolution PDE
\begin{equation}
\label{evnew}
u^i_{0,1}=F^i(x,t,u^j_{0,0},u^j_{1,0},\dots,u^j_{\eo,0}),\qquad\quad
i,j=1,\dots,\nv,
\end{equation}
where, according to~\er{pdn2}, one has 
$u^i_{0,1}=\dfrac{\pd u^i}{\pd t}$ and $u^j_{k,0}=\dfrac{\pd^k u^j}{\pd x^k}$ 
for $k\in\zp$.
The number $\eo\in\zsp$ in~\er{evnew} is such that 
the functions $F^i$ may depend only on $x$, $t$, $u^j_{k,0}$ for $k\le\eo$. 

The PDE~\er{evnew} is said to be \emph{$(1+1)$-dimensional}, 
because in~\er{evnew} we have one ``space variable'' $x$ 
and one ``time variable'' $t$.

The infinite prolongation $\CE\subset J^\infty$ 
of~\er{evnew} is determined by the equations
\beq
\lb{dxdtpde}
D_x^{n_1}D_t^{n_2}
\big(u^i_{0,1}-F^i(x,t,u^j_{0,0},u^j_{1,0},\dots,u^j_{\eo,0})\big)=0,\qquad
i=1,\dots,\nv,\qquad 
n_1,n_2\in\zp.
\ee
Using equations~\er{dxdtpde}, 
for all $k_1\ge 0$, $k_2\ge 1$, and $j=1,\dots,\nv$ 
we can uniquely express $u^j_{k_1,k_2}$ in terms of 
\beq
\lb{intern}
x,\qquad t,\qquad u^i_{k,0},\qquad 
i=1,\dots,\nv,\qquad k\in\zp.
\ee
Therefore, \er{intern} can be regarded as coordinates 
on the submanifold $\CE\subset J^\infty$.

It is easily seen that the vector fields $D_x$, $D_t$ 
are tangent to the submanifold $\CE\subset J^\infty$.
According to~\er{dxdtpde}, the restriction of the function 
$D_t(u^i_{k,0})=u^i_{k,1}$ to $\CE$ is equal 
to $D_x^k\big(F^i(x,t,u^j_{0,0},\dots,u^j_{\eo,0})\big)$.
This implies that the restrictions $D_x\big|_{\CE}$, $D_t\big|_{\CE}$ 
of $D_x$, $D_t$ to $\CE$ are written as
\beq
\lb{dxdtce}
D_x\big|_{\CE}=\frac{\pd}{\pd x}+\sum_{\substack{i=1,\dots,\nv,\\ k\ge 0}} 
u^i_{k+1,0}\frac{\pd}{\pd u^i_{k,0}},\qquad
D_t\big|_{\CE}=\frac{\pd}{\pd t}+\sum_{\substack{i=1,\dots,\nv,\\ k\ge 0}} 
D_x^k\big(F^i(x,t,u^j_{0,0},\dots,u^j_{\eo,0})\big)
\frac{\pd}{\pd u^i_{k,0}}. 
\ee
Here we use the fact that~\er{intern} are regarded as coordinates on~$\CE$.
So we see that 
for any $(1+1)$-dimensional evolution PDE~\er{evnew} the set $\CE$ is nonsingular.

Now consider the scalar case $\nv=1$. We set $u=u^1$, $u_k=u^1_{k,0}$, $F=F^1$.
Then \er{intern} becomes
\beq
\lb{cnv1}
x,\qquad t,\qquad u_{k},\qquad\quad k\in\zp,
\ee
and the PDE~\er{evnew} can be written as~\er{eveq_intr}.
Furthermore, formulas~\er{dxdtce} become~\er{evdxdt}.
(To simplify notation, in formulas~\er{evdxdt} we have written 
$D_x$, $D_t$ instead of $D_x\big|_{\CE}$, $D_t\big|_{\CE}$.)

Therefore, we see that the infinite prolongation $\CE$ of the scalar 
evolution equation~\er{eveq_intr} is an infinite-dimensional manifold 
with the coordinates~\er{cnv1}, and the Cartan distribution on~$\CE$ 
is spanned by the vector fields~\er{evdxdt}.
\end{example}

\subsection{Coverings and B\"acklund transformations of PDEs}
\lb{scbt}

Following A.~M.~Vinogradov and I.~S.~Krasilshchik~\cite{vin-lnm84,nonl89}, 
we are going to give a geometric definition of B\"acklund transformations, 
using the notion of \emph{coverings of PDEs}.

Before defining coverings of PDEs, we need to recall the classical notion 
of coverings in topology, which we call topological coverings, 
in the case of finite-dimensional manifolds.

Let $M^1$, $M^2$ be finite-dimensional manifolds.
Suppose that $M^1$ is connected. Then a map $\vf\cl M^2\to M^1$ 
is a topological covering iff $\vf$ is a locally trivial bundle 
with discrete ($0$-dimensional) fibers.
In general, when $M^1$ is not necessarily connected, 
a map $\vf\cl M^2\to M^1$ is a topological covering 
iff $\vf$ is a locally trivial bundle with discrete fibers 
over each connected component of $M^1$.

Coverings of PDEs are defined in Definition~\ref{dcpde}. 
They are sometimes called \emph{differential coverings}, 
in order to distinguish them 
from topological ones. Relations between differential coverings 
and topological coverings are discussed in Remark~\ref{rdtcov}.

\begin{definition}
\lb{dcpde}
Let $(\ce^1,\ct^1)$ and $(\ce^2,\ct^2)$ be PDEs, 
where $\CE^i$ is a (possibly infinite-dimensional) manifold 
and $\ct^i$ is the Cartan distribution on $\CE^i$ for each $i=1,2$.
So for each $a_i\in\CE^i$ we have the Cartan subspace 
$\ct^i_{a_i}\subset T_{a_i}\CE^i$, and the distribution $\ct^i$ is determined
by these subspaces.

A map $\tau\colon\CE^2\to\CE^1$ is a (\emph{differential}) \emph{covering}
if $\tau$ is a bundle with $q$-dimensional fibers for some $q\in\zp$ 
such that for any $a\in\CE^2$ 
the restriction of~$\tau_*\big|_a$ to the subspace $\ct^2_a\subset T_{a}\CE^2$
is an isomorphism onto the subspace $\ct^1_{\tau(a)}\subset T_{\tau(a)}\ce^1$.
(So one has $\tau_*\big|_a(\ct^2_a)=\ct^1_{\tau(a)}$ and 
$\ct^2_a\cap\ker\tau_*\big|_a=0$.)
In particular, $\dim\ct^2_a=\dim\ct^1_{\tau(a)}$, so the dimension of 
the distribution $\ct^2$ is equal to the dimension of the distribution $\ct^1$.
\end{definition}

Note that even local classification of differential coverings is highly nontrivial 
due to different possible configurations of the distributions.

\begin{remark}
Definition~\ref{dcpde} implies the following.
If $(\ce^1,\ct^1)$, $(\ce^2,\ct^2)$ are PDEs 
and $\tau\colon\CE^2\to\CE^1$ is a (differenial) covering, 
then $\tau$ maps integral submanifolds of the distribution $\ct^2$ 
to integral submanifolds of the distribution $\ct^1$. Therefore, 
$\tau$ maps solutions of the PDE $(\ce^2,\ct^2)$ 
to solutions of the PDE $(\ce^1,\ct^1)$.

Let $n$ be the dimension of the distribution $\ct^1$, 
which is equal to the dimension of the distribution $\ct^2$.
That is, $n=\dim\ct^i_a$ for each $i=1,2$ and all $a\in\ce^i$.
Solutions of the PDE $(\ce^i,\ct^i)$ correspond to $n$-dimensional 
integral submanifolds of the Cartan distribution $\ct^i$.

Since the Cartan distribution is involutive, for each 
$n$-dimensional integral submanifold $\smf\subset\ce^1$ of the distribution $\ct^1$, 
the preimage $\tau^{-1}(\smf)$ is foliated by 
$n$-dimensional integral submanifolds of the distribution $\ct^2$.

Therefore, locally, for each solution $s$ of the PDE $(\ce^1,\ct^1)$, 
the preimage $\tau^{-1}(s)$ is a family of solutions of the PDE $(\ce^2,\ct^2)$
depending on $q$ parameters, where $q$ is the dimension of fibers of~$\tau$.
\end{remark}

\begin{remark}
\lb{rdtcov}
Let us show that usual topological coverings of finite-dimensional 
manifolds are a special case of differential coverings.

Let $\ce^1$, $\ce^2$ be finite-dimensional manifolds and 
$\psi\colon\CE^2\to\CE^1$ be an analytic map which is a topological covering.
(As has been said in Remark~\ref{ranm}, in this preprint
all manifolds and maps of manifolds are supposed to be analytic.) 
Then $\psi$ becomes a differential covering if, for each $i=1,2$, 
we consider the distribution $\ct^i$ equal 
to the whole tangent bundle of~$\ce^i$ so that $\ct^i_a=T_a\ce^i$ 
for all $a\in\ce^i$.
\end{remark}

It is shown in~\cite{vin-lnm84,nonl89} that the classical notion of 
B\"acklund transformations can be formulated geometrically as follows.

\begin{definition}
\lb{defbt}
Let $(\ce^1,\ct^1)$ and $(\ce^2,\ct^2)$ be PDEs.
A \emph{B\"acklund transformation} (BT) between $(\ce^1,\ct^1)$ 
and $(\ce^2,\ct^2)$ 
is given by another PDE $(\CE^3,\ct^3)$ and a pair of coverings
\beq
\lb{ctbt}
\tau_1\cl\CE^3\to\CE^1,\quad\qquad\tau_2\cl\CE^3\to\CE^2.
\ee
In other words, $(\ce^1,\ct^1)$ and $(\ce^2,\ct^2)$ are connected by 
a BT if there are a PDE $(\CE^3,\ct^3)$ and differential coverings~\er{ctbt}.
Here $\ct^i$ is the Cartan distribution on the manifold $\ce^i$ for each $i=1,2,3$.

To simplify notation, we sometimes say that $\ce^i$ is a PDE, 
not mentioning the distribution $\ct^i$ explicitly.
\end{definition}

According to Remark~\ref{rbtsol}, 
a BT~\er{ctbt} helps to obtain solutions of the PDE $\ce^2$ 
from solutions of the PDE $\ce^1$, and vice versa.


\begin{remark}
\lb{rbtfd}
According to Definitions~\ref{dcpde},~\ref{defbt},  
we consider BTs which consist of coverings with finite-dimensional fibers.
To our knowledge, all known examples of BTs for $(1+1)$-dimensional PDEs 
can be formulated in this way.

For PDEs of other types (multidimensional PDEs), 
sometimes one needs to consider BTs with infinite-dimensional fibers,
which are not studied in this preprint.
\end{remark}

\section{The algebras $\fd^\oc(\CE,a)$}
\lb{btcsev}

Recall that $x$, $t$, $u_k$ take values in $\kik$, 
where $\kik$ is either $\Com$ or $\mathbb{R}$.
Let $\kik^\infty$ be the infinite-dimensional space  
with the coordinates $x$, $t$, $u_k$ for $k\in\zp$. 
The topology on $\kik^\infty$ is defined as follows. 

For each $l\in\zp$, consider the space $\kik^{l+3}$ 
with the coordinates $x$, $t$, $u_k$ for $k\le l$. 
One has the natural projection $\pi_l\cl\kik^\infty\to\kik^{l+3}$ that ``forgets'' 
the coordinates $u_{k'}$ for $k'>l$. 

Since $\kik^{l+3}$ is a finite-dimensional vector space, 
we have the standard topology on~$\kik^{l+3}$. 
For any $l\in\zp$ and any open subset $V\subset\kik^{l+3}$, 
the subset~$\pi_l^{-1}(V)\subset\kik^\infty$ 
is, by definition, open in $\kik^\infty$. 
Such subsets form a base of the topology on~$\kik^\infty$. 
In other words, we consider the smallest topology on~$\kik^\infty$ such that 
the maps $\pi_l$, $l\in\zp$, are continuous. 

The infinite prolongation $\CE$ of an evolution equation~\er{eveq_intr} 
has been defined in Example~\ref{eipe}.
Equivalently, the manifold $\CE$ can described as follows.
The function $F(x,t,u_0,\dots,u_{\eo})$ from~\er{eveq_intr} 
is defined on some open subset $\ost\subset\kik^{\eo+3}$.
The infinite prolongation $\CE$ of equation~\er{eveq_intr} 
can be identified with the set $\pi_{\eo}^{-1}(\ost)\subset\kik^\infty$.

So $\CE=\pi_{\eo}^{-1}(\ost)$ is an open subset of the space $\kik^\infty$  
with the coordinates $x$, $t$, $u_k$ for $k\in\zp$. 
The topology on~$\CE$ is induced by the embedding $\CE\subset\kik^\infty$. 

\begin{example}
Using the notation~\er{dnot}, for any constants $e_1,e_2,e_3\in\kik$ 
we can rewrite the Krichever-Novikov equation~\er{knedef} as follows 
\begin{gather}
\label{u1kn}
u_t=F(x,t,u_0,u_1,u_2,u_3),\\
\lb{f1kn}
F(x,t,u_0,u_1,u_2,u_3)=
u_3-\frac32\frac{(u_2)^2}{u_1}+
\frac{(u_0-e_1)(u_0-e_2)(u_0-e_3)}{u_1}.
\end{gather}
Since the right-hand side of~\er{u1kn} depends on $u_k$ for $k\le 3$,
we have here $\eo=3$.

Let $\kik^6$ be the space with the coordinates 
$x$, $t$, $u_0$, $u_1$, $u_2$, $u_3$.
According to~\er{f1kn}, 
the function $F$ is defined on the open subset $\ost\subset\kik^6$ determined 
by the condition $u_1\neq 0$. 

Recall that $\kik^\infty$ is the space with the coordinates
$x$, $t$, $u_k$ for $k\in\zp$. 
We have the map $\pi_3\cl\kik^\infty\to\kik^6$
that ``forgets'' the coordinates $u_{k'}$ for $k'>3$. 
The infinite prolongation $\CE$ of equation~\er{u1kn} is the following open subset 
of $\kik^\infty$
$$
\CE=\pi_3^{-1}(\ost)=
\big\{(x,t,u_0,u_1,u_2,\dots)\in\kik^\infty\,\big|\,
u_1\neq 0\big\}.
$$
\end{example}



Consider again an arbitrary scalar evolution equation~\er{eveq_intr}.
As has been said above, the infinite prolongation $\CE$ 
of~\er{eveq_intr} is an open subset of the space $\kik^\infty$  
with the coordinates $x$, $t$, $u_k$ for $k\in\zp$. 

A point $a\in\CE$ is determined by the values of the coordinates 
$x$, $t$, $u_k$ at $a$. Let
\begin{equation}
\lb{pointevs}
a=(x=x_a,\,t=t_a,\,u_k=a_k)\,\in\,\CE,\qquad\qquad x_a,\,t_a,\,a_k\in\kik,\qquad k\in\zp,
\end{equation}
be a point of $\CE$.
In other words, the constants $x_a$, $t_a$, $a_k$ are the coordinates 
of the point $a\in\CE$ in the coordinate system $x$, $t$, $u_k$.

Let $\sm\in\zsp$. Recall that we denote by $\gl_\sm$ the algebra of 
$\sm\times\sm$ matrices with entries from $\kik$ 
and by $\mathrm{GL}_\sm$ the group of invertible $\sm\times\sm$ matrices.
Let $\mathrm{Id}\in\mathrm{GL}_\sm$ be the identity matrix.

Let $\mg\subset\gl_\sm$ be a matrix Lie algebra.
(So $\mg$ is a Lie subalgebra of $\gl_\sm$.)
There is a unique connected immersed Lie subgroup 
$\mathcal{G}\subset\mathrm{GL}_\sm$ whose Lie algebra is $\mg$.
We call $\mathcal{G}$ the connected matrix Lie group 
corresponding to the matrix Lie algebra $\mg\subset\gl_\sm$.

For any $l\in\zp$, a matrix-function $G=G(x,t,u_0,u_1,\dots,u_l)$ 
with values in~$\mathcal{G}$ is called a \emph{gauge transformation}.
Equivalently, one can say that a gauge transformation is given 
by a $\mathcal{G}$-valued function $G=G(x,t,u_0,\dots,u_l)$.

In this section, when we speak about ZCRs, we always 
mean that we speak about ZCRs of equation~\er{eveq_intr}.
For each $i=1,2$, let 
\beq
\notag
A_i=A_i(x,t,u_0,u_1,\dots),\quad 
B_i=B_i(x,t,u_0,u_1,\dots),\quad
D_x(B_i)-D_t(A_i)+[A_i,B_i]=0
\ee
be a $\mg$-valued ZCR.
The ZCR $A_1,B_1$ is said to be \emph{gauge equivalent} 
to the ZCR $A_2,B_2$ if there is a gauge transformation $G=G(x,t,u_0,\dots,u_l)$ 
such that 
$$
A_1=GA_2G^{-1}-D_x(G)\cdot G^{-1},\qquad\qquad
B_1=GB_2G^{-1}-D_t(G)\cdot G^{-1}.
$$

\begin{remark}
\lb{anmer}
For any $l\in\zp$, 
when we consider a function $Q=Q(x,t,u_0,u_1,\dots,u_l)$ 
defined on a neighborhood of $a\in\CE$, 
we always assume that the function is analytic on this neighborhood.
For example, $Q$ may be a meromorphic function defined on an open subset of~$\CE$
such that $Q$ is analytic on a neighborhood of $a\in\CE$.
\end{remark}

Let $s\in\zp$. For a function $M=M(x,t,u_0,u_1,u_2,\dots)$, the notation 
$M\,\Big|_{u_k=a_k,\ k\ge s}$ 
means that we substitute $u_k=a_k$ for all $k\ge s$ in the function $M$. 

Also, sometimes we need to substitute $x=x_a$ or $t=t_a$ in such functions. 
For example, if $M=M(x,t,u_0,u_1,u_2,u_3)$, then 
$$
M\,\Big|_{x=x_a,\ u_k=a_k,\ k\ge 2}=M(x_a,t,u_0,u_1,a_2,a_3).
$$

The following result is obtained in~\cite{scal13}.

\begin{theorem}[\cite{scal13}]
\lb{thnfzcr}
Let $\sm\in\zsp$ and $\oc\in\zp$. 
Let $\mg\subset\gl_\sm$ be a matrix Lie algebra and  
$\mathcal{G}\subset\mathrm{GL}_\sm$ be 
the connected matrix Lie group corresponding to $\mg\subset\gl_\sm$. 

Let 
\beq
\lb{guzcr}
A=A(x,t,u_0,\dots,u_\oc),\quad 
B=B(x,t,u_0,\dots,u_{\oc+\eo-1}),\quad
D_x(B)-D_t(A)+[A,B]=0
\ee
be a ZCR of order~$\le\oc$ such that the functions $A$, $B$ 
are analytic on a neighborhood of $a\in\CE$ and take values in $\mg$.

Then on a neighborhood of $a\in\CE$ 
there is a unique gauge transformation $G=G(x,t,u_0,\dots,u_l)$ 
such that $G(a)=\mathrm{Id}$ and the functions 
\beq
\lb{tatbgu}
\tilde{A}=GAG^{-1}-D_x(G)\cdot G^{-1},\qquad\qquad
\tilde{B}=GBG^{-1}-D_t(G)\cdot G^{-1}
\ee
satisfy 
\begin{gather}
\label{d=0}
\frac{\pd \tilde{A}}{\pd u_s}\,\,\bigg|_{u_k=a_k,\ k\ge s}=0\qquad\quad
\forall\,s\ge 1,\\
\lb{aukak}
\tilde{A}\,\Big|_{u_k=a_k,\ k\ge 0}=0,\\
\lb{bxx0}
\tilde{B}\,\Big|_{x=x_a,\ u_k=a_k,\ k\ge 0}=0.
\end{gather}

Furthermore, one has the following.
\begin{itemize}
\item The function $G$ depends only on $x$, $t$, $u_0,\dots,u_{\oc-1}$.
\textup{(}In particular, if $\oc=0$ then $G$ depends only on $x$, $t$.\textup{)}
\item The function $G$ is analytic on a neighborhood of $a\in\CE$.
\item The functions~\er{tatbgu} take values in $\mg$ and satisfy  
\begin{gather}
\lb{tatbth}
\tilde{A}=\tilde{A}(x,t,u_0,u_1,\dots,u_\oc),\quad\qquad 
\tilde{B}=\tilde{B}(x,t,u_0,u_1,\dots,u_{\oc+\eo-1}),\\
\lb{zcrtmn}
D_x(\tilde{B})-D_t(\tilde{A})+[\tilde{A},\tilde{B}]=0.
\end{gather}
\textup{(}So the functions~\er{tatbgu} form 
a $\mg$-valued ZCR of order~$\le\oc$.\textup{)}
\end{itemize}

Note that, according to our definition of gauge transformations, 
the function $G$ takes values in $\mathcal{G}$. 
The property $G(a)=\mathrm{Id}$ means that $G(x_a,t_a,a_0,\dots,a_{\oc-1})=\mathrm{Id}$.
\end{theorem}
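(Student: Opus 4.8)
The plan is to produce the normalizing gauge transformation $G$ directly, by imposing the normal-form conditions \eqref{d=0}, \eqref{aukak}, \eqref{bxx0} one at a time and reading each of them as a linear matrix ODE that pins down a single partial derivative of $G$; uniqueness will then come out of the same construction, since every such ODE, together with its initial value, has a unique solution. Two parts of the statement are essentially free. Since gauge transformations send ZCRs to ZCRs (formula \eqref{mnprint}), the transformed pair \eqref{tatbgu} automatically satisfies \eqref{zcrtmn}; and once $\tilde A$ is known to have order $\le\oc$, the general order bound recalled after \eqref{mnoc} forces $\tilde B$ to depend only on $x,t,u_0,\dots,u_{\oc+\eo-1}$. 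So the real work is to build $G$ and to control which jet variables it depends on.

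First I would record the transformation formulas. Writing $\tilde A = GAG^{-1}-(D_xG)G^{-1}$ and using the identity $\frac{\pd}{\pd u_s}(D_xG)=D_x\!\big(\frac{\pd G}{\pd u_s}\big)+\frac{\pd G}{\pd u_{s-1}}$, one sees that each condition in \eqref{d=0} is linear in one derivative of $G$, with all remaining terms built from lower-order data. For $s=\oc$ the computation gives $\frac{\pd\tilde A}{\pd u_\oc}=G\big(\frac{\pd A}{\pd u_\oc}\big)G^{-1}-\big(\frac{\pd G}{\pd u_{\oc-1}}\big)G^{-1}$, so the $s=\oc$ condition becomes the full identity $\frac{\pd G}{\pd u_{\oc-1}}=G\cdot\frac{\pd A}{\pd u_\oc}\Big|_{u_\oc=a_\oc}$, a linear ODE in $u_{\oc-1}$. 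Solving it and then descending through $s=\oc-1,\dots,1$ — each of which, restricted to the slice $u_k=a_k$, $k\ge s$, prescribes $\frac{\pd G}{\pd u_{s-1}}$ on that slice — determines the dependence of $G$ on $u_0,\dots,u_{\oc-1}$. In passing I would check that every such ODE has the form $\pd_\bullet G=G\cdot\xi$ with $\xi$ taking values in $\mg$ (because $A$ is $\mg$-valued), so that $G$ stays in the group $\mathcal{G}$, and that analytic ODE theory yields an analytic solution on a neighborhood of $a$.

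Next I would fix the remaining freedom: condition \eqref{aukak}, read on the slice $u_k=a_k$, $k\ge0$, is a linear ODE in $x$ for $G(x,t,a_0,\dots,a_{\oc-1})$, and \eqref{bxx0} is a linear ODE in $t$ for $G(x_a,t,a_0,\dots,a_{\oc-1})$, while $G(a)=\mathrm{Id}$ supplies the final initial value. The essential difficulty, and the step I expect to be the main obstacle, is the \emph{consistency} of this whole system: the conditions on $\tilde A$ constrain the $x$- and $u_k$-directions while \eqref{bxx0} constrains the $t$-direction, and these do not decouple (for instance $\pd_x G$ enters the lower levels of the $u$-hierarchy through the $D_x$ in the transformation law). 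What guarantees that the separately prescribed derivatives of $G$ glue into a single analytic function is precisely the ZCR equation $D_x(B)-D_t(A)+[A,B]=0$, i.e. flatness $[D_x+A,\,D_t+B]=0$: it makes the $t$-evolution preserve the normal form fixed in the $x,u$-directions, so that the overdetermined system for $G$ has vanishing ``curvature'' and is integrable. I would make this precise by computing the relevant mixed-partial compatibility conditions and showing they reduce to the ZCR equation for $A,B$.

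Finally, uniqueness follows from the construction itself: at every level the prescription for a derivative of $G$ is a linear ODE whose initial value is forced by the nested base-point conditions and by $G(a)=\mathrm{Id}$, so the solution is unique. The construction produces a $G$ depending only on $x,t,u_0,\dots,u_{\oc-1}$, and then $\tilde A$ has order $\le\oc$ automatically. To obtain the statement in full — that \emph{no} normalizing gauge transformation can depend on higher jet variables $u_k$, $k\ge\oc$ — I would feed a general $G=G(x,t,u_0,\dots,u_l)$ into the same hierarchy: for $s>\oc$ one has $\frac{\pd A}{\pd u_s}=0$, so the conditions \eqref{d=0} at these levels, combined with the triangular structure above, force $\frac{\pd G}{\pd u_k}=0$ for $k\ge\oc$ by downward induction, reducing any normalizing gauge to the one constructed.
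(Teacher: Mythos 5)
You should know at the outset that this paper does not actually prove Theorem~\ref{thnfzcr}: it is quoted from \cite{scal13}, so your proposal can only be compared with the argument that is known to work, not with a proof in this text. Several of your ingredients are correct and do belong to that argument: the identity $\tfrac{\partial}{\partial u_s}D_x(G)=D_x\bigl(\tfrac{\partial G}{\partial u_s}\bigr)+\tfrac{\partial G}{\partial u_{s-1}}$, the $s=\oc$ computation giving $\tfrac{\partial G}{\partial u_{\oc-1}}=G\cdot\tfrac{\partial A}{\partial u_\oc}\bigl|_{u_\oc=a_\oc}$, the chaining of initial data down to $G(a)=\mathrm{Id}$, and the downward induction showing a normalizing gauge cannot depend on $u_k$, $k\ge\oc$. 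The gap is in the middle, where the real content lies. For $s<\oc$ your claim that each condition in \eqref{d=0} is ``linear in one derivative of $G$, with all remaining terms built from lower-order data'' is false as stated: differentiating $\tilde{A}G=GA-D_x(G)$ in $u_s$ produces, besides $\tfrac{\partial G}{\partial u_{s-1}}$, the terms $D_x\bigl(\tfrac{\partial G}{\partial u_s}\bigr)$ and (inside $\tilde{A}$) $D_x(G)$, which contain $\partial_x G$ and $\tfrac{\partial G}{\partial u_k}$ for $k<s-1$ --- these are other unknowns, not data; the same happens in \eqref{aukak} and \eqref{bxx0}, where the restrictions of $D_x(G)$ and $D_t(G)$ to the slices contain the transverse derivatives of $G$. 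The missing lemma is that, modulo the higher-level identities, all such terms cancel: for instance at level $\oc-1$, substituting $\tfrac{\partial G}{\partial u_{\oc-1}}=G\xi$ with $\xi=\tfrac{\partial A}{\partial u_\oc}\bigl|_{u_\oc=a_\oc}$, every occurrence of $D_x(G)$ drops out and the condition becomes $\tfrac{\partial G}{\partial u_{\oc-2}}=G\cdot\bigl([\xi,A]+\tfrac{\partial A}{\partial u_{\oc-1}}-D_x(\xi)\bigr)$ on the slice. Cleaner still is to normalize iteratively: at stage $s$ assume $A$ already satisfies the conditions at levels $>s$ and act only by residual gauges $G=G(x,t,u_0,\dots,u_{s-1})$, for which $\tfrac{\partial G}{\partial u_s}=0$ kills the troublesome terms, so every stage is verbatim your $s=\oc$ computation; uniqueness then follows from the same residual-gauge analysis. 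Without some version of this, the ``triangular structure'' you invoke is simply not established.

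The second, conceptual, error is that you identify the main obstacle as integrability of an overdetermined system and propose to show the mixed-partial compatibility conditions ``reduce to the ZCR equation''. They do not, because there are no compatibility conditions to check: \eqref{d=0}, \eqref{aukak}, \eqref{bxx0} prescribe one directional derivative of $G$ per member of a nested flag of slices (the $t$-line inside the $(x,t)$-slice inside $\{u_k=a_k,\ k\ge1\}$, and so on), each with initial data on the previous slice; this is a chain of Cauchy problems for linear ODEs, never two prescriptions of the same derivative on the same set. Accordingly, the normalizing $G$ exists and is unique for an \emph{arbitrary} analytic $\mg$-valued pair $(A,B)$, flat or not --- already visible at $\oc=0$, where one solves $\partial_t G=G\,B(x_a,t,a_0,a_1,\dots)$ along the line $x=x_a$ and then $\partial_x G=G\,A(x,t,a_0)$, with no use of $D_x(B)-D_t(A)+[A,B]=0$. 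This is the analogue of radial (axial) gauge fixing of a connection, which requires no flatness. The ZCR equation enters only where you already placed it: in concluding that the pair \eqref{tatbgu} is again a ZCR and that $\tilde{B}$ then has order $\le\oc+\eo-1$. A proof organized around hunting for a curvature obstruction would therefore stall: the obstruction you plan to compute vanishes identically for reasons unrelated to flatness, while the actual work --- the cancellation, or residual-gauge, lemma above --- remains undone.
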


Fix a point $a\in\CE$ given by~\er{pointevs}, 
which is determined by constants $x_a$, $t_a$, $a_k$.

A ZCR 
\beq
\lb{anzcr}
\anA=\anA(x,t,u_0,u_1,\dots),\qquad 
\anB=\anB(x,t,u_0,u_1,\dots),\qquad
D_x(\anB)-D_t(\anA)+[\anA,\anB]=0
\ee
is said to be \emph{$a$-normal} if $\anA$, $\anB$ satisfy the following equations
\begin{gather}
\label{agd=0}
\frac{\pd \anA}{\pd u_s}\,\,\bigg|_{u_k=a_k,\ k\ge s}=0\qquad\quad
\forall\,s\ge 1,\\
\lb{agaukak}
\anA\,\Big|_{u_k=a_k,\ k\ge 0}=0,\\
\lb{agbxx0}
\anB\,\Big|_{x=x_a,\ u_k=a_k,\ k\ge 0}=0.
\end{gather}

\begin{remark}
\lb{rnfzcr}
For example, the ZCR $\tilde{A},\tilde{B}$ 
described in Theorem~\ref{thnfzcr} is 
$a$-normal, because $\tilde{A}$, $\tilde{B}$ obey \er{d=0}, 
\er{aukak}, \er{bxx0}.
Theorem~\ref{thnfzcr} implies that any ZCR on a neighborhood of $a\in\CE$
is gauge equivalent to an $a$-normal ZCR.
Therefore, following~\cite{scal13}, we can say that 
properties \er{agd=0}, \er{agaukak}, \er{agbxx0} 
determine a normal form for ZCRs
with respect to the action of the group of gauge transformations 
on a neighborhood of $a\in\CE$.
\end{remark}

\begin{remark}
\lb{abcoef0}
The functions $A$, $B$, $G$ considered in Theorem~\ref{thnfzcr}
are analytic on a neighborhood of $a\in\CE$.
Therefore, the $\mg$-valued functions $\tilde{A}$, $\tilde{B}$ 
given by~\er{tatbgu} are analytic as well.

Since $\tilde{A}$, $\tilde{B}$ are analytic and are of the form~\er{tatbth},
these functions are represented as absolutely convergent power series
\begin{gather}
\label{aser}
\tilde{A}=\sum_{l_1,l_2,i_0,\dots,i_\oc\ge 0} 
(x-x_a)^{l_1} (t-t_a)^{l_2}(u_0-a_0)^{i_0}\dots(u_\oc-a_\oc)^{i_\oc}\cdot
\tilde{A}^{l_1,l_2}_{i_0\dots i_\oc},\\
\lb{bser}
\tilde{B}=\sum_{l_1,l_2,j_0,\dots,j_{\oc+\eo-1}\ge 0} 
(x-x_a)^{l_1} (t-t_a)^{l_2}(u_0-a_0)^{j_0}\dots(u_{\oc+\eo-1}-a_{\oc+\eo-1})^{j_{\oc+\eo-1}}\cdot
\tilde{B}^{l_1,l_2}_{j_0\dots j_{\oc+\eo-1}},\\
\notag
\tilde{A}^{l_1,l_2}_{i_0\dots i_\oc},\,\tilde{B}^{l_1,l_2}_{j_0\dots j_{\oc+\eo-1}}\in\mg.
\end{gather}

For each $k\in\zsp$, we set 
\beq
\lb{defzcsoc}
\zcs_k=\Big\{(i_0,\dots,i_{k})\in\zp^{k+1}\ \Big|\ \exists\,r\in\{1,\dots,k\}\,\ 
\text{such that}\,\ i_r=1,\,\ i_q=0\,\ \forall\,q>r\Big\}.
\ee
In other words, for $k\in\zsp$ and $i_0,\dots,i_{k}\in\zp$, one has $(i_0,\dots,i_{k})\in\zcs_k$ iff
there is $r\in\{1,\dots,k\}$ such that 
$(i_0,\dots,i_{r-1},i_r,i_{r+1},\dots,i_{k})=(i_0,\dots,i_{r-1},1,0,\dots,0)$.
Set also $\zcs_0=\varnothing$.
So the set $\zcs_0$ is empty.

Using formulas~\er{aser},~\er{bser}, 
we see that properties~\er{d=0},~\er{aukak}, \er{bxx0}
are equivalent to 
\beq
\lb{ab000}
\tilde{A}^{l_1,l_2}_{0\dots 0}=\tilde{B}^{0,l_2}_{0\dots 0}=0,\qquad
\tilde{A}^{l_1,l_2}_{i_0\dots i_{\oc}}=0,
\qquad (i_0,\dots,i_{\oc})\in\zcs_\oc,\qquad l_1,l_2\in\zp.
\ee
\end{remark}

\begin{remark}  
\label{inform}
The main idea of the definition of the Lie algebra $\fd^\oc(\CE,a)$  
can be informally outlined as follows. 
According to Theorem~\ref{thnfzcr} and Remark~\ref{abcoef0}, 
any ZCR~\er{guzcr} of order~$\le\oc$ is gauge equivalent 
to a ZCR given by functions $\tilde{A}$, $\tilde{B}$ 
that are of the form~\er{aser},~\er{bser} 
and satisfy~\er{zcrtmn}, \er{ab000}.

To define $\fd^\oc(\CE,a)$, we regard $\tilde{A}^{l_1,l_2}_{i_0\dots i_\oc}$, 
$\tilde{B}^{l_1,l_2}_{j_0\dots j_{\oc+\eo-1}}$ from~\er{aser},~\er{bser} 
as abstract symbols. 
By definition, the algebra $\fd^\oc(\CE,a)$ is generated by the symbols 
$\tilde{A}^{l_1,l_2}_{i_0\dots i_\oc}$, $\tilde{B}^{l_1,l_2}_{j_0\dots j_{\oc+\eo-1}}$
for $l_1,l_2,i_0,\dots,i_\oc,j_0,\dots,j_{\oc+\eo-1}\in\zp$.
Relations for these generators are provided by equations~\er{zcrtmn}, \er{ab000}.
The details of this construction are presented below. 


\end{remark}

Consider formal power series 
\begin{gather}
\label{gasumxt}
\ga=\sum_{l_1,l_2,i_0,\dots,i_\oc\ge 0} 
(x-x_a)^{l_1} (t-t_a)^{l_2}(u_0-a_0)^{i_0}\dots(u_\oc-a_\oc)^{i_\oc}\cdot
\ga^{l_1,l_2}_{i_0\dots i_\oc},\\
\label{gbsumxt}
\gb=\sum_{l_1,l_2,j_0,\dots,j_{\oc+\eo-1}\ge 0} 
(x-x_a)^{l_1} (t-t_a)^{l_2}(u_0-a_0)^{j_0}\dots(u_{\oc+\eo-1}-a_{\oc+\eo-1})^{j_{\oc+\eo-1}}\cdot
\gb^{l_1,l_2}_{j_0\dots j_{\oc+\eo-1}},
\end{gather}
where 
\beq
\lb{gagbll}
\ga^{l_1,l_2}_{i_0\dots i_\oc},\qquad
\gb^{l_1,l_2}_{j_0\dots j_{\oc+\eo-1}},\qquad
l_1,l_2,i_0,\dots,i_\oc,j_0,\dots,j_{\oc+\eo-1}\in\zp,
\ee
are generators of a Lie algebra, which is described below.

We impose the equation
\beq
\lb{xgbtga}
D_x(\gb)-D_t(\ga)+[\ga,\gb]=0,
\ee
which is equivalent to some Lie algebraic relations for the generators~\er{gagbll}.
Also, we impose the following condition 
\beq
\lb{gagb00}
\ga^{l_1,l_2}_{0\dots 0}=\gb^{0,l_2}_{0\dots 0}=0,\qquad 
\ga^{l_1,l_2}_{i_0\dots i_{\oc}}=0,
\qquad (i_0,\dots,i_{\oc})\in\zcs_\oc,\qquad l_1,l_2\in\zp.
\ee

The Lie algebra $\fd^\oc(\CE,a)$ is defined in terms 
of generators and relations as follows. 
The algebra $\fd^\oc(\CE,a)$ is given by the generators~\er{gagbll},
relations~\er{gagb00}, and the relations arising from~\er{xgbtga}.

Note that condition~\er{gagb00} is equivalent to the following equations
\begin{gather}
\lb{pdgau}
\frac{\pd\ga}{\pd u_s}\,\,\bigg|_{u_k=a_k,\ k\ge s}=0\qquad\quad
\forall\,s\ge 1,\\
\lb{gaua0}
\ga\,\Big|_{u_k=a_k,\ k\ge 0}=0,\\
\lb{gbxua0}
\gb\,\Big|_{x=x_a,\ u_k=a_k,\ k\ge 0}=0.
\end{gather}

\begin{remark}
\lb{fdrzcr}
According to~\cite{scal13}, 
the algebra $\fd^{\oc}(\CE,a)$ is responsible for ZCRs of order $\le\oc$
in the following sense.
For any finite-dimensional matrix Lie algebra $\mg$, 
it is shown in~\cite{scal13} that 
$\mg$-valued ZCRs of order~$\le\oc$ on a neighborhood of $a\in\CE$
are classified 
(up to local gauge equivalence) by homomorphisms $\hrf\cl\fd^\oc(\CE,a)\to\mg$.
\end{remark}

Suppose that $\oc\ge 1$. 
As has been said above, 
the algebra $\fd^\oc(\CE,a)$ is given by the generators 
$\ga^{l_1,l_2}_{i_0\dots i_\oc}$, $\gb^{l_1,l_2}_{j_0\dots j_{\oc+\eo-1}}$
and the relations arising from~\er{xgbtga},~\er{gagb00}. 
Similarly, the algebra $\fd^{\oc-1}(\CE,a)$ is given by the generators 
$\hat\ga^{l_1,l_2}_{i_0\dots i_{\oc-1}}$, $\hat\gb^{l_1,l_2}_{j_0\dots j_{\oc+\eo-2}}$
and the relations arising from 
\begin{gather*}
D_x\big(\hat\gb\big)-D_t\big(\hat\ga\big)+\big[\hat\ga,\hat\gb\big]=0,\\
\hat\ga^{l_1,l_2}_{0\dots 0}=\hat\gb^{0,l_2}_{0\dots 0}=0,\qquad 
\hat\ga^{l_1,l_2}_{i_0\dots i_{\oc-1}}=0,
\qquad (i_0,\dots,i_{\oc-1})\in\zcs_{\oc-1},\qquad l_1,l_2\in\zp,
\end{gather*}
where 
\begin{gather*}
\hat\ga=\sum_{l_1,l_2,i_0,\dots,i_{\oc-1}} 
(x-x_a)^{l_1} (t-t_a)^{l_2}(u_0-a_0)^{i_0}\dots(u_{\oc-1}-a_{\oc-1})^{i_{\oc-1}}\cdot
\hat\ga^{l_1,l_2}_{i_0\dots i_{\oc-1}},\\
\hat\gb=\sum_{l_1,l_2,j_0,\dots,j_{\oc+\eo-2}} 
(x-x_a)^{l_1} (t-t_a)^{l_2}(u_0-a_0)^{j_0}\dots(u_{\oc+\eo-2}-a_{\oc+\eo-2})^{j_{\oc+\eo-2}}
\cdot\hat\gb^{l_1,l_2}_{j_0\dots j_{\oc+\eo-2}}.
\end{gather*}

This implies that the map 
\beq
\lb{fdhff}
\ga^{l_1,l_2}_{i_0\dots i_{\oc-1}i_{\oc}}\,\mapsto\,
\delta_{0,i_{\oc}}\cdot\hat\ga^{l_1,l_2}_{i_0\dots i_{\oc-1}},\qquad\quad 
\gb^{l_1,l_2}_{j_0\dots j_{\oc+\eo-2}j_{\oc+\eo-1}}\,\mapsto\,
\delta_{0,j_{\oc+\eo-1}}\cdot\hat\gb^{l_1,l_2}_{j_0\dots j_{\oc+\eo-2}}  
\ee
determines a surjective homomorphism $\fd^{\oc}(\CE,a)\to\fd^{\oc-1}(\CE,a)$.
Here $\delta_{0,i_{\oc}}$ and $\delta_{0,j_{\oc+\eo-1}}$ are the Kronecker deltas.
We denote this homomorphism by $\vf_\oc\cl\fd^\oc(\CE,a)\to\fd^{\oc-1}(\CE,a)$.

According to Remark~\ref{fdrzcr}, the algebra $\fd^{\oc}(\CE,a)$ 
is responsible for ZCRs of order $\le\oc$, 
and the algebra $\fd^{\oc-1}(\CE,a)$ is responsible for ZCRs of order $\le\oc-1$.    
The constructed homomorphism $\vf_\oc\cl\fd^\oc(\CE,a)\to\fd^{\oc-1}(\CE,a)$ 
reflects the fact that any ZCR of order $\le\oc-1$ is at the same time of order~$\le\oc$.
Thus we obtain the following sequence of surjective homomorphisms of Lie algebras
\beq
\lb{fdnn-1}
\dots\xrightarrow{\vf_{\oc+1}}
\fd^{\oc}(\CE,a)\xrightarrow{\vf_\oc}\fd^{\oc-1}(\CE,a)
\xrightarrow{\vf_{\oc-1}}\dots\xrightarrow{\vf_{2}}
\fd^1(\CE,a)\xrightarrow{\vf_{1}}\fd^0(\CE,a).
\ee

\section{Relations between $\fd^0(\CE,a)$ and the Wahlquist-Estabrook prolongation algebra}
\lb{swealg}

Consider a scalar evolution equation of the form
\begin{gather}
\label{gevxt}
u_t
=F(u_0,u_1,\dots,u_{\eo}),\qquad\quad u=u(x,t),\qquad\quad u_k=\frac{\pd^k u}{\pd x^k},\qquad\quad u_0=u. 
\end{gather}
Note that the function $F$ in~\er{gevxt} does not depend on $x$, $t$. 

Let $\CE$ be the infinite prolongation of equation~\er{gevxt}. 
Recall that $x$, $t$, $u_k$ are regarded as coordinates on the manifold $\CE$.
A point $a\in\CE$ is determined by the values of $x$, $t$, $u_k$ at $a$.
Let 
\begin{equation}
\label{pointxt}
a=(x=x_a,\,t=t_a,\,u_k=a_k)\,\in\,\CE,\qquad\qquad x_a,\,t_a,\,a_k\in\kik,\qquad k\in\zp,
\end{equation}
be a point of $\CE$.
The constants $x_a$, $t_a$, $a_k$ are the coordinates 
of the point $a\in\CE$ in the coordinate system $x$, $t$, $u_k$.

The \emph{Wahlquist-Estabrook prolongation algebra} 
of equation~\er{gevxt} at the point~\er{pointxt} 
can be defined in terms of generators and relations as follows.
Consider formal power series 
\begin{gather}
\label{wgawgbser}
\wga=\sum_{i\ge 0}(u_0-a_0)^{i}\cdot\wga_{i},\qquad\qquad
\wgb=\sum_{j_0,\dots,j_{\eo-1}\ge 0} 
(u_0-a_0)^{j_0}\dots(u_{\eo-1}-a_{\eo-1})^{j_{\eo-1}}\cdot
\wgb_{j_0\dots j_{\eo-1}},
\end{gather} 
where 
\beq
\lb{wgawgbi}
\wga_{i},\qquad \wgb_{j_0\dots j_{\eo-1}},\quad\qquad i,j_0,\dots,j_{\eo-1}\in\zp, 
\ee
are generators of a Lie algebra, which is described below.
The equation 
\beq
\lb{wxgbtga}
D_x(\wgb)-D_t(\wga)+[\wga,\wgb]=0
\ee
is equivalent to some Lie algebraic relations for~\er{wgawgbi}. 
The Wahlquist-Estabrook prolongation algebra (WE algebra for short) 
is given by the generators~\er{wgawgbi} and the relations arising from~\er{wxgbtga}.
A more detailed definition of the WE algebra is presented in~\cite{mll-2012}.
We denote this Lie algebra by $\wea$.


The right-hand side $F=F(u_0,u_1,\dots,u_{\eo})$ 
of~\er{gevxt} appears in equation~\er{wxgbtga}, 
because $F$ appears in the formula 
$D_t=\frac{\pd}{\pd t}+\sum_{k\ge 0} D_x^k(F)\frac{\pd}{\pd u_k}$ 
for the total derivative operator $D_t$.

According to Theorem~\ref{thmhfd0}, which is proved in~\cite{scal13},
the algebra $\fd^0(\ce,a)$ for equation~\er{gevxt} 
is isomorphic to some subalgebra of~$\wea$. 

\begin{theorem}[\cite{scal13}]
\lb{thmhfd0}
Let $\swe\subset\wea$ be the subalgebra generated by the elements
\beq
\lb{elmh}
(\ad\wga_0)^{k}(\wga_i),\qquad\qquad k\in\zp,\qquad i\in\zsp.
\ee
Then the map $(\ad\wga_0)^{k}(\wga_i)\,\mapsto\,k!\cdot\ga^{k,0}_i$, $k\in\zp$,
determines an isomorphism between $\swe$ and $\fd^0(\CE,a)$.

\textup{(}Note that for $k=0$ we have $(\ad\wga_0)^{0}(\wga_i)=\wga_i$, hence 
$\wga_i\in\swe$ for all $i\in\zsp$.\textup{)}
\end{theorem}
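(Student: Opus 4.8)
The plan is to exhibit the asserted map as one member of a pair of mutually inverse Lie algebra homomorphisms between $\fd^0(\CE,a)$ and $\swe$, built from an explicit gauge transformation together with the universal properties of the two algebras: homomorphisms out of $\fd^0(\CE,a)$ (resp.\ $\wea$) correspond to $a$-normal order-$\le 0$ ZCRs (resp.\ WE ZCRs) valued in the target, so each such ZCR defines a homomorphism by sending the generators to the Taylor coefficients of the connection pair. Throughout, the crucial hypothesis is that $F$ in~\er{gevxt} does not depend on $x,t$.

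First I would produce a homomorphism $\Psi\cl\fd^0(\CE,a)\to\wea$. Starting from the universal WE ZCR $(\wga,\wgb)$ valued in $\wea$, with $\wga=\wga(u_0)$, set $G=\exp\big((t-t_a)\wgb_{0\dots 0}\big)\exp\big((x-x_a)\wga_0\big)$; then $G(a)=\id$, $D_x(G)=G\,\wga_0$ and $D_t(G)\,G^{-1}=\wgb_{0\dots 0}$, so the gauge-transformed connection is $\tilde{\wga}=e^{(t-t_a)\ad\wgb_{0\dots 0}}\,e^{(x-x_a)\ad\wga_0}(\wga-\wga_0)$ and $\tilde{\wgb}=e^{(t-t_a)\ad\wgb_{0\dots 0}}\,e^{(x-x_a)\ad\wga_0}(\wgb)-\wgb_{0\dots 0}$, which one checks satisfies the normalization~\er{agd=0}--\er{agbxx0}. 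Thus $(\tilde{\wga},\tilde{\wgb})$ is an $a$-normal order-$\le 0$ ZCR valued in $\wea$, its Taylor coefficients satisfy the defining relations~\er{xgbtga},~\er{gagb00} of $\fd^0(\CE,a)$, and hence define $\Psi$ by $\Psi(\ga^{l_1,l_2}_{i_0})=\tilde{\wga}^{l_1,l_2}_{i_0}$ and $\Psi(\gb^{l_1,l_2}_{j_0\dots})=\tilde{\wgb}^{l_1,l_2}_{j_0\dots}$. Expanding $e^{(x-x_a)\ad\wga_0}=\sum_k\frac{(x-x_a)^k}{k!}(\ad\wga_0)^k$ gives in particular $\Psi(\ga^{k,0}_i)=\frac1{k!}(\ad\wga_0)^k(\wga_i)$, which are exactly the generators~\er{elmh} of $\swe$.

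Next I would show that the elements $\ga^{k,0}_i$ ($k\in\zp$, $i\in\zsp$) generate $\fd^0(\CE,a)$, so that $\Psi$ is surjective onto $\swe$. Reading the zero-curvature relation~\er{xgbtga} order by order in the jet variables $u_1,u_2,\dots$ expresses every $\gb$-coefficient through the $\ga$-coefficients and their brackets (the prolongation recursion, whose top equation is $\pd_{u_{\eo-1}}\gb=(\pd_{u_\eo}F)\,\pd_{u_0}\ga$); then reading off the part of~\er{xgbtga} that computes $\pd_t\ga$ expresses each $\ga^{l_1,l_2+1}_{i_0}$ through coefficients of smaller second index, and an induction on the $t$-degree collapses the whole family onto the $l_2=0$ tower. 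Here autonomy is used once more, through $[\pd_t,D_x]=[\pd_t,D_t]=0$, to guarantee that the reduction is compatible with the relations.

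Finally, to invert $\Psi$ I would realize the inner derivation $\ad\wga_0$ of $\swe$ as a derivation of $\fd^0(\CE,a)$. Autonomy makes the bare $t$-shift $\ga^{l_1,l_2}_{i_0}\mapsto(l_2+1)\ga^{l_1,l_2+1}_{i_0}$ (and the analogous map on the $\gb$'s) a derivation $\delta_t$, since it respects~\er{gagb00}; forming the abelian extension $\widehat{\fd}=\fd^0(\CE,a)\rtimes(\kik\,d_x\oplus\kik\,d_t)$ and reverse-gauging the universal $\fd^0$-ZCR by $\exp(-(x-x_a)d_x)\exp(-(t-t_a)d_t)$ yields a WE-type ZCR valued in $\widehat{\fd}$, hence by the universal property of $\wea$ a homomorphism $\wea\to\widehat{\fd}$ with $\wga_0\mapsto d_x$, $\wga_i\mapsto\ga^{0,0}_i$, whose restriction $\Theta\cl\swe\to\fd^0(\CE,a)$ sends $(\ad\wga_0)^k(\wga_i)$ to $\delta_x^{\,k}(\ga^{0,0}_i)$; combined with the generation result, $\Theta\circ\Psi=\id$ makes $\Psi$ injective, so $\Psi$ is an isomorphism and $\Theta=\Psi^{-1}$ is the stated map. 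The main obstacle is the $x$-direction of this last step: because the normalization~\er{agbxx0} is imposed at $x=x_a$, it is \emph{not} preserved by the bare $x$-shift $\ga^{l_1,l_2}_{i_0}\mapsto(l_1+1)\ga^{l_1+1,l_2}_{i_0}$, so $\delta_x$ must be this shift corrected by the unique infinitesimal normalizing gauge of Theorem~\ref{thnfzcr}, and the real work is to verify that these corrections leave the clean identity $\delta_x^{\,k}(\ga^{0,0}_i)=k!\,\ga^{k,0}_i$ on the $x$-tower intact. Everything else—the gauge computation defining $\Psi$ and the prolongation reduction—is routine once the autonomy of $F$ is exploited.
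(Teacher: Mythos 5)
Since the paper itself never proves Theorem~\ref{thmhfd0} (it is quoted from~\cite{scal13}), your argument can only be judged on its own terms, and its first half is correct and surely the intended mechanism: the formal gauge $G=\exp\big((t-t_a)\wgb_{0\dots 0}\big)\exp\big((x-x_a)\wga_0\big)$ turns the universal Wahlquist--Estabrook ZCR into an $a$-normal one with coefficients in $\wea$, and the presentation of $\fd^0(\CE,a)$ then yields a homomorphism $\Psi\cl\fd^0(\CE,a)\to\wea$ with $\Psi(\ga^{k,0}_i)=\tfrac{1}{k!}(\ad\wga_0)^k(\wga_i)$. However, your generation step is justified incorrectly: reading~\er{xgbtga} order by order in $u_1,u_2,\dots$ does \emph{not} express every $\gb$-coefficient through the $\ga$-coefficients and brackets; the prolongation recursion determines $\gb$ only up to its value at $u_k=a_k$, i.e.\ up to the constants $\gb^{l_1,l_2}_{0\dots 0}$. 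That such constants are generated by the $\ga$-data is false in $\wea$ itself: for equation~\er{hheq} the central generator $Y$ of Lemma~\ref{lhhzcr} and Theorem~\ref{prwe} is exactly such a constant of integration, and $Y\notin\swe$. In $\fd^0(\CE,a)$ the claim is true only because of the normalizations~\er{gagb00}: one must evaluate~\er{xgbtga} at $u_k=a_k$, use $\ga\big|_{u_k=a_k}=0$ to kill the bracket term, and use $\gb^{0,l_2}_{0\dots 0}=0$ as the base of an induction on the $x$-degree $l_1$. This is repairable, but your sketch as written would prove a false statement.

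The decisive gap is the inverse. Injectivity of $\Psi$ --- equivalently, that $(\ad\wga_0)^k(\wga_i)\mapsto k!\,\ga^{k,0}_i$ is well defined on $\swe$, which is the entire content of the theorem --- is exactly what you postpone: you concede that the bare $x$-shift violates $\gb^{0,l_2}_{0\dots 0}=0$, that $\delta_x$ must be corrected by the normalizing gauge of Theorem~\ref{thnfzcr}, and that ``the real work'' is to verify $\delta_x^{\,k}(\ga^{0,0}_i)=k!\,\ga^{k,0}_i$; that work is never carried out, so what you actually have is only a surjection of $\fd^0(\CE,a)$ onto $\swe$, leaving open that $\fd^0(\CE,a)$ is strictly larger. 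Moreover, the scaffolding you propose for the inverse is itself flawed: $d_x,d_t$ are meant to model $\wga_0$ and $\wgb_{0\dots 0}$, and evaluating~\er{xgbtga} at $u_k=a_k$ gives $[\wga_0,\wgb_{0\dots 0}]=F(a)\,\wga_1-\sum_{k}a_{k+1}\,\pd_{u_k}\wgb\big|_{u=a}$, which is nonzero in general; hence the two derivations cannot commute, and the \emph{abelian} extension $\fd^0(\CE,a)\rtimes(\kik\,d_x\oplus\kik\,d_t)$ is the wrong object (one needs $[d_x,d_t]$ to be the corresponding element of $\fd^0(\CE,a)$ rather than zero). Finally, that the reverse-gauged series is genuinely of WE type, i.e.\ independent of $x$ and $t$, is not automatic either --- it is equivalent to the very correction identities you defer. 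So the proposal settles the easy half of the theorem and correctly locates, but does not close, the hard half.
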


\begin{definition}
\lb{dfzcrwe}
Let $\bl$ be a Lie algebra. 
A \emph{formal ZCR of Wahlquist-Estabrook type with coefficients in $\bl$} 
is given by formal power series  
\begin{gather}
\lb{pqmg}
M=\sum_{i\ge 0}(u_0-a_0)^{i}\cdot M_{i},\qquad\quad
N=\sum_{j_0,\dots,j_{\eo-1}\ge 0} 
(u_0-a_0)^{j_0}\dots(u_{\eo-1}-a_{\eo-1})^{j_{\eo-1}}\cdot
N_{j_0\dots j_{\eo-1}},\\
\notag
M_{i},\,N_{j_0\dots j_{\eo-1}}\in\bl,
\end{gather}
satisfying 
\beq
\lb{pqzcr}
D_x(N)-D_t(M)+[M,N]=0.
\ee
\end{definition}

\begin{example}
\lb{efwez}
Formulas~\er{wgawgbser} and equation~\er{wxgbtga} determine 
a formal ZCR of Wahlquist-Estabrook type with coefficients in~$\wea$.
\end{example}

The next lemma follows from the definition of the WE algebra~$\wea$.
\begin{lemma}
\label{propwezcr}
Any formal ZCR of Wahlquist-Estabrook type~\er{pqmg},~\er{pqzcr}  
with coefficients in~$\bl$ 
determines a homomorphism $\wea\to\bl$ given by 
$\wga_{i}\mapsto M_{i},\ \wgb_{j_0\dots j_{\eo-1}}\mapsto N_{j_0\dots j_{\eo-1}}$. 
\end{lemma}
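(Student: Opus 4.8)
The plan is to realize $\wea$ explicitly as a quotient of a free Lie algebra and then invoke its universal property. First I would let $\frl$ be the free Lie algebra over $\kik$ on the set of abstract symbols $\wga_i$, $\wgb_{j_0\dots j_{\eo-1}}$ from~\er{wgawgbi}. Substituting the series~\er{wgawgbser} into the left-hand side of~\er{wxgbtga} and applying the operators $D_x$, $D_t$ (which, since $\wga$, $\wgb$ do not depend on $x$, $t$, act only through scalar differentiations and multiplications by $\kik$-valued functions of the $u_k$) produces a formal power series in $(u_0-a_0),(u_1-a_1),\dots$ whose coefficient of each monomial is a fixed element $P_\beta\in\frl$. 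Each $P_\beta$ is a $\kik$-linear combination of the generators and of finitely many brackets $[\wga_i,\wgb_{j_0\dots j_{\eo-1}}]$, the finiteness coming from the fact that only finitely many index combinations contribute to a given monomial. By the definition of $\wea$ recalled above, $\wea=\frl/\frid$, where $\frid\subset\frl$ is the ideal generated by the elements $P_\beta$.

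Next I would use the universal property of the free Lie algebra (von Dyck's theorem for Lie algebras): the elements $M_i,\,N_{j_0\dots j_{\eo-1}}\in\bl$ determine a unique Lie algebra homomorphism $\Phi\cl\frl\to\bl$ with $\Phi(\wga_i)=M_i$ and $\Phi(\wgb_{j_0\dots j_{\eo-1}})=N_{j_0\dots j_{\eo-1}}$. To produce the desired homomorphism $\wea\to\bl$ it then suffices to show that $\Phi$ annihilates $\frid$, i.e.\ that $\Phi(P_\beta)=0$ for every $\beta$; the map $\wea\to\bl$ will then be the factorization of $\Phi$ through $\frl/\frid$, and it is automatically unique since the images of the generators are prescribed.

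The key step, and the only place where an actual verification is needed, is the identity $\Phi(P_\beta)=Q_\beta$, where $Q_\beta\in\bl$ is the coefficient of the corresponding monomial in $D_x(N)-D_t(M)+[M,N]$. This holds because $\Phi$ is compatible, coefficient by coefficient, with all three operations entering~\er{wxgbtga}: the operators $D_x$, $D_t$ act on the $\frl$-valued series only by $\kik$-linear partial differentiations and by multiplication by scalar functions of the $u_k$, so they commute with applying the $\kik$-linear map $\Phi$ to the coefficients; and the coefficient of any monomial in $[\wga,\wgb]$ is a finite sum of brackets $[\wga_i,\wgb_{j_0\dots j_{\eo-1}}]$, which $\Phi$ sends to the corresponding finite sum of brackets $[M_i,N_{j_0\dots j_{\eo-1}}]$ precisely because $\Phi$ respects the Lie bracket. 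Hence $\Phi$ carries the monomial expansion of the left-hand side of~\er{wxgbtga} termwise to that of the left-hand side of~\er{pqzcr}. Since $M$, $N$ satisfy~\er{pqzcr}, every $Q_\beta$ vanishes, so $\Phi(P_\beta)=0$ for all $\beta$ and $\Phi$ descends to $\wea=\frl/\frid$. I expect the main (and only modest) obstacle to be this bookkeeping: one must check carefully that $D_x$ and $D_t$ really act purely on the scalar part of each monomial, and that each coefficient is a finite $\frl$-combination, so that applying the Lie homomorphism $\Phi$ coefficient-wise is legitimate and commutes with reassembling the series.
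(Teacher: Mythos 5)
Your proof is correct and is precisely the argument the paper has in mind: the paper's one-line proof ("follows from the definition of the WE algebra") refers exactly to the fact that $\wea$ is presented by the generators~\er{wgawgbi} and the relations given by the monomial coefficients of~\er{wxgbtga}, so a system $M_i$, $N_{j_0\dots j_{\eo-1}}$ satisfying~\er{pqzcr} satisfies those relations and the map on generators factors through the free Lie algebra quotient. Your careful bookkeeping (finiteness of each coefficient, commutation of $\Phi$ with the coefficient-wise action of $D_x$, $D_t$ and with the bracket) is the correct fleshing-out of that remark.
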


The following  scalar evolution equation was studied 
by A.~P.~Fordy~\cite{fordy-hh} in connection with the H\'enon-Heiles system
\beq
\lb{hheq}
u_t=u_5+(4\al-6\be)u_1u_2+(8\al-2\be)u_0u_3-20\al\be u_0^2u_1.
\ee
Here $u=u(x,t)$ is a $\kik$-valued function, 
and $\al$, $\be$ are arbitrary constants.
(In~\cite{fordy-hh} these constants are denoted by $a$, $b$, but we use 
the symbol~$a$ for a different purpose.)

We are going to present some results on the structure of the WE algebra 
and the algebra $\fd^0(\CE,a)$ for equation~\er{hheq}.
\begin{remark}
\lb{ralbe0}
If $\al=\be=0$ then \er{hheq} is the linear equation $u_t=u_{5}$.
Since we intend to study nonlinear PDEs, in what follows we suppose 
that at least one of the constants $\al$, $\be$ is nonzero.
\end{remark}

The following facts were noticed in~\cite{fordy-hh}.
\begin{itemize}
\item If $\al+\be=0$ then \er{hheq} is equivalent
to the Sawada-Kotera equation. (That is, if $\al+\be=0$ 
then \er{hheq} can be transformed to the Sawada-Kotera equation
by scaling of the variables. 
As has been said above in Remark~\ref{ralbe0}, we assume that 
at least one of the constants $\al$, $\be$ is nonzero.)
\item If $6\al+\be=0$ then \er{hheq} is equivalent
to the $5$th-order flow in the KdV hierarchy.
\item If $16\al+\be=0$ then \er{hheq} is equivalent
to the Kaup-Kupershmidt equation.
\end{itemize}
So in the cases $\al+\be=0$, $6\al+\be=0$, $16\al+\be=0$ 
equation~\er{hheq} is equivalent to a well-known integrable equation.
In this preprint we consider the case 
\beq
\lb{albe0}
\al+\be\neq 0,\qquad 6\al+\be\neq 0,\qquad 16\al+\be\neq 0.
\ee

Let $\bl$ be a Lie algebra. 
According to Definition~\ref{dfzcrwe}, 
a formal ZCR of Wahlquist-Estabrook type with coefficients in~$\bl$ 
for equation~\er{hheq} is given by formal power series  
\begin{gather}
\lb{hhmps}
M=\sum_{i\ge 0}(u_0-a_0)^{i}\cdot M_{i},\\
\lb{hhnps}
N=\sum_{j_0,j_1,j_2,j_3,j_4\ge 0} 
(u_0-a_0)^{j_0}(u_1-a_1)^{j_1}(u_2-a_2)^{j_2}(u_3-a_3)^{j_3}(u_4-a_4)^{j_4}
\cdot N_{j_0,j_1,j_2,j_3,j_4},\\
\notag
M_{i},\,N_{j_0,j_1,j_2,j_3,j_4}\in\bl,
\end{gather}
satisfying~\er{pqzcr}, where 
\beq
\lb{dthh}
D_t=\frac{\pd}{\pd t}+\sum_{k\ge 0} 
D_x^k\big(u_5+(4\al-6\be)u_1u_2+(8\al-2\be)u_0u_3-20\al\be u_0^2u_1\big)
\frac{\pd}{\pd u_k}.
\ee
\begin{lemma}
\lb{lhhzcr}
Suppose that $\al$, $\be$ obey~\er{albe0}.
Then power series~\er{hhmps},~\er{hhnps} satisfy~\er{pqzcr} with~\er{dthh} 
iff $M$, $N$ are of the form 
\beq
\lb{aa1a0}
M=A_1u_0+A_0,
\ee
\begin{multline}
\lb{baua}
N=A_1u_{4}-[M,A_1]u_{3}+(8\al-2{\be})A_1u_0u_{2}
+[M,[M,A_1]]u_{2}-2(\al+{\be})A_1u_1^2-\frac12[A_1,[A_0,A_1]]u_1^2+\\
+\big((2{\be}-8\al)[A_0,A_1]-[M,[A_1,[A_0,A_1]]]\big)u_0u_1-[M,[A_0,[A_0,A_1]]]u_1
-\Big(\frac{20}{3}{\al}{\be}A_1+\frac{2{\be}}{3}[A_1,[A_0,A_1]]\Big)u_0^3+\\
+\Big(\frac12[A_1,[A_0,[A_0,[A_0,A_1]]]]-{\be}[A_0,[A_0,A_1]]\Big)u_0^2+ [A_0,[A_0,[A_0,[A_0,A_1]]]]u_0+Y,
\end{multline}
where $A_0,A_1,Y\in\bl$ obey
\begin{gather}
\label{eq.rel.2}
[A_1,[A_1,[A_0,A_1]]]=0,\qquad\quad 
4{\al}[A_0,A_1]+[A_0,[A_1,[A_0,A_1]]]=0,\\
\lb{aaaa0}
[A_1,[A_1,[A_0,[A_0,[A_0,A_1]]]]]=0,\\
\lb{aabaa}
[A_1,[A_0,[A_0,[A_0,[A_0,A_1]]]]]
-{\be}[A_0,[A_0,[A_0,A_1]]]+\frac12[A_0,[A_1,[A_0,[A_0,[A_0,A_1]]]]]=0,\\
\lb{af4a0}
[A_1,Y]+ [A_0,[A_0,[A_0,[A_0,[A_0,A_1]]]]]=0,\\
\lb{a0f40}
[A_0,Y]=0.
\end{gather}
\end{lemma}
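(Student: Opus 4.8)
The plan is to prove the two implications separately; the substantive one is ``only if'', which is a Wahlquist--Estabrook type computation. First I would substitute the ansatz \er{hhmps}, \er{hhnps} into the zero-curvature equation \er{pqzcr} with $D_t$ as in \er{dthh}. Since $M$ depends only on $u_0$ and $N$ only on $u_0,\dots,u_4$, the left-hand side of \er{pqzcr} becomes
\[
u_1\frac{\pd N}{\pd u_0}+u_2\frac{\pd N}{\pd u_1}+u_3\frac{\pd N}{\pd u_2}+u_4\frac{\pd N}{\pd u_3}+u_5\frac{\pd N}{\pd u_4}-F\frac{\pd M}{\pd u_0}+[M,N]=0,
\]
where $F$ is the right-hand side of \er{hheq}. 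After inserting $F$ this is a polynomial in $u_1,\dots,u_5$, of degree one in $u_5$, whose coefficients are $\bl$-valued analytic functions of $u_0$; the equation holds iff each of these coefficients vanishes. I would extract the coefficient equations starting from the top monomial and working downward.

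The coefficient of $u_5$ gives $\pd N/\pd u_4=\pd M/\pd u_0$, so $N$ is affine in $u_4$ with leading term $(\pd M/\pd u_0)\,u_4$. Substituting back and collecting the coefficient of $u_4$ forces $N$ to be affine in $u_3$; the coefficient of $u_3$ then forces (for now) quadratic dependence on $u_2$; and so on. The decisive step is the coefficient of $u_2^2$, which reduces to $u_1\,\pd^3 M/\pd u_0^3+[M,\pd^2 M/\pd u_0^2]=0$. Separating the $u_1$-linear and $u_1$-free parts yields $\pd^3 M/\pd u_0^3=0$ and $[M,\pd^2 M/\pd u_0^2]=0$. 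Hence $M$ is at most quadratic in $u_0$, the second derivative $M_2:=\pd^2 M/\pd u_0^2$ is a constant element of $\bl$, and expanding $[M,M_2]=0$ shows that $M_2$ commutes with the lower Taylor coefficients of $M$.

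The hard part, and the place where the hypotheses \er{albe0} are essential, is upgrading ``$M$ is quadratic'' to ``$M$ is affine''. Continuing the descent, the coefficient of $u_1^3$ in the $u_2$-free equation produces a relation of the shape $(c_1\al+c_2\be)\,M_2=[M_2,[M,\pd M/\pd u_0]]$ with explicit constants $c_1,c_2$. Because $M_2$ commutes with the lower coefficients of $M$, the Jacobi identity forces the right-hand side to vanish, while the numerical factor works out to $-2(6\al+\be)$; since $6\al+\be\neq0$ by \er{albe0}, we conclude $M_2=0$, that is $M=A_1u_0+A_0$ as in \er{aa1a0}, with $A_1=\pd M/\pd u_0$ and $A_0$ constant. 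I expect this to be the main obstacle: it requires carrying the full $u_0$-dependence of several nested commutators through the recursion and spotting the Jacobi cancellation. I expect the remaining hypotheses $\al+\be\neq0$ and $16\al+\be\neq0$ to enter by the same mechanism at the lower stages of the descent; together with $6\al+\be\neq0$ they exclude precisely the Sawada--Kotera, fifth-order KdV, and Kaup--Kupershmidt reductions, for which the coefficient analysis would differ.

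Once $M$ is affine, the remaining coefficient equations simplify sharply. Back-substituting $\pd M/\pd u_0=A_1$ into the affine and linear coefficients already obtained expresses every $u$-dependent coefficient of $N$ as an iterated bracket of $A_0$ and $A_1$; in particular the $u_2^2$-term drops out and one recovers exactly \er{baua}, the only piece not fixed by $A_0,A_1$ being the additive constant $Y\in\bl$. Finally, the bottom of the recursion --- the coefficients of the surviving monomials $u_0^3,u_0^2,u_0,u_0^0$ and of the residual $u_1$-linear terms, which no longer determine new pieces of $N$ --- must vanish identically, and these residual identities are precisely \er{eq.rel.2}, \er{aaaa0}, \er{aabaa}, \er{af4a0}, and \er{a0f40}. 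For the converse I would simply substitute $M,N$ of the form \er{aa1a0}, \er{baua} into \er{pqzcr} and verify, using the relations \er{eq.rel.2}--\er{a0f40}, that all coefficients cancel; this is a direct check with no further obstruction.
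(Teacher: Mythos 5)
Your proposal is correct and is essentially the paper's own proof carried out in more detail: substitute into \er{pqzcr}, extract coefficients top-down to obtain $\pd^3 M/\pd u_0^3=0$ and $[M,\pd^2 M/\pd u_0^2]=0$, then the $u_1^3$-coefficient relation $-2(6\al+\be)A_2=0$ (the paper's \er{aba2}, with the bracket term killed by Jacobi exactly as you describe) eliminates the quadratic term of $M$, after which the remaining coefficient equations yield \er{baua} and the residual identities \er{eq.rel.2}--\er{a0f40} by straightforward computation. One inaccuracy in a side remark: the hypotheses $\al+\be\neq 0$ and $16\al+\be\neq 0$ never enter this descent --- the lemma uses only $6\al+\be\neq 0$ --- and those two factors are instead needed later, in the proof of Theorem~\ref{prwe}, where the derived relation \er{abab}, whose coefficient is $\frac{48}{125}(\al+\be)(6\al+\be)(16\al+\be)$, forces $[A_0,[A_0,[A_0,A_1]]]=0$.
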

\begin{proof}
Substituting~\er{hhmps},~\er{hhnps},~\er{dthh} in equation~\er{pqzcr}, we get
\begin{gather}
\lb{nukmu}
\sum_{k=0}^4u_{k+1}\frac{\pd N}{\pd u_k}
-\big(u_5+(4\al-6\be)u_1u_2+(8\al-2\be)u_0u_3-20\al\be u_0^2u_1\big)\frac{\pd M}{\pd u_0}
+[M,N]=0,\\
\notag
M=M(u_0),\qquad\quad N=N(u_0,u_1,u_2,u_3,u_4).
\end{gather}
Analyzing equation~\er{nukmu}, it is easy to obtain the equation
$\dfrac{\pd^3 M}{\pd u_0^3}=0$, 
which implies that $M$ is of the form 
\beq
\lb{maaa}
M=A_2u_0^2+A_1u_0+A_0,\qquad\quad A_0,A_1,A_2\in\bl.
\ee
Further analysis of~\er{nukmu} gives 
\begin{equation}
\lb{aba2}
-2(6\al+\be)A_2=0.
\end{equation}
Combining~\er{aba2} with~\er{albe0}, we get $A_2=0$.
Then~\er{maaa} becomes~\er{aa1a0}.
Using formula~\er{aa1a0}, one can deduce 
\er{baua}--\er{a0f40} from~\er{nukmu} by a straightforward computation.
\end{proof}

\begin{theorem}
\label{prwe}
Suppose that $\al$, $\be$ obey~\er{albe0}.
Let $\grh$ be the Lie algebra given by generators $A_0$, $A_1$, $Y$ 
and relations \er{eq.rel.2}, \er{aaaa0}, \er{aabaa}, \er{af4a0}, \er{a0f40}.
The WE algebra $\wea$ for equation~\er{hheq} is isomorphic to $\grh$.
Identifying $\wea$ with $\grh$, we can assume $A_0,A_1,Y\in\wea$.

To describe the structure of the Lie algebra $\wea\cong\grh$, 
we need to consider separately two cases: the case ${\al}\neq 0$ and the case ${\al}=0$.
\begin{itemize}
\item Suppose that ${\al}\neq 0$. Then $\wea$ is isomorphic to the direct sum of 
the $3$-dimensional simple Lie algebra $\slr$ and the $3$-dimensional abelian Lie algebra 
$\kik^3$. That is, $\wea\cong\slr\oplus\kik^3$.

The subalgebra $\slr\subset\wea$ is spanned by the elements
\beq
\lb{aaa01}
E_1=[A_0,A_1],\qquad\quad
E_2=[A_0,[A_0,A_1]],\qquad\quad
E_3=[A_1,[A_0,A_1]].
\ee
The subalgebra $\kik^3\subset\wea$ is spanned by the elements 
$Y$, $Z_0$, $Z_1$, where $Z_0$, $Z_1$ are given by
\begin{gather}
\lb{z0}
Z_0=-4{\al}A_0+[A_0,[A_0,A_1]],\\
\lb{z1}
Z_1=4{\al}A_1+[A_1,[A_0,A_1]].
\end{gather}
\item Suppose that ${\al}=0$. Then the Lie algebra $\wea$ is nilpotent, 
and $\dim\wea\le 6$.
\end{itemize}
\end{theorem}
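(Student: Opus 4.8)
The plan is to derive the whole statement from Lemma~\ref{lhhzcr} together with the universal property of the Wahlquist--Estabrook algebra recorded in Lemma~\ref{propwezcr}. By Lemma~\ref{propwezcr}, for any Lie algebra $\bl$ the formal WE-type ZCRs with coefficients in $\bl$ are the same as homomorphisms $\wea\to\bl$; by Lemma~\ref{lhhzcr}, when \er{albe0} holds such a ZCR is determined by a triple $(A_0,A_1,Y)\in\bl^3$ subject to \er{eq.rel.2}, \er{aaaa0}, \er{aabaa}, \er{af4a0}, \er{a0f40}, and every such triple arises. But a triple satisfying exactly these relations is the same as a homomorphism $\grh\to\bl$, by the definition of $\grh$ in terms of generators and relations. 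Hence $\wea$ and $\grh$ corepresent the same functor on Lie algebras, so sending the generators $A_0,A_1,Y$ of $\grh$ to the corresponding elements of $\wea$ is an isomorphism $\grh\cong\wea$. From now on I would work inside $\grh$ and use the abbreviations $E_1,E_2,E_3$ of \er{aaa01} and $Z_0,Z_1$ of \er{z0}, \er{z1}.

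Assume first that $\al\neq0$. Expanding \er{eq.rel.2} gives the two identities $[A_1,E_2]=-4\al E_1$ and $[A_0,E_3]=-4\al E_1$, and from these the Jacobi identity yields directly $[E_1,E_3]=4\al E_3$ and, using also \er{aaaa0}, $[E_2,E_3]=-16\al^2E_1$. Still using only \er{eq.rel.2} one checks $[A_0,Z_1]=[A_1,Z_1]=[A_1,Z_0]=0$, and \er{af4a0}, \er{a0f40} give $[A_0,Y]=0$; modulo the single element $(\ad A_0)^3(A_1)$ these computations exhibit $E_1,E_2,E_3$ as a copy of $\slr$ (with $E_1$ a Cartan element and $E_2,E_3$ root vectors, since $\al\neq0$) and $Z_0,Z_1,Y$ as a commuting abelian part. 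To bound the algebra from below I would write down the explicit realization $A_0\mapsto f$, $A_1\mapsto-2\al e$ in $\slr$ (for the standard basis $e,h,f$), adjusted by independent central vectors so that $Z_0,Z_1,Y$ become a basis of $\kik^3$; one verifies that all of \er{eq.rel.2}--\er{a0f40} hold --- here $(\ad f)^3=0$ on $\slr$ makes \er{aaaa0}, \er{aabaa}, \er{af4a0} automatic --- and that the images of $E_1,E_2,E_3$ span $\slr$. This produces a surjection $\grh\twoheadrightarrow\slr\oplus\kik^3$, so $\dim\grh\ge6$.

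It remains to prove $\dim\grh\le6$, i.e. that $\{E_1,E_2,E_3,Z_0,Z_1,Y\}$ spans $\grh$. Because $\al\neq0$, relations \er{z0}, \er{z1} let me solve $A_0=(4\al)^{-1}(E_2-Z_0)$ and $A_1=(4\al)^{-1}(Z_1-E_3)$, so the generators lie in the span $V$ of these six elements and it suffices to show $V$ is a subalgebra. Every bracket among the six reduces into $V$ as soon as one knows that the only overflow term, $[A_0,E_2]=(\ad A_0)^3(A_1)$, vanishes: indeed $[A_0,Z_0]=[A_0,E_2]$, so $Z_0$ is central precisely when $(\ad A_0)^3(A_1)=0$, and then $(\ad A_0)^k(A_1)=0$ for all $k\ge3$, $[E_1,E_2]=-4\al E_2$, and $Y$ is central as well. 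Proving $(\ad A_0)^3(A_1)=0$ is the crux, and the step I expect to be the main obstacle: the linear relations by themselves only yield, through \er{aabaa}, the eigenvalue identity $[E_1,(\ad A_0)^3(A_1)]=-\tfrac{2}{5}(6\al+\be)\,(\ad A_0)^3(A_1)$, and one must combine this with \er{aaaa0}, \er{aabaa}, \er{af4a0} and the non-degeneracy conditions \er{albe0} --- which guarantee $6\al+\be\neq0$ and $16\al+\be\neq0$ and so rule out the resonant eigenvalues $0$ and $4\al$ --- in order to force this element to be zero. Once $\dim\grh\le6$ is secured, comparison with the surjection of the previous paragraph gives $\grh\cong\slr\oplus\kik^3$.

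Finally, suppose $\al=0$. Then the second identity in \er{eq.rel.2} degenerates to $[A_0,E_3]=0$ and the brackets computed above collapse to $[E_1,E_3]=[E_2,E_3]=0$ and $[A_1,E_2]=0$, so no semisimple part survives. The same reduction as before --- now with $E_1$ acting nilpotently --- shows that $\grh$ is spanned by at most six iterated brackets of $A_0,A_1,Y$, and \er{aaaa0}, \er{aabaa}, \er{af4a0}, \er{a0f40} are exactly what is needed to see that every sufficiently long bracket vanishes; hence the lower central series of $\grh$ terminates, giving that $\grh$ is nilpotent with $\dim\grh\le6$. As in the case $\al\neq0$, the delicate point is the termination of the tower $(\ad A_0)^k(A_1)$, which I would again handle using \er{aabaa} and \er{af4a0}.
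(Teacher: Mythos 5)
Your overall architecture matches the paper's proof: the identification $\wea\cong\grh$ via the universal property (Lemmas~\ref{propwezcr} and~\ref{lhhzcr}), the explicit surjection onto $\slr\oplus\kik^3$ built from a realization by first-order operators plus central vectors, and the reduction of the upper bound $\dim\wea\le 6$ to the single identity $(\ad A_0)^3(A_1)=[A_0,[A_0,[A_0,A_1]]]=0$. However, that identity (relation~\er{a0001} in the paper) is the computational core of the entire theorem, and you leave it unproven: you explicitly flag it as ``the main obstacle'' and offer only a sketched non-resonance strategy. In the paper this step occupies the long chain of computations from~\er{1001} to~\er{abab}, whose output is the identity
\begin{equation*}
\frac{48}{125}(\al+\be)(6\al+\be)(16\al+\be)\,[A_0,[A_0,[A_0,A_1]]]=0 ,
\end{equation*}
and the derivation makes essential use of the $Y$-relations \er{af4a0}, \er{a0f40} (via $[Y,A_0]=0$, $[Y,A_1]=-(\ad A_0)^5(A_1)$ and the bracket $[Y,Z_1]$), which your sketch never engages. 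Note that the three factors are exactly the excluded integrable cases in~\er{albe0}; this is why that hypothesis is precisely what is needed.

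Moreover, the strategy you sketch cannot be completed as stated. Your eigenvalue identity $[E_1,(\ad A_0)^3(A_1)]=-\tfrac{2}{5}(6\al+\be)(\ad A_0)^3(A_1)$ is correct (it follows from~\er{aabaa}), but the non-resonance argument fails on two counts. First, before the structure theorem is proved, $\grh$ is an abstract finitely presented Lie algebra that could a priori be infinite-dimensional, so nothing constrains the spectrum of $\ad E_1$ to lie in $\{0,\pm 4\al\}$; assuming that spectrum is circular. Second, even granting it, the resonant case is not excluded by~\er{albe0}: the eigenvalue equals $-4\al$ exactly when $\be=4\al$, and (for example) $\al=1$, $\be=4$ satisfies all three conditions of~\er{albe0}. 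So for these admissible parameters your argument gives no contradiction, while the paper's identity~\er{abab} still forces $(\ad A_0)^3(A_1)=0$. The same gap propagates to your treatment of the case $\al=0$, where you again defer the termination of the tower $(\ad A_0)^k(A_1)$ to the same unproved mechanism; in the paper, \er{a0001} is established uniformly under~\er{albe0} (with $\al=0$, $\be\neq 0$ the factor $\be^3$ is nonzero), and only then does the spanning set~\er{el6} and the nilpotency argument go through.
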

\begin{proof}
By Lemma~\ref{lhhzcr}, formulas~\er{aa1a0},~\er{baua} determine 
a formal ZCR of Wahlquist-Estabrook type with coefficients in~$\grh$.
By Lemma~\ref{propwezcr}, this gives a homomorphism $\wea\to\grh$.

Lemma~\ref{lhhzcr} implies that, for any Lie algebra $\bl$, 
any formal ZCR of Wahlquist-Estabrook type with coefficients in~$\bl$ 
gives a homomorphism $\grh\to\bl$.
Applying this to the 
formal ZCR of Wahlquist-Estabrook type with coefficients in~$\wea$ 
described in Example~\ref{efwez}, we get a homomorphism $\grh\to\wea$.

It is easily seen that 
the constructed homomorphisms $\wea\to\grh$ and $\grh\to\wea$ are inverse 
to each other. 
Hence we can identify $\wea$ with $\grh$ and assume $A_0,A_1,Y\in\wea$.

To prove the other statements of the theorem,
we need to deduce some consequences 
from relations \er{eq.rel.2}, \er{aaaa0}, \er{aabaa}, \er{af4a0}, \er{a0f40}.

Using~\er{eq.rel.2} and the Jacobi identity, we obtain
\begin{gather}
\lb{1001}
[A_1,[A_0,[A_0,A_1]]]=[A_0,[A_1,[A_0,A_1]]]=-4{\al}[A_0,A_1],\\
\lb{10001}
[A_1,[A_0,[A_0,[A_0,A_1]]]]=[[A_1,A_0],[A_0,[A_0,A_1]]]-4{\al}[A_0,[A_0,A_1]],
\end{gather}
\begin{multline}
\lb{100001}
[A_1,[A_0,[A_0,[A_0,[A_0,A_1]]]]]=\\
=2[A_0,[[A_1,A_0],[A_0,[A_0,A_1]]]]
-4{\al}[A_0,[A_0,[A_0,A_1]]]=\frac45({\al}+{\be})[A_0,[A_0,[A_0,A_1]]].
\end{multline}
Using the Jacobi identity and the obtained relations, one gets
\begin{multline}
\lb{1000001}
[A_1,[A_0,[A_0,[A_0,[A_0,[A_0,A_1]]]]]]=\\
=[[A_1,A_0],[A_0,[A_0,[A_0,[A_0,A_1]]]]]+[A_0,[A_1,[A_0,[A_0,[A_0,[A_0,A_1]]]]]]=\\
=[[A_0,[A_0,A_1]],[A_0,[A_0,[A_0,A_1]]]]-[A_0,[[A_0,A_1],[A_0,[A_0,[A_0,A_1]]]]]
+[A_0,[A_1,[A_0,[A_0,[A_0,[A_0,A_1]]]]]]=\\
=[[A_0,[A_0,A_1]],[A_0,[A_0,[A_0,A_1]]]]-[A_0,[A_0,[[A_0,A_1],[A_0,[A_0,A_1]]]]]
+[A_0,[A_1,[A_0,[A_0,[A_0,[A_0,A_1]]]]]]=\\
=[[A_0,[A_0,A_1]],[A_0,[A_0,[A_0,A_1]]]]
-3[A_0,[A_0,[[A_0,A_1],[A_0,[A_0,A_1]]]]]-4{\al}[A_0,[A_0,[A_0,[A_0,A_1]]]]=\\
=[[A_0,[A_0,A_1]],[A_0,[A_0,[A_0,A_1]]]]
+\frac65(6{\al}+{\be})[A_0,[A_0,[A_0,[A_0,A_1]]]]-4{\al}[A_0,[A_0,[A_0,[A_0,A_1]]]]=\\
=[[A_0,[A_0,A_1]],[A_0,[A_0,[A_0,A_1]]]]
+(\frac{16}{5}{\al}+\frac65{\be})[A_0,[A_0,[A_0,[A_0,A_1]]]].
\end{multline}

Relations~\er{eq.rel.2} imply
\beq
\lb{a01z1}
[A_1,Z_1]=0,\qquad\quad [A_0,Z_1]=0,
\ee
where $Z_1$ is given by~\er{z1}.

Let $\mh\subset\wea$ be the subalgebra generated by $A_0$, $A_1$.
Then \er{a01z1} yields
\beq
\lb{z1c}
[Z_1,C]=\big[4{\al}A_1+[A_1,[A_0,A_1]],\,C\big]=0\qquad\quad\forall\,C\in\mh.
\ee
From~\er{z1c} we get 
\beq
\lb{aaac}
[[A_1,[A_1,A_0]],C]=4{\al}[A_1,C],\qquad\quad\forall\,C\in\mh.
\ee

Using~\er{10001} and the Jacobi identity, 
we can rewrite~\er{aabaa} as follows
\begin{multline}
\lb{z52}
0=[A_1,[A_0,[A_0,[A_0,[A_0,A_1]]]]]
-{\be}[A_0,[A_0,[A_0,A_1]]]
+ \frac12[A_0,[A_1,[A_0,[A_0,[A_0,A_1]]]]]=\\
=[[A_1,A_0],[A_0,[A_0,[A_0,A_1]]]]+\frac32[A_0,[A_1,[A_0,[A_0,[A_0,A_1]]]]]
-{\be}[A_0,[A_0,[A_0,A_1]]]=\\
=[A_0,[[A_1,A_0],[A_0,[A_0,A_1]]]]+\frac32[A_0,[[A_1,A_0],[A_0,[A_0,A_1]]]]
-(6{\al}+{\be})[A_0,[A_0,[A_0,A_1]]]=\\
=\Big[A_0,\,\frac52[[A_1,A_0],[A_0,[A_0,A_1]]]-(6{\al}+{\be})[A_0,[A_0,A_1]]
\Big].
\end{multline}
Using~\er{aaac} and \er{1001}, we get
\begin{multline}
\lb{z416}
\Big[A_1,\,\frac52[[A_1,A_0],[A_0,[A_0,A_1]]]-(6{\al}+{\be})[A_0,[A_0,A_1]]\Big]=\\
=\frac52[[A_1,[A_1,A_0]],[A_0,[A_0,A_1]]]+\frac52[[A_1,A_0],[A_1,[A_0,[A_0,A_1]]]]
-(6{\al}+{\be})[A_1,[A_0,[A_0,A_1]]]=\\
=10{\al}[A_1,[A_0,[A_0,A_1]]]+(24{\al}^2+4{\al}{\be})[A_0,A_1]=(4{\al}{\be}-16{\al}^2)[A_0,A_1].
\end{multline}
Set 
\beq
\lb{z052}
\tilde Z=\frac52[[A_1,A_0],[A_0,[A_0,A_1]]]-(6{\al}+{\be})[A_0,[A_0,A_1]]+(4{\al}{\be}-16{\al}^2)A_0.
\ee
Relations~\er{z52},~\er{z416} imply 
\beq
\lb{a01z}
[A_0,\tilde Z]=0,\qquad\quad [A_1,\tilde Z]=0.
\ee
Since the algebra $\mh$ is generated by $A_0$, $A_1$, 
relations~\er{a01z} yield 
\beq
\lb{cz0}
[C,\tilde Z]=\Big[C,\,
\frac52[[A_1,A_0],[A_0,[A_0,A_1]]]-(6{\al}+{\be})[A_0,[A_0,A_1]]+(4{\al}{\be}-16{\al}^2)A_0\Big]=0\qquad\forall\,C\in\mh.
\ee
From~\er{cz0} we get 
\beq
\lb{caaa}
\big[C,\,[[A_1,A_0],[A_0,[A_0,A_1]]]\big]=
\Big[C,\,
\frac25(6{\al}+{\be})[A_0,[A_0,A_1]]+\frac25(16{\al}^2-4{\al}{\be})A_0\Big]\qquad\forall\,C\in\mh.
\ee

From~\er{a0f40},~\er{af4a0} one has 
\beq
\lb{f4a0a1}
[Y,A_0]=0,\qquad\quad
[Y,A_1]=[A_0,[A_0,[A_0,[A_0,[A_0,A_1]]]]].
\ee
Using \er{f4a0a1} and the Jacobi identity, we obtain
\begin{multline}
\lb{f4z1l}
[Y,Z_1]=[Y,4{\al}A_1+[A_1,[A_0,A_1]]]=\\
=4{\al}[A_0,[A_0,[A_0,[A_0,[A_0,A_1]]]]]+[[A_0,[A_0,[A_0,[A_0,[A_0,A_1]]]]],[A_0,A_1]]]+\\
+[A_1,[A_0,[A_0,[A_0,[A_0,[A_0,[A_0,A_1]]]]]]]]=\\
=4{\al}[A_0,[A_0,[A_0,[A_0,[A_0,A_1]]]]]-[[A_0,A_1],[A_0,[A_0,[A_0,[A_0,[A_0,A_1]]]]]]+\\
+[[A_1,A_0],[A_0,[A_0,[A_0,[A_0,[A_0,A_1]]]]]]
+[A_0,[A_1,[A_0,[A_0,[A_0,[A_0,[A_0,A_1]]]]]]]]=\\
=4{\al}[A_0,[A_0,[A_0,[A_0,[A_0,A_1]]]]]-2[[A_0,A_1],[A_0,[A_0,[A_0,[A_0,[A_0,A_1]]]]]]+\\
+[A_0,[A_1,[A_0,[A_0,[A_0,[A_0,[A_0,A_1]]]]]]]]=\\
=4{\al}[A_0,[A_0,[A_0,[A_0,[A_0,A_1]]]]]-2[A_0,[[A_0,A_1],[A_0,[A_0,[A_0,[A_0,A_1]]]]]]+\\
+2[A_0,[A_0,[A_0,A_1]],[A_0,[A_0,[A_0,A_1]]]]
+[A_0,[A_1,[A_0,[A_0,[A_0,[A_0,[A_0,A_1]]]]]]]=\\
=4{\al}[A_0,[A_0,[A_0,[A_0,[A_0,A_1]]]]]+2[A_0,[[A_0,[A_0,A_1]],[A_0,[A_0,[A_0,A_1]]]]]+\\
-2[A_0,[A_0,[A_0,[[A_0,A_1],[A_0,[A_0,A_1]]]]]]
+2[A_0,[[A_0,[A_0,A_1]],[A_0,[A_0,[A_0,A_1]]]]]+\\
+[A_0,[A_1,[A_0,[A_0,[A_0,[A_0,[A_0,A_1]]]]]]].
\end{multline}
Substituting~\er{1000001} in the last term of~\er{f4z1l}, one gets
\begin{multline}
\lb{f4z1s}
[Y,Z_1]=
4{\al}[A_0,[A_0,[A_0,[A_0,[A_0,A_1]]]]]+2[A_0,[[A_0,[A_0,A_1]],[A_0,[A_0,[A_0,A_1]]]]]+\\
-2[A_0,[A_0,[A_0,[[A_0,A_1],[A_0,[A_0,A_1]]]]]]
+2[A_0,[[A_0,[A_0,A_1]],[A_0,[A_0,[A_0,A_1]]]]]+\\
+[A_0,[[A_0,[A_0,A_1]],[A_0,[A_0,[A_0,A_1]]]]]
-3[A_0,[A_0,[A_0,[[A_0,A_1],[A_0,[A_0,A_1]]]]]]\\
-4{\al}[A_0,[A_0,[A_0,[A_0,[A_0,A_1]]]]]=\\
=5[A_0,[[A_0,[A_0,A_1]],[A_0,[A_0,[A_0,A_1]]]]]
-5[A_0,[A_0,[A_0,[[A_0,A_1],[A_0,[A_0,A_1]]]]]]=\\
=5[[A_0,[A_0,A_1]],[A_0,[A_0,[A_0,[A_0,A_1]]]]]
-5[A_0,[A_0,[A_0,[[A_0,A_1],[A_0,[A_0,A_1]]]]]].
\end{multline}

Using~\er{z1c},~\er{f4a0a1}, we obtain
\beq
[A_1,[Y,Z_1]]=[[A_1,Y],Z_1]+[Y,[A_1,Z_1]]
=-[[A_0,[A_0,[A_0,[A_0,[A_0,A_1]]]]],Z_1]+[Y,[A_1,Z_1]]=0.
\ee
Since $[A_1,[Y,Z_1]]=0$, applying $\ad A_1$ to~\er{f4z1s}, we get
\beq
\lb{a1aaa}
[A_1,[[A_0,[A_0,A_1]],[A_0,[A_0,[A_0,[A_0,A_1]]]]]]-
[A_1,[A_0,[A_0,[A_0,[[A_0,A_1],[A_0,[A_0,A_1]]]]]]]=0.
\ee
Let us simplify the left-hand side  of~\er{a1aaa}.
Using \er{caaa}, \er{1001}, \er{100001}, \er{1000001}, 
and the Jacobi identity, we obtain
\begin{multline}
\lb{lhs}
[A_1,[[A_0,[A_0,A_1]],[A_0,[A_0,[A_0,[A_0,A_1]]]]]]=\\
=-4{\al}[[A_0,A_1],[A_0,[A_0,[A_0,[A_0,A_1]]]]]
+[[A_0,[A_0,A_1]],[A_1,[A_0,[A_0,[A_0,[A_0,A_1]]]]]]
=\\
=-4{\al}[[A_0,A_1],[A_0,[A_0,[A_0,[A_0,A_1]]]]]
+\frac45({\al}+{\be})[[A_0,[A_0,A_1]],[A_0,[A_0,[A_0,A_1]]]],
\end{multline}
\begin{multline}
\lb{rhs}
[A_1,[A_0,[A_0,[A_0,[[A_0,A_1],[A_0,[A_0,A_1]]]]]]]=-\frac25(6{\al}+{\be})
[A_1,[A_0,[A_0,[A_0,[A_0,[A_0,A_1]]]]]]=\\
=-\frac25(6{\al}+{\be})\Big([[A_0,[A_0,A_1]],[A_0,[A_0,[A_0,A_1]]]]
+(\frac{16}{5}{\al}+\frac65{\be})[A_0,[A_0,[A_0,[A_0,A_1]]]]\Big).
\end{multline}
Substituting \er{lhs}, \er{rhs} in \er{a1aaa}, one gets the relation
\begin{multline}
\lb{abaaa}
-4{\al}[[A_0,A_1],[A_0,[A_0,[A_0,[A_0,A_1]]]]]
+(\frac{16}{5}{\al}+\frac65 {\be})[[A_0,[A_0,A_1]],[A_0,[A_0,[A_0,A_1]]]]+\\
+\frac{4}{25}(6{\al}+{\be})(8{\al}+3{\be})[A_0,[A_0,[A_0,[A_0,A_1]]]]=0.
\end{multline}

Applying $\ad A_1$ to~\er{abaaa}, we obtain
\begin{multline}
\lb{a1aaab}
-4{\al}[A_1,[[A_0,A_1],[A_0,[A_0,[A_0,[A_0,A_1]]]]]]
+(\frac{16}{5}{\al}+\frac65 {\be})[A_1,[[A_0,[A_0,A_1]],[A_0,[A_0,[A_0,A_1]]]]]+\\
+\frac{4}{25}(6{\al}+{\be})(8{\al}+3{\be})[A_1,[A_0,[A_0,[A_0,[A_0,A_1]]]]]=0.
\end{multline}
Let us simplify the left-hand side  of~\er{a1aaab}.
Using~\er{aaac}, \er{caaa}, \er{1001}, \er{10001}, 
and the Jacobi identity, we get
\begin{multline}
\lb{a1a0100}
[A_1,[[A_0,A_1],[A_0,[A_0,[A_0,[A_0,A_1]]]]]]=
-4{\al}[A_1,[A_0,[A_0,[A_0,[A_0,A_1]]]]]+\\
+[[A_0,A_1],[A_1,[A_0,[A_0,[A_0,[A_0,A_1]]]]]]=\\
=-\frac{16}5{\al}({\al}+{\be})[A_0,[A_0,[A_0,A_1]]]+\frac45({\al}+{\be})[[A_0,A_1],[A_0,[A_0,[A_0,A_1]]]]=\\
=-\frac{16}5{\al}({\al}+{\be})[A_0,[A_0,[A_0,A_1]]]+\frac45({\al}+{\be})[A_0,[[A_0,A_1],[A_0,[A_0,A_1]]]]=\\
=-\frac{16}5{\al}({\al}+{\be})[A_0,[A_0,[A_0,A_1]]]-\frac{8}{25}({\al}+{\be})(6{\al}+{\be})[A_0,[A_0,[A_0,A_1]]].
\end{multline}
\begin{multline}
\lb{a1aall}
[A_1,[[A_0,[A_0,A_1]],[A_0,[A_0,[A_0,A_1]]]]]=
-4{\al}[[A_0,A_1],[A_0,[A_0,[A_0,A_1]]]]+\\
+[[A_0,[A_0,A_1]],[A_1,[A_0,[A_0,[A_0,A_1]]]]]=\\
=-4{\al}[A_0,[[A_0,A_1],[A_0,[A_0,A_1]]]]+
[[A_0,[A_0,A_1]],[[A_1,A_0],[A_0,[A_0,A_1]]]-4{\al}[A_0,[A_0,A_1]]]=\\
=-4{\al}[A_0,[[A_0,A_1],[A_0,[A_0,A_1]]]]
-\frac25(16{\al}^2-4{\al}{\be})[A_0,[A_0,[A_0,A_1]]]=\\
=\frac85{\al}(6{\al}+{\be})[A_0,[A_0,[A_0,A_1]]]-\frac25(16{\al}^2-4{\al}{\be})[A_0,[A_0,[A_0,A_1]]].
\end{multline}
Substituting \er{a1a0100}, \er{a1aall}, \er{100001} in \er{a1aaab}, one obtains
\beq
\lb{abab}
\frac{48}{125}({\al}+{\be})(6{\al}+{\be})(16{\al}+{\be})[A_0,[A_0,[A_0,A_1]]]=0.
\ee

From~\er{albe0} and~\er{abab} it follows that
\beq
\lb{a0001}
[A_0,[A_0,[A_0,A_1]]]=0.
\ee
From~\er{f4a0a1},~\er{a0001} one gets 
\beq
\lb{f4a0a10}
[Y,A_0]=0,\qquad\quad [Y,A_1]=0.
\ee
Since the algebra $\wea\cong\grh$ is generated by $A_0$, $A_1$, $Y$, 
relations~\er{f4a0a10} yield 
\beq
\lb{f4cwe}
[Y,C]=0\qquad\quad \forall\,C\in\wea.
\ee

Relations~\er{eq.rel.2},~\er{a0001} imply
\beq
\lb{a01z0}
[A_1,Z_0]=0,\qquad\quad [A_0,Z_0]=0,
\ee
where $Z_0$ is given by~\er{z0}. From~\er{f4cwe} we obtain
\beq
\lb{f4z01}
[Y,Z_0]=0,\qquad\quad [Y,Z_1]=0,
\ee
where $Z_1$ is given by~\er{z1}.
Since the algebra $\wea\cong\grh$ is generated by $A_0$, $A_1$, $Y$,
relations~\er{a01z1}, \er{f4cwe}, \er{a01z0}, \er{f4z01} yield 
\beq
\lb{f4zzc}
[Y,C]=0,\qquad [Z_0,C]=0,\qquad [Z_1,C]=0\qquad\quad
\forall\,C\in\wea.
\ee
Therefore,
\beq
\lb{yz0z1}
\text{$Y$, $Z_0$, $Z_1$ belong to the center of the Lie algebra $\wea$.}
\ee

\begin{lemma}
\lb{lg3}
Let $\mg\subset\wea$ be the vector subspace spanned 
by the elements $E_1$, $E_2$, $E_3$ given by~\er{aaa01}.
Then $\mg$ is a Lie subalgebra of $\wea$.
\end{lemma}
\begin{proof}
Using relations~\er{eq.rel.2}, \er{1001}, \er{a0001} and the Jacobi identity,
one gets
\begin{multline}
\lb{e21}
[E_2,E_1]=[[A_0,[A_0,A_1]],[A_0,A_1]]=[A_0,[[A_0,[A_0,A_1]],A_1]]=\\
=-[A_0,[A_1,[A_0,[A_0,A_1]]]]=4\al[A_0,[A_0,A_1]]=4{\al}E_2,
\end{multline}
\beq
\lb{e31}
[E_3,E_1]=[[A_1,[A_0,A_1]],[A_0,A_1]]=[A_1,[A_0,[A_1,[A_0,A_1]]]
=-4{\al}[A_1,[A_0,A_1]]=-4{\al}E_3,
\ee
\begin{multline}
\lb{e32}
[E_3,E_2]=[[A_1,[A_0,A_1]],[A_0,[A_0,A_1]]]=[A_0,[[A_1,[A_0,A_1]],[A_0,A_1]]]=\\
=[A_0,[A_1,[A_0,[[A_1,[A_0,A_1]]]]]]=-4{\al}[A_0,[A_1,[A_0,A_1]]]=16{\al}^2E_1.
\end{multline}
\end{proof}

Now we continue the proof of Theorem~\ref{prwe}.
Relations~\er{eq.rel.2}, \er{1001}, \er{a0001}, \er{f4cwe} imply that
the algebra $\wea$ is equal to the linear span of the elements 
\beq
\lb{el6}
A_0,\qquad A_1,\qquad Y,\qquad [A_0,A_1],\qquad
[A_0,[A_0,A_1]],\qquad [A_1,[A_0,A_1]].
\ee
Therefore, for any $\al\in\kik$, we have $\dim\wea\le 6$.
Now we are going to consider separately the case ${\al}\neq 0$ 
and the case ${\al}=0$.

\textbf{The case ${\al}\neq 0$.}

Consider the space $\kik$ with coordinate $w$. 
Let $\bl$ be the $3$-dimensional Lie algebra spanned 
by the following vector fields 
\beq
\notag
\frac{\pd}{\pd w},\qquad\quad
w\frac{\pd}{\pd w},\qquad\quad
w^2\frac{\pd}{\pd w}
\ee
on $\kik$. It is well known that $\bl$ is isomorphic to $\slr$.

Consider the following elements of $\bl$
\beq
\tilde{A}_0=-2{\al}w^2\frac{\pd}{\pd w},\qquad\quad 
\tilde{A}_1=-\frac{\pd}{\pd w},\qquad\quad 
\tilde{Y}=0.
\ee
Recall that the Lie algebra $\wea\cong\grh$ 
is given by the generators $A_0,A_1,Y\in\wea$ and 
relations \er{eq.rel.2}, \er{aaaa0}, \er{aabaa}, \er{af4a0}, \er{a0f40}.
The vector fields $\tilde{A}_0,\tilde{A}_1,\tilde{Y}\in\bl$ 
satisfy \er{eq.rel.2}, \er{aaaa0}, \er{aabaa}, \er{af4a0}, \er{a0f40}.
Therefore, we can consider the homomorphism 
\begin{gather}
\vf\cl\wea\to\bl,\\
\lb{vfat}
\vf(A_0)=\tilde{A}_0=-2{\al}w^2\frac{\pd}{\pd w},\qquad
\vf(A_1)=\tilde{A}_1=-\frac{\pd}{\pd w},\qquad
\vf(Y)=\tilde{Y}=0.
\end{gather}

Let $Q_0$, $Q_1$, $Q_2$ be a basis of the abelian Lie algebra $\kik^3$.
So $[Q_i,Q_j]=0$ for all $i,j=0,1,2$.
Set 
\beq
\hat{A}_0=Q_0,\qquad\quad 
\hat{A}_1=Q_1,\qquad\quad 
\hat{Y}=Q_2.
\ee
The elements $\hat{A}_0,\hat{A}_1,\hat{Y}\in\kik^3$ 
satisfy \er{eq.rel.2}, \er{aaaa0}, \er{aabaa}, \er{af4a0}, \er{a0f40}, 
because $[\hat{A}_0,\hat{A}_1]=0$, 
$[\hat{A}_0,\hat{Y}]=0$, $[\hat{A}_1,\hat{Y}]=0$.
Therefore, we have the homomorphism 
\begin{gather}
\psi\cl\wea\to\kik^3,\\
\lb{psiat}
\psi(A_0)=\hat{A}_0=Q_0,\qquad
\psi(A_1)=\hat{A}_1=Q_1,\qquad
\psi(Y)=\hat{Y}=Q_2.
\end{gather}

Consider also the homomorphism 
\beq
\lb{rhoc}
\rho\cl\wea\to\bl\oplus\kik^3,\qquad\quad 
\rho(C)=\vf(C)+\psi(C),\qquad\quad C\in\wea.
\ee
Using~\er{vfat},~\er{psiat},~\er{rhoc}, we get
\begin{gather}
\lb{rcaaa}
\rho([A_0,A_1])=-4{\al}w\frac{\pd}{\pd w},\qquad
\rho([A_0,[A_0,A_1]])=-8{\al}^2w^2\frac{\pd}{\pd w},\qquad
\rho([A_1,[A_0,A_1]])=4{\al}\frac{\pd}{\pd w},\\
\lb{raat}
\rho(Y)=Q_2,\qquad \rho(Z_0)=−4{\al}Q_0,\qquad 
\rho(Z_1)=4{\al}Q_1,
\end{gather}
where $Z_0$, $Z_1$ are given by~\er{z0},~\er{z1}.

As we assume ${\al}\neq 0$, formulas~\er{rcaaa},~\er{raat} 
imply that the homomorphism~\er{rhoc} is surjective.
Since $\dim\big(\bl\oplus\kik^3\big)=6$ and $\dim\wea\le 6$, 
we see that the homomorphism~\er{rhoc} is an isomorphism.
Then, as $\bl\cong\slr$, we obtain
\beq
\lb{weslr}
\wea\cong\bl\oplus\kik^3\cong\slr\oplus\kik^3.
\ee

Property~\er{yz0z1}, Lemma~\ref{lg3}, 
and formulas~\er{rcaaa},~\er{raat} imply that 
the subalgebra 
$$
\slr\subset\wea\cong\slr\oplus\kik^3
$$ 
is spanned by the elements~\er{aaa01},
and the subalgebra 
$$
\kik^3\subset\wea\cong\slr\oplus\kik^3
$$
is spanned by the elements $Y$, $Z_0$, $Z_1$.

\textbf{The case ${\al}=0$.}

For any vector subspace $V\subset\wea$, 
we can consider the vector subspace $[\wea,V]\subset\wea$ spanned 
by the elements of the form $[A,B]$, where $A\in\wea$ and $B\in V$.

As has been shown above, 
the algebra $\wea$ is equal to the linear span of the elements~\er{el6}.
Combining this fact with relations~\er{eq.rel.2}, \er{1001}, \er{a0001}, 
\er{f4cwe}, \er{e21}, \er{e31}, \er{e32} 
and the assumption ${\al}=0$, we get the following.
\begin{itemize}
\item The subalgebra $\wea^1=[\wea,\wea]\subset\wea$ 
is equal to the linear span of the elements 
\beq
\notag
[A_0,A_1],\qquad [A_0,[A_0,A_1]],\qquad [A_1,[A_0,A_1]].
\ee
\item The subalgebra $\wea^2=[\wea,\wea^1]\subset\wea$ 
is equal to the linear span of the elements
\beq
\notag
[A_0,[A_0,A_1]],\qquad [A_1,[A_0,A_1]].
\ee
\item One has $[\wea,\wea^2]=0$, hence $\wea$ is nilpotent.
\end{itemize}
\end{proof}

\begin{theorem}
\lb{thhf0}
Let $\CE$ be the infinite prolongation of equation~\er{hheq}. 
Let $a\in\CE$. Then one has the following.
\begin{itemize}
\item If $\al$, $\be$ satisfy~\er{albe0} and $\al\neq 0$, 
then the algebra $\fd^0(\CE,a)$ is isomorphic to 
the direct sum of the $3$-dimensional simple Lie algebra $\msl_2(\kik)$ 
and an abelian Lie algebra of dimension~$\le 3$.
\item If $\al=0$ and $\be\neq 0$, the Lie algebra  
$\fd^0(\CE,a)$ is nilpotent and is of dimension~$\le 6$.
\end{itemize}
\end{theorem}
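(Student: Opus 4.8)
The plan is to identify $\fd^0(\CE,a)$ with a subalgebra of the Wahlquist--Estabrook algebra $\wea$ computed in Theorem~\ref{prwe}, and then to read off its structure. By Theorem~\ref{thmhfd0}, $\fd^0(\CE,a)$ is isomorphic to the subalgebra $\swe\subset\wea$ generated by the elements $(\ad\wga_0)^k(\wga_i)$ with $k\in\zp$, $i\in\zsp$. First I would pin these generators down for equation~\er{hheq}. By Lemma~\ref{lhhzcr} we have $\wga=M=A_1u_0+A_0$, so expanding $M$ in powers of $u_0-a_0$ yields $\wga_1=A_1$, $\wga_0=a_0A_1+A_0$, and $\wga_i=0$ for all $i\ge 2$. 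Consequently only $i=1$ contributes, and $\swe$ is generated by the single chain $(\ad\wga_0)^k(A_1)$, $k\in\zp$.

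In the case $\al\neq 0$ I would invoke the isomorphism $\wea\cong\slr\oplus\kik^3$ of Theorem~\ref{prwe}, in which the simple summand $\slr$ is spanned by $E_1,E_2,E_3$ from~\er{aaa01} and the central summand $\kik^3$ is spanned by $Y,Z_0,Z_1$. Computing the low-order terms of the chain directly from the derived relations gives $[\wga_0,A_1]=[A_0,A_1]=E_1$ and $[\wga_0,E_1]=E_2+a_0E_3$. Since $A_1=(Z_1-E_3)/(4\al)$ by~\er{z1} and $Z_1$ is central, one has $[E_1,A_1]=-E_3$, so $E_3\in\swe$ for every value of $a_0$; together with $E_2+a_0E_3\in\swe$ this forces $E_2\in\swe$, whence the whole simple summand $\slr$ lies in $\swe$, and then $Z_1=4\al A_1+E_3\in\swe$ as well. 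Using relations~\er{eq.rel.2},~\er{1001},~\er{a0001} one checks that every higher bracket $(\ad\wga_0)^k(A_1)$, $k\ge 2$, already lies in $\slr$, so no further central generators arise. Thus the generating set is contained in the subalgebra $\slr\oplus\kik Z_1$ and spans it, giving $\swe\cong\slr\oplus\kik$; in particular $\fd^0(\CE,a)$ is the direct sum of $\msl_2(\kik)$ and a central abelian algebra of dimension $\le 3$, as claimed.

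The case $\al=0$, $\be\neq 0$ is immediate from Theorem~\ref{prwe}: there $\wea$ is nilpotent with $\dim\wea\le 6$, and any subalgebra of a finite-dimensional nilpotent Lie algebra is itself nilpotent of dimension no larger. Hence $\fd^0(\CE,a)\cong\swe$ is nilpotent with $\dim\fd^0(\CE,a)\le 6$.

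The main obstacle is the bracket bookkeeping in the case $\al\neq 0$: one must verify that the chain $(\ad\wga_0)^k(A_1)$ closes up inside $\slr$ and contributes exactly one central direction, uniformly in the parameter $a_0$. The delicate point is the borderline $a_0=0$, where $E_3$ is no longer visible in the bracket $[\wga_0,E_1]=E_2$ and must instead be recovered from $[E_1,A_1]=-E_3$. Since all the needed identities are already established in the proof of Theorem~\ref{prwe}, this amounts to a finite and routine, if slightly fiddly, verification rather than a genuine difficulty.
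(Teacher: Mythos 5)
Your proposal is correct and follows essentially the same route as the paper: the paper's proof of Theorem~\ref{thhf0} consists precisely of combining Theorem~\ref{thmhfd0} (the isomorphism $\fd^0(\CE,a)\cong\swe\subset\wea$) with the description of $\wea$ in Theorem~\ref{prwe}. You additionally carry out the explicit bracket bookkeeping that the paper leaves implicit — pinning down $\wga_0=a_0A_1+A_0$, $\wga_1=A_1$, $\wga_i=0$ for $i\ge 2$, and showing in the case $\al\neq 0$ that $\swe=\slr\oplus\kik Z_1$ — which is consistent with, and in fact sharper than, the stated bound of an abelian summand of dimension~$\le 3$.
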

\begin{proof}
Let $\wea$ be the WE algebra of equation~\er{hheq}. 
According to Theorem~\ref{thmhfd0}, 
the algebra $\fd^0(\CE,a)$ is isomorphic to 
the subalgebra $\swe\subset\wea$ defined in Theorem~\ref{thmhfd0}.
Applying Theorem~\ref{thmhfd0} to the description of~$\wea$ 
presented in Theorem~\ref{prwe}, we get the statements of Theorem~\ref{thhf0}.
\end{proof}

\section{The structure of $\fd^{\oc}(\CE,a)$ 
for some equations of orders $3$ and $5$}
\lb{sfdce}

Recall that $\kik$ is either $\Com$ or $\mathbb{R}$.
Consider the infinite-dimensional Lie algebra 
$\msl_2(\kik[\la])\cong \msl_2(\kik)\otimes_{\kik}\kik[\lambda]$, 
where $\kik[\lambda]$ is the algebra of polynomials in $\la$.
Recall that we use the notation~\er{dnot}.

The following result is proved in~\cite{scal13}.
\begin{theorem}[\cite{scal13}]
\lb{pfkdv}
Let $\CE$ be the infinite prolongation of the KdV equation $u_t=u_3+u_0u_1$.
Let $a\in\CE$. 
For each $\oc\in\zsp$, consider the surjective homomorphism 
$\vf_\oc\cl\fd^\oc(\CE,a)\to\fd^{\oc-1}(\CE,a)$ from~\er{fdnn-1}. 

For each $k\in\zsp$, let 
$\psi_{k}\colon\fd^{k}(\CE,a)\to\fd^{0}(\CE,a)$ 
be the composition of the homomorphisms
\beq
\notag
\fd^{k}(\CE,a)\to\fd^{k-1}(\CE,a)\to\dots
\to\fd^{1}(\CE,a)\to\fd^{0}(\CE,a)
\ee
from~\er{fdnn-1}. 
Then one has the following.
\begin{itemize}
\item The algebra $\fd^0(\CE,a)$ is isomorphic to the direct sum 
of $\msl_2(\kik[\la])$ and a $3$-dimensional abelian Lie algebra.
\item 
For each $\oc\in\zsp$, the kernel of $\vf_\oc$ is contained 
in the center of the Lie algebra $\fd^\oc(\CE,a)$, that is,
\beq
\notag
[v_1,v_2]=0\qquad\qquad\forall\,v_1\in\ker\vf_\oc,\qquad
\forall\,v_2\in\fd^\oc(\CE,a).
\ee
\item The kernel of~$\psi_k$ is nilpotent.
\end{itemize}
\end{theorem}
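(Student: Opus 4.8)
The plan is to treat the three assertions in order of increasing dependence: the first is a base-case computation of $\fd^0(\CE,a)$, the second is the key structural input about the maps $\vf_\oc$, and the third follows formally from the second by a lower-central-series argument. Throughout I would work with the presentation of $\fd^\oc(\CE,a)$ by the generators $\ga^{l_1,l_2}_{i_0\dots i_\oc}$, $\gb^{l_1,l_2}_{j_0\dots j_{\oc+\eo-1}}$ and the relations coming from \er{xgbtga} and \er{gagb00}, with $\eo=3$ and $F=u_3+u_0u_1$.

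For the first bullet, since $F$ does not depend on $x$, $t$, Theorem~\ref{thmhfd0} identifies $\fd^0(\CE,a)$ with the subalgebra $\swe\subset\wea$ generated by the elements $(\ad\wga_0)^k(\wga_i)$. So it suffices to compute the Wahlquist--Estabrook algebra $\wea$ of the KdV equation and then extract $\swe$. I would do this exactly as in Lemma~\ref{lhhzcr} and Theorem~\ref{prwe}: write $M=M(u_0)$, $N=N(u_0,u_1,u_2)$, substitute into $D_x(N)-D_t(M)+[M,N]=0$ with $D_t$ built from $F=u_3+u_0u_1$, and collect coefficients of the monomials in $u_1,u_2,u_3$. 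This yields a finite list of Lie-algebraic relations among the coefficients of $M$, $N$; solving them exhibits $\wea$, and the subalgebra $\swe$ of Theorem~\ref{thmhfd0} comes out isomorphic to $\msl_2(\kik[\la])$ plus a $3$-dimensional abelian summand, the loop-algebra factor reflecting the spectral-parameter Lax pair of KdV and the abelian summand being inessential.

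The second bullet is the main obstacle. By the defining formula \er{fdhff}, the homomorphism $\vf_\oc$ kills precisely the generators $\ga^{l_1,l_2}_{i_0\dots i_\oc}$ with $i_\oc\ge 1$ and $\gb^{l_1,l_2}_{j_0\dots j_{\oc+\eo-1}}$ with $j_{\oc+\eo-1}\ge 1$, and sends the remaining generators to the corresponding generators of $\fd^{\oc-1}(\CE,a)$; hence $\ker\vf_\oc$ is the ideal generated by these ``top'' symbols. To prove that this ideal is central, I would isolate in the relation \er{xgbtga} the part of highest degree in the variable $u_\oc$ (equivalently, apply $\pd/\pd u_\oc$ and evaluate along $u_k=a_k$ for $k\ge\oc$ as in the normalization conditions \er{gagb00}). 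Because for KdV the function $F=u_3+u_0u_1$ is of third order with only a quadratic nonlinearity, $u_\oc$ enters $D_t(\ga)$ and $D_x(\gb)$ in a tightly controlled, essentially linear-at-top-order manner; matching coefficients then forces every pairwise bracket among the top symbols, and every bracket of a top symbol with a lower generator, to vanish, so that $\ker\vf_\oc$ reduces to the span of the top symbols and is central. This coefficient bookkeeping is the crux and the part I expect to be hardest, and it is where the special shape of KdV is used: for a general equation $\ker\vf_\oc$ need not be central.

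Given the second bullet, the third is formal. For $0\le j\le k$ let $\chi_j\cl\fd^k(\CE,a)\to\fd^j(\CE,a)$ be the composition $\vf_{j+1}\circ\dots\circ\vf_k$ (with $\chi_k=\id$), and set $K_j=\ker\chi_j$, so that $K_k=0$ and $K_0=\ker\psi_k$. Since $\chi_{j-1}=\vf_j\circ\chi_j$, the subspace $K_{j-1}$ maps under $\chi_j$ into $\ker\vf_j\subset\fd^j(\CE,a)$, which is central by the second bullet; therefore $[\fd^k(\CE,a),K_{j-1}]\subset K_j$. Consequently the lower central series of $\ker\psi_k=K_0$ satisfies $C^{m}(K_0)\subset K_{m-1}$ for all $m$, and in particular $C^{k+1}(K_0)\subset K_k=0$. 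Hence $\ker\psi_k$ is nilpotent of class at most $k$, which completes the proof.
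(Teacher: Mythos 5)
Your overall architecture tracks the route the paper indicates: Theorem~\ref{pfkdv} is not proved in this paper at all, but quoted from~\cite{scal13}, and the remark following it says that the proof there rests on Theorem~\ref{thmhfd0} together with the \emph{known} explicit structure of the WE algebra of KdV due to van Eck~\cite{kdv,kdv1}. Your treatment of the third bullet is complete and correct: the inclusion $[\fd^k(\CE,a),K_{j-1}]\subset K_j$ does follow from centrality of $\ker\vf_j$ in $\fd^j(\CE,a)$, the induction $C^m(K_0)\subset K_{m-1}$ is valid, and this is exactly the formal argument the paper itself uses in the analogous step of the proof of Theorem~\ref{fdockn}.

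The first two bullets, however, have genuine gaps. For the first bullet you propose to compute $\wea$ for KdV ``exactly as in Lemma~\ref{lhhzcr} and Theorem~\ref{prwe}''; this cannot work as stated. Those arguments succeed because the relations force the algebra to be spanned by finitely many explicit brackets (the six elements~\er{el6}), whereas for KdV no such truncation occurs: $\wea$ is infinite-dimensional, containing $\msl_2(\kik[\la])$. Passing from the presentation by generators and relations to the isomorphism type $\msl_2(\kik[\la])\oplus(\text{abelian})$ of the subalgebra $\swe$ is precisely the ``presentation problem'' solved by van Eck, which the paper cites rather than redoes; your phrase ``solving them exhibits $\wea$'' conceals exactly this nontrivial step. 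For the second bullet you correctly identify centrality of $\ker\vf_\oc$ as the crux and correctly describe $\ker\vf_\oc$ as the ideal generated by the top-index generators via~\er{fdhff}, but what you offer is a plan, not a proof: the key sentence ``matching coefficients then forces every pairwise bracket among the top symbols, and every bracket of a top symbol with a lower generator, to vanish'' simply asserts the conclusion of the coefficient analysis that constitutes the actual content of~\cite{scal13}. That this analysis is not routine is visible inside the present paper: for the fifth-order equation~\er{hheq} the general argument only gives centrality for $\oc\ge 2$, and the case $\oc=1$ needs a separate ad hoc argument (Lemma~\ref{lhhfd01}), while for the Krichever--Novikov equation centrality is only claimed for $\oc\ge 2$ (Theorem~\ref{fdockn}); so the statement is equation-sensitive and must be earned, not inferred from the shape of $F$. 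In short, your outline mirrors the cited proof, and your reduction of nilpotency to centrality is sound, but the two computations carrying all the substance of the theorem are left undone.
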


\begin{remark}
In the proof of this theorem in~\cite{scal13}, we use 
the fact that the explicit structure of the WE algebra 
for the KdV equation is known~\cite{kdv,kdv1} and contains $\msl_2(\kik[\la])$.
\end{remark}

Let $\bl$, $\bl_1$, $\bl_2$ be Lie algebras. 
One says that \emph{$\bl_1$ is obtained from $\bl$ by central extension} 
if there is an ideal $\mathfrak{I}\subset\bl_1$ such that 
$\mathfrak{I}$ is contained in the center of $\bl_1$ and $\bl_1/\mathfrak{I}\cong\bl$. 
Note that $\mathfrak{I}$ may be of arbitrary dimension. 

We say that 
\emph{$\bl_2$ is obtained from $\bl$ 
by applying several times the operation of central extension} 
if there is a finite collection of Lie algebras $\mg_0,\mg_1,\dots,\mg_k$ such 
that $\mg_0\cong\bl$, $\mg_k\cong\bl_2$ and 
$\mg_i$ is obtained from $\mg_{i-1}$ by central extension for each  
$i=1,\dots,k$. 

\begin{remark}
For the KdV equation, Theorem~\ref{pfkdv} implies that 
$\fd^0(\CE,a)$ is obtained from $\msl_2(\kik[\la])$ by central extension,
and $\fd^\oc(\CE,a)$ is obtained from $\fd^{\oc-1}(\CE,a)$ by central extension
for each $\oc\in\zsp$.
Therefore, for each $k\in\zp$, the algebra $\fd^k(\CE,a)$ 
for the KdV equation is obtained from $\msl_2(\kik[\la])$ 
by applying several times the operation of central extension.
\end{remark}

\begin{lemma}
\lb{lhhfd01}
Let $\CE$ be the infinite prolongation of equation~\er{hheq}. Let $a\in\CE$.
Then one has the following.
\begin{enumerate}
\item 
The kernel of the surjective homomorphism 
$\vf_1\cl\fd^1(\CE,a)\to\fd^0(\CE,a)$ from~\er{fdnn-1} 
is contained in the center of the Lie algebra $\fd^1(\CE,a)$, that is,
\beq
\lb{vv0oc1}
[v_1,v_2]=0\qquad\qquad\forall\,v_1\in\ker\vf_1,\qquad
\forall\,v_2\in\fd^1(\CE,a).
\ee
\textup{(}In particular, this implies that the Lie algebra $\fd^1(\CE,a)$ 
is obtained from $\fd^0(\CE,a)$ by central extension.\textup{)}
\item If $\al$, $\be$ satisfy~\er{albe0} and $\al\neq 0$, 
then the algebra $\fd^0(\CE,a)$ is isomorphic to 
the direct sum of the $3$-dimensional simple Lie algebra $\msl_2(\kik)$ 
and an abelian Lie algebra of dimension~$\le 3$.
\item If $\al=0$ and $\be\neq 0$, the Lie algebra  
$\fd^0(\CE,a)$ is nilpotent and is of dimension~$\le 6$.
\end{enumerate}
\end{lemma}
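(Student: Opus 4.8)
Parts (2) and (3) of the lemma coincide verbatim with the two cases of Theorem~\ref{thhf0}, which has already been established in the preceding section; hence neither requires a separate argument, and the whole difficulty is concentrated in part~(1). The plan for \eqref{vv0oc1} is to make the passage from $\fd^1(\CE,a)$ to $\fd^0(\CE,a)$ completely explicit by solving the zero-curvature equation \eqref{xgbtga} order by order in the jet variables. I would start from the universal $a$-normal ZCR of order~$\le 1$ with coefficients in $\fd^1(\CE,a)$, given by the generating series $\ga=\ga(x,t,u_0,u_1)$ and $\gb=\gb(x,t,u_0,\dots,u_5)$ (here $\eo=5$), subject to \eqref{gagb00} and $D_x(\gb)-D_t(\ga)+[\ga,\gb]=0$. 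Equating the coefficient of the top variable $u_6$ in this equation yields $\pd_{u_5}\gb=\pd_{u_1}\ga$. Writing $P:=\pd_{u_1}\ga$, which depends only on $x,t,u_0,u_1$ and satisfies $P\big|_{u_1=a_1}=0$ by \eqref{pdgau}, and then descending through the coefficients of $u_5,u_4,\dots$, I would express every coefficient of $\gb$ in terms of $\ga$, its total $x$-derivatives, and iterated brackets. The first such relation already reads $\pd_{u_4}\gb=\pd_{u_0}\ga-D_x(P)-[\ga,P]$, and the lower ones have the same shape with more factors of $D_x$ and deeper brackets.

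Next I would identify $\ker\vf_1$ explicitly. By the formula \eqref{fdhff} for $\vf_1$, combined with the normalization $\ga^{l_1,l_2}_{i_0,1}=0$ coming from $\zcs_1=\{(i_0,1)\}$ in \eqref{gagb00}, the kernel is the ideal of $\fd^1(\CE,a)$ generated by the coefficients $\ga^{l_1,l_2}_{i_0,i_1}$ with $i_1\ge 2$, that is, precisely by the coefficients of the series $P$. Consequently, to prove \eqref{vv0oc1} it is enough to show that $\ad(g)$ annihilates every coefficient of $P$ for each generator $g$; and since the descent exhibits all coefficients of $\gb$ as elements of the subalgebra generated by the coefficients of $\ga$, it even suffices to treat $g=\ga^{l_1,l_2}_{i_0,i_1}$. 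In other words, the whole of part~(1) reduces to showing that the order-one brackets $[\ga,P]$ and their higher analogues, which appear in the descent relations above, are forced to vanish once all of the defining relations are taken into account.

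The hard part is exactly this vanishing. The defining relations produced by equating the $(u_0,u_1)$-Taylor coefficients of the single series equation are sums of brackets, not individual brackets, so one cannot read off $[\ga,P]=0$ directly from any one of them; disentangling them is the genuine nonlinear content. I would substitute the descent expressions for $\gb$ into the remaining low-order coefficients of the zero-curvature equation to assemble the complete list of relations among the coefficients of $\ga$, and then run a bracket manipulation entirely in the spirit of the long Jacobi-identity computation in the proof of Theorem~\ref{prwe}. I expect the inequalities \eqref{albe0} to enter there in the same way they do in that proof, causing the coefficients that survive to force the order-one brackets to collapse.

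Finally, I would use the already-known structure of $\fd^0(\CE,a)$ from Theorem~\ref{thhf0} to organize and shorten the computation: modulo $\ker\vf_1$ every relation must reduce to one of the known relations, either in the direct sum of $\msl_2(\kik)$ and an abelian algebra (case $\al\neq 0$) or in the nilpotent algebra (case $\al=0$), which pins down the admissible combinations and lets one solve for the bracket terms. The principal obstacle I anticipate is purely the combinatorial bookkeeping of these iterated brackets rather than any conceptual gap; the parenthetical conclusion of part~(1), that $\fd^1(\CE,a)$ is a central extension of $\fd^0(\CE,a)$, is then immediate from \eqref{vv0oc1} and the surjectivity of $\vf_1$ in \eqref{fdnn-1}.
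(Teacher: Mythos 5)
Your treatment of parts (2) and (3) is exactly the paper's: both are quoted from Theorem~\ref{thhf0}, so no issue there. The problem is part (1). Your setup --- the generating series $\ga$, $\gb$ with $\eo=5$, the relation $\pd_{u_5}\gb=\pd_{u_1}\ga=:P$ read off from the $u_6$-coefficient of \er{xgbtga}, the descent expressing the lower coefficients of $\gb$, and the identification of $\ker\vf_1$ with the ideal generated by the coefficients of $P$ (the $\ga^{l_1,l_2}_{i_0,i_1}$ with $i_1\ge 2$, using \er{gagb00} and \er{fdhff}) --- is a sound skeleton. But the entire mathematical content of \er{vv0oc1} is the vanishing of the brackets of those kernel generators with everything else, and at precisely that point you stop proving and start predicting: you say you ``would'' run a bracket manipulation in the spirit of Theorem~\ref{prwe} and ``expect'' \er{albe0} to enter. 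Nothing in the proposal establishes the vanishing; what you have is a plan for a proof, not a proof.

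Moreover, the prediction itself points in the wrong direction. Part (1) of Lemma~\ref{lhhfd01} is stated, and is true, \emph{without} assuming \er{albe0}; only parts (2) and (3) need that hypothesis. The paper's actual proof of part (1) never touches the specific structure constants of \er{hheq}: it invokes the result of~\cite{scal13} that for \emph{every} equation of the form \er{utu5}, i.e.\ $u_t=u_5+f(x,t,u_0,u_1,u_2,u_3)$, the kernel of $\vf_\oc$ is central for all $\oc\ge 2$, and then observes that this structural argument also goes through for $\oc=1$ in the case of \er{hheq}. So centrality is a consequence of the quasi-linear fifth-order shape of the equation, not of Jacobi-identity gymnastics of the kind used in Theorem~\ref{prwe}; an argument that genuinely needed \er{albe0} would only prove a statement weaker than part (1). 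A secondary, fixable flaw: your reduction to brackets with $\ga$-generators alone rests on the claim that the descent puts every coefficient of $\gb$ into the subalgebra generated by the coefficients of $\ga$. The integration ``constants'' of the descent --- the coefficients of $\gb$ depending on $x,t$ only, the analogue of the generator $Y$ in Theorem~\ref{prwe} --- need not lie in that subalgebra, so brackets of the kernel generators with those residual $\gb$-generators must be checked as well.
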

\begin{proof}
Equation~\er{hheq} belongs to the following class of equations
\beq
\lb{utu5}
u_t=u_5+f(x,t,u_0,u_1,u_2,u_3).
\ee
For equations of the form~\er{utu5},
it is shown in~\cite{scal13} that the kernel of the homomorphism 
$$
\vf_\oc\cl\fd^\oc(\CE,a)\to\fd^{\oc-1}(\CE,a)
$$ 
from~\er{fdnn-1} is contained in the center of the Lie algebra $\fd^\oc(\CE,a)$ 
for all $\oc\ge 2$, that is,
\beq
\lb{vv0ker}
[v_1,v_2]=0\qquad\qquad\forall\,v_1\in\ker\vf_\oc,\qquad
\forall\,v_2\in\fd^\oc(\CE,a).
\ee
For equation~\er{hheq}, the arguments used in 
the proof of~\er{vv0ker} in~\cite{scal13} 
work also in the case $\oc=1$, so we get~\er{vv0oc1}.
The statements about $\fd^0(\CE,a)$ have been proved in Theorem~\ref{thhf0}.
\end{proof}

Using Lemma~\ref{lhhfd01}, in~\cite{scal13} we prove the following result.
\begin{theorem}[\cite{scal13}]
\lb{thhhfd}
Let $\CE$ be the infinite prolongation of equation~\er{hheq}. Let $a\in\CE$.
Then one has the following.
\begin{itemize}
\item For any $\oc\in\zp$, the kernel of the surjective homomorphism 
$\fd^\oc(\CE,a)\to\fd^0(\CE,a)$ from~\er{intfdoc1} is nilpotent.
The algebra $\fd^\oc(\CE,a)$ is obtained from the algebra $\fd^0(\CE,a)$
by applying several times the operation of central extension. 
\item If \er{albe0} holds and $\al\neq 0$, then 
$\fd^0(\CE,a)$ is isomorphic to the direct sum of $\msl_2(\kik)$ 
and an abelian Lie algebra of dimension~$\le 3$,
and for each $\oc\in\zp$ there is a surjective homomorphism 
$\fd^\oc(\CE,a)\to\msl_2(\kik)$ with nilpotent kernel.
\item If $\al=0$ and $\be\neq 0$, the Lie algebra $\fd^\oc(\CE,a)$ is nilpotent 
for all $\oc\in\zp$.
\end{itemize}
\end{theorem}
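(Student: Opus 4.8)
The plan is to reduce all three assertions to a single structural input: each homomorphism $\vf_\oc\cl\fd^\oc(\CE,a)\to\fd^{\oc-1}(\CE,a)$ in~\er{fdnn-1} is a \emph{central extension}, i.e. $\ker\vf_\oc$ lies in the center of $\fd^\oc(\CE,a)$. For $\oc=1$ this is~\er{vv0oc1} in Lemma~\ref{lhhfd01}, and for $\oc\ge 2$ it is the result quoted from~\cite{scal13} in the proof of Lemma~\ref{lhhfd01}, which holds for every equation of the form~\er{utu5} and in particular for~\er{hheq}. This already yields the second clause of the first bullet: $\fd^\oc(\CE,a)$ is obtained from $\fd^0(\CE,a)$ by applying central extension $\oc$ times along the tower~\er{intfdoc1}. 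I will use two elementary facts. \emph{(i)} A central extension $0\to\mathfrak{z}\to\mathfrak{g}\to\mathfrak{h}\to0$ of a nilpotent $\mathfrak{h}$ is nilpotent: if $\mathfrak{h}$ has nilpotency class $c$ then $\mathfrak{g}^{[c+1]}\subseteq\mathfrak{z}$, whence $\mathfrak{g}^{[c+2]}=[\mathfrak{g},\mathfrak{g}^{[c+1]}]\subseteq[\mathfrak{g},\mathfrak{z}]=0$. \emph{(ii)} In such an extension $\mathrm{rad}(\mathfrak{g})$ is exactly the preimage of $\mathrm{rad}(\mathfrak{h})$, so that $0\to\mathfrak{z}\to\mathrm{rad}(\mathfrak{g})\to\mathrm{rad}(\mathfrak{h})\to0$ is again central and $\mathfrak{g}/\mathrm{rad}(\mathfrak{g})\cong\mathfrak{h}/\mathrm{rad}(\mathfrak{h})$.

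For the nilpotency in the first bullet, write $\psi_\oc\cl\fd^\oc(\CE,a)\to\fd^0(\CE,a)$ for the composition $\vf_1\circ\dots\circ\vf_\oc$ and $N_\oc=\ker\psi_\oc$. Restricting $\vf_\oc$ to $N_\oc$ gives a short exact sequence $0\to\ker\vf_\oc\to N_\oc\to N_{\oc-1}\to0$: the kernel is $\ker\vf_\oc\subseteq N_\oc$, which is central in $\fd^\oc(\CE,a)$ and hence in $N_\oc$, and the image is $N_{\oc-1}$ by surjectivity of $\vf_\oc$. Since $N_0=0$, induction on $\oc$ together with fact~\emph{(i)} shows that each $N_\oc$ is nilpotent. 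The third bullet ($\al=0$, $\be\neq0$) follows the same pattern: by Theorem~\ref{thhf0} the algebra $\fd^0(\CE,a)$ is nilpotent, and applying fact~\emph{(i)} inductively along~\er{intfdoc1} gives that $\fd^\oc(\CE,a)$ is nilpotent for every $\oc$.

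For the second bullet (\er{albe0} with $\al\neq0$), Theorem~\ref{thhf0} gives $\fd^0(\CE,a)\cong\msl_2(\kik)\oplus\mathfrak{b}$ with $\mathfrak{b}$ abelian; as $\msl_2(\kik)$ is simple, $\mathrm{rad}(\fd^0(\CE,a))=\mathfrak{b}$ is nilpotent and $\fd^0(\CE,a)/\mathrm{rad}(\fd^0(\CE,a))\cong\msl_2(\kik)$. Applying fact~\emph{(ii)} inductively along~\er{intfdoc1}, each $\mathrm{rad}(\fd^\oc(\CE,a))$ is a central extension of $\mathrm{rad}(\fd^{\oc-1}(\CE,a))$ and is therefore nilpotent by fact~\emph{(i)}, while $\fd^\oc(\CE,a)/\mathrm{rad}(\fd^\oc(\CE,a))\cong\msl_2(\kik)$. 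Let $\theta_\oc\cl\fd^\oc(\CE,a)\to\msl_2(\kik)$ be $\psi_\oc$ followed by the projection $\fd^0(\CE,a)\to\msl_2(\kik)$ with kernel $\mathfrak{b}$. Then $\theta_\oc$ is surjective and $\ker\theta_\oc=\psi_\oc^{-1}(\mathfrak{b})=\psi_\oc^{-1}(\mathrm{rad}(\fd^0(\CE,a)))=\mathrm{rad}(\fd^\oc(\CE,a))$, which is nilpotent, as required.

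I expect the main obstacle to be the $\al\neq0$ case, precisely the point that $\ker\theta_\oc$ must be shown nilpotent rather than merely solvable. Viewed naively, $\ker\theta_\oc$ is an extension of the abelian algebra $\mathfrak{b}$ by the nilpotent algebra $N_\oc$, and such abelian-by-nilpotent extensions are in general only solvable; the argument must genuinely use that the tower~\er{intfdoc1} is built from successive \emph{central} extensions, which is what lets me identify $\ker\theta_\oc$ with the radical and propagate nilpotency of the radical through fact~\emph{(ii)}. All the equation-specific input is already isolated in Lemma~\ref{lhhfd01} and Theorem~\ref{thhf0}; what remains is this purely Lie-theoretic induction.
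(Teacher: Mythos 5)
Your proof is correct and takes essentially the same route the paper intends: the paper derives Theorem~\ref{thhhfd} (in~\cite{scal13}) precisely from the centrality of the kernels $\ker\vf_\oc$ (Lemma~\ref{lhhfd01} for $\oc=1$, and~\er{vv0ker} from~\cite{scal13} for $\oc\ge 2$, valid for all equations of the form~\er{utu5}) together with the structure of $\fd^0(\CE,a)$ from Theorem~\ref{thhf0}, and your facts \emph{(i)} and \emph{(ii)} supply exactly the inductive propagation along the tower~\er{fdnn-1} that this reduction requires. Your closing observation is also the right one: the only delicate point is that $\ker\theta_\oc=\psi_\oc^{-1}(\mathfrak{b})$ must come out nilpotent rather than merely solvable, and this is secured exactly because each step of the tower is a \emph{central} extension, so the chain $K_\oc=\psi_\oc^{-1}(\mathfrak{b})$ is an iterated central extension of the abelian algebra $\mathfrak{b}$.
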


In the rest of this section we assume $\kik=\Com$.
To study $\fd^\oc(\CE,a)$ for the Krichever-Novikov equation~\er{knedef},
we need some auxiliary constructions.

Let $\Com[v_1,v_2,v_3]$ be the algebra of 
polynomials in the variables $v_1$, $v_2$, $v_3$.
Let $e_1,e_2,e_3\in\Com$ such that $e_1\neq e_2\neq e_3\neq e_1$.
Consider the ideal $\mathcal{I}_{e_1,e_2,e_3}\subset\Com[v_1,v_2,v_3]$ 
generated by the polynomials
\begin{equation}
  \label{elc}
  v_i^2-v_j^2+e_i-e_j,\qquad\qquad i,j=1,2,3.
\end{equation}

Set
\beq 
\lb{eeedef}
E_{e_1,e_2,e_3}=\Com[v_1,v_2,v_3]/\mathcal{I}_{e_1,e_2,e_3}.
\ee 
In other words, $E_{e_1,e_2,e_3}$ 
is the commutative associative algebra of polynomial 
functions on the algebraic curve 
in $\Com^3$ defined by the polynomials~\eqref{elc}.
(This curve is given by the equations 
$v_i^2-v_j^2+e_i-e_j=0$, $i,j=1,2,3$, 
in the space $\Com^3$ with coordinates $v_1$, $v_2$, $v_3$.)

Since we assume $e_1\neq e_2\neq e_3\neq e_1$, 
this curve is nonsingular, irreducible and is of genus~$1$, 
so this is an elliptic curve. 
It is known that the Landau-Lifshitz equation 
and the Krichever-Novikov equation
possess $\mathfrak{so}_3(\Com)$-valued ZCRs parametrized by points of this 
curve~\cite{sklyanin,ft,novikov99,ll}.
(For the Krichever-Novikov equation, the paper~\cite{novikov99} presents a ZCR 
with values in the Lie algebra $\msl_2(\Com)\cong\mathfrak{so}_3(\Com)$.)

We have the natural surjective homomorphism 
$\rho\cl\Com[v_1,v_2,v_3]\to
\Com[v_1,v_2,v_3]/\mathcal{I}_{e_1,e_2,e_3}=E_{e_1,e_2,e_3}$.
Set $\hat v_i=\rho(v_i)\in E_{e_1,e_2,e_3}$ for $i=1,2,3$.

Consider also a basis $\al_1$, $\al_2$, $\al_3$ of the Lie algebra
$\mathfrak{so}_3(\Com)$ such that 
\beq
\lb{xyz}
[\al_1,\al_2]=\al_3,\qquad [\al_2,\al_3]=\al_1,\qquad 
[\al_3,\al_1]=\al_2.
\ee 
We endow the space $\mathfrak{so}_3(\Com)\otimes_\Com E_{e_1,e_2,e_3}$ with 
the following Lie algebra structure 
$$
[\al\otimes h_1,\,\beta\otimes h_2]=[\al,\beta]\otimes h_1h_2,\qquad\quad 
\al,\beta\in\mathfrak{so}_3(\Com),\qquad\quad h_1,h_2\in E_{e_1,e_2,e_3}.
$$

Denote by $\mR_{e_1,e_2,e_3}$ the Lie subalgebra of 
$\mathfrak{so}_3(\Com)\otimes_\Com E_{e_1,e_2,e_3}$ generated by the elements
$$
\al_i\otimes\hat v_i\,\in\,\mathfrak{so}_3(\Com)\otimes_\Com E_{e_1,e_2,e_3},
\qquad\qquad i=1,2,3.
$$
Since $\mR_{e_1,e_2,e_3}\subset\mathfrak{so}_3(\Com)\otimes_\Com E_{e_1,e_2,e_3}$, 
we can regard elements of~$\mR_{e_1,e_2,e_3}$ 
as $\mathfrak{so}_3(\Com)$-valued functions on the elliptic curve in~$\Com^3$ 
determined by the polynomials~\eqref{elc}. 

Set $z=\hat v_1^2+e_1$.
Since $\hat v_1^2+e_1=\hat v_2^2+e_2=\hat v_3^2+e_3$ in $E_{e_1,e_2,e_3}$, we have
\beq
\lb{zvvv}
z=\hat v_1^2+e_1=\hat v_2^2+e_2=\hat v_3^2+e_3.
\ee
It is easily seen (and is shown in~\cite{ll}) 
that the following elements form a basis for $\mR_{e_1,e_2,e_3}$
\beq
\label{rbas}
\al_i\otimes\hat v_i z^l,\quad\al_i\otimes\hat v_j\hat v_k z^l,\qquad 
i,j,k\in\{1,2,3\},\quad j<k,\quad j\neq i\neq k,\quad l\in\zp.
\ee
As the basis~\er{rbas} is infinite,
the Lie algebra $\mR_{e_1,e_2,e_3}$ is infinite-dimensional.

It is shown in~\cite{ll} that the Wahlquist-Estabrook prolongation algebra of the 
Landau-Lifshitz equation is isomorphic to the 
direct sum of $\mR_{e_1,e_2,e_3}$ and a $2$-dimensional abelian Lie algebra.
According to Theorem~\ref{fdockn} below, the algebra $\mR_{e_1,e_2,e_3}$ 
shows up also in the structure of $\fd^\oc(\CE,a)$ for the Krichever-Novikov equation.

\begin{theorem}
\lb{fdockn}
For any $e_1,e_2,e_3\in\Com$, 
consider the Krichever-Novikov equation $\kne(e_1,e_2,e_3)$ given by~\er{knedef}.
Let $\CE$ be the infinite prolongation of this equation. Let $a\in\CE$.
Then one has the following.
\begin{itemize}
 \item The algebra $\fd^0(\CE,a)$ is zero.
\item For each $\oc\ge 2$, the kernel of the surjective homomorphism 
$\vf_\oc\cl\fd^\oc(\CE,a)\to\fd^{\oc-1}(\CE,a)$ from~\er{fdnn-1}
is contained in the center of the Lie algebra $\fd^\oc(\CE,a)$, that is,
\beq
\lb{knvvvf}
[v_1,v_2]=0\qquad\qquad\forall\,v_1\in\ker\vf_\oc,\qquad
\forall\,v_2\in\fd^\oc(\CE,a).
\ee
\textup{(}In particular, this implies that the algebra $\fd^\oc(\CE,a)$ 
is obtained from $\fd^{\oc-1}(\CE,a)$ by central extension.\textup{)}
\item The kernel of the surjective homomorphism 
$\fd^{\oc}(\CE,a)\to\fd^{1}(\CE,a)$ from~\er{fdnn-1} is nilpotent.
\item If $e_1\neq e_2\neq e_3\neq e_1$, then $\fd^1(\CE,a)\cong\mR_{e_1,e_2,e_3}$  
and for each $\oc\ge 2$ the algebra $\fd^{\oc}(\CE,a)$ 
is obtained from $\mR_{e_1,e_2,e_3}$ by applying several times 
the operation of central extension.
\end{itemize}
\end{theorem}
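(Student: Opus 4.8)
The plan is to prove the four assertions in the order listed, regarding the identification $\fd^1(\CE,a)\cong\mR_{e_1,e_2,e_3}$ as the central and hardest point. Since the right-hand side $F$ in~\er{f1kn} does not depend on $x$, $t$, the algebra $\fd^0(\CE,a)$ can be computed through Theorem~\ref{thmhfd0}: it is isomorphic to the subalgebra $\swe\subset\wea$ generated by the elements $(\ad\wga_0)^k(\wga_i)$, $k\in\zp$, $i\in\zsp$. To show this subalgebra vanishes I would substitute a WE-type series $\wga=\sum_i(u_0-a_0)^i\wga_i$ and $\wgb=\wgb(u_0,u_1,u_2)$ into~\er{wxgbtga} and extract the defining relations of $\wea$ for the Krichever-Novikov equation. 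The non-polynomial terms $-\tfrac{3}{2}u_2^2/u_1$ and $P(u_0)/u_1$ with $P(u_0)=(u_0-e_1)(u_0-e_2)(u_0-e_3)$ make the comparison of coefficients rigid; in particular, matching the coefficients of $u_3$ and of $u_2^2/u_1$ forces $\wga_i=0$ for all $i\ge1$, so that every generator $(\ad\wga_0)^k(\wga_i)$ of $\swe$ vanishes and hence $\fd^0(\CE,a)=0$.

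For the identification of $\fd^1(\CE,a)$ I would follow the template of the proof of Theorem~\ref{prwe}. First I would compute, as in Lemma~\ref{lhhzcr}, the general form of an $a$-normal order-$1$ ZCR $\anA=\anA(u_0,u_1)$, $\anB=\anB(u_0,u_1,u_2,u_3)$ with coefficients in an arbitrary Lie algebra $\bl$: substituting into~\er{xgbtga} and comparing coefficients of monomials in $u_2,u_3,u_4$, I expect the entire ZCR to be expressible through three elements of $\bl$ subject to a finite list of Lie-algebraic relations. The crucial step is to recognize that these relations collapse exactly to the quadratic curve relations~\er{elc}, namely $v_i^2-v_j^2+e_i-e_j=0$, together with the $\mathfrak{so}_3$-structure~\er{xyz}, so that the algebra presented by them is $\mR_{e_1,e_2,e_3}$. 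Here I would use the known $\mathfrak{so}_3(\Com)$-valued ZCR of the equation parametrized by points of the elliptic curve~\er{curez} \cite{novikov99,ll}: written in normal form it is an $\mR_{e_1,e_2,e_3}$-valued order-$1$ ZCR realizing the generators $\al_i\otimes\hat v_i$, and by Remark~\ref{fdrzcr} it supplies a surjective homomorphism $\fd^1(\CE,a)\to\mR_{e_1,e_2,e_3}$. The relation computation supplies a homomorphism in the opposite direction; checking (with the help of the explicit basis~\er{rbas} of $\mR_{e_1,e_2,e_3}$ from~\cite{ll}) that the two are mutually inverse gives $\fd^1(\CE,a)\cong\mR_{e_1,e_2,e_3}$.

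The centrality of $\ker\vf_\oc$ in $\fd^\oc(\CE,a)$ for $\oc\ge2$, that is~\er{knvvvf}, I would deduce from the general structural result of~\cite{scal13} for scalar evolution equations, exactly as the analogous statement is invoked in the proof of Lemma~\ref{lhhfd01}; unlike the KdV and Fordy cases, centrality cannot hold at $\oc=1$ here, since $\fd^1(\CE,a)\cong\mR_{e_1,e_2,e_3}$ is non-abelian while $\fd^0(\CE,a)=0$, which would force $\mR_{e_1,e_2,e_3}$ to be abelian. Granting~\er{knvvvf}, nilpotency of $\ker\big(\fd^\oc(\CE,a)\to\fd^1(\CE,a)\big)$ follows by induction on $\oc$: writing $I_\oc$ for this kernel, one has $I_1=0$, $[\fd^\oc(\CE,a),\ker\vf_\oc]=0$, and $I_\oc/\ker\vf_\oc\cong I_{\oc-1}$, so $I_\oc$ is a central extension of the nilpotent algebra $I_{\oc-1}$ and is therefore nilpotent. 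The same two facts, together with $\fd^1(\CE,a)\cong\mR_{e_1,e_2,e_3}$, show that each $\fd^\oc(\CE,a)$ with $\oc\ge2$ is obtained from $\fd^{\oc-1}(\CE,a)$ by a central extension, so $\fd^\oc(\CE,a)$ is obtained from $\mR_{e_1,e_2,e_3}$ by applying the operation of central extension $\oc-1$ times.

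I expect the main obstacle to be the coefficient analysis behind $\fd^1(\CE,a)\cong\mR_{e_1,e_2,e_3}$. Unlike the Wahlquist-Estabrook situation, here $\anA$ genuinely depends on $u_1$, and the non-polynomial terms $u_2^2/u_1$ and $P(u_0)/u_1$ in $F$ enter $D_t(\anA)$, so the comparison of coefficients is considerably more involved than in Lemma~\ref{lhhzcr}. The delicate point is to show that the emerging relations reduce precisely to the quadratic equations~\er{elc} of the elliptic curve and the relations~\er{xyz}, with no extra relations that would shrink the algebra below $\mR_{e_1,e_2,e_3}$ and no missing relations that would enlarge it; assembling the explicit curve-parametrized ZCR and the abstract presentation into mutually inverse homomorphisms is where the real work lies.
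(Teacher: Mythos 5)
Your proposal is correct and follows essentially the same route as the paper's own (sketched) proof: triviality of $\fd^0(\CE,a)$ by a coefficient computation in which the non-polynomial terms of~\er{knedef} force the $u_0$-dependence of $A$ to vanish, identification $\fd^1(\CE,a)\cong\mR_{e_1,e_2,e_3}$ by adapting the computation of~\cite{igon-martini} to the normal-form setting and pairing the abstract presentation with the known elliptic-curve-valued ZCR, centrality of $\ker\vf_\oc$ for $\oc\ge 2$ via the arguments of~\cite{scal13}, and nilpotency plus the iterated-central-extension statements by composing these. The only cosmetic difference is that you reach $\fd^0(\CE,a)=0$ through Theorem~\ref{thmhfd0} and the Wahlquist-Estabrook algebra (legitimate here, since the right-hand side of the Krichever-Novikov equation does not depend on $x$, $t$), whereas the paper computes directly with the generators and relations defining $\fd^0(\CE,a)$; your remark that centrality necessarily fails at $\oc=1$ because $\mR_{e_1,e_2,e_3}$ is non-abelian while $\fd^0(\CE,a)=0$ is a correct sanity check that the paper does not state explicitly.
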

\begin{proof}
In this version of the preprint we present only a sketch of the proof.
A more detailed proof will be added later in an updated version.

Using the notation~\er{dnot}, 
we can rewrite the Krichever-Novikov equation~\er{knedef} 
in the form~\er{u1kn},~\er{f1kn}.
According to the definition of $D_t$, for this equation we have 
\beq
\lb{dtkn}
D_t=\frac{\pd}{\pd t}+\sum_{k\ge 0} 
D_x^k\Big(u_3-\frac32\frac{(u_2)^2}{u_1}+
\frac{(u_0-e_1)(u_0-e_2)(u_0-e_3)}{u_1}\Big)
\frac{\pd}{\pd u_k}.
\ee

According to the definition of the algebras $\fd^\oc(\CE,a)$ 
in the case $\oc=0$, $\eo=3$, 
for the Krichever-Novikov equation, the algebra $\fd^0(\CE,a)$ is
generated by the elements 
$$
\ga^{l_1,l_2}_{i_0},\quad\gb^{l_1,l_2}_{j_0j_1j_2},\qquad 
l_1,l_2,i_0,j_0,j_1,j_2\in\zp.
$$
Relations~\er{gagb00} in the case $\oc=0$, $\eo=3$ say that 
\beq
\lb{knab0}
\ga^{l_1,l_2}_{0}=\gb^{0,l_2}_{000}=0\qquad\quad\forall\,l_1,l_2.
\ee

Since equation~\er{knedef} is invariant 
with respect to the change of variables $x\mapsto x-x_a,\ t\mapsto t-t_a$, 
we can assume $x_a=t_a=0$ in~\er{pointevs}. 
In view of~\er{dtkn},~\er{knab0}, and $x_a=t_a=0$, 
in the case $\oc=0$, $\eo=3$ 
the power series~\er{gasumxt},~\er{gbsumxt},~\er{xgbtga} are written as 
\begin{gather}
\lb{gagab0}
\ga=\sum_{l_1,l_2\ge 0,\ i_0>0} 
x^{l_1} t^{l_2}(u_0-a_0)^{i_0}\cdot\ga^{l_1,l_2}_{i_0},\\
\gb=\sum_{l_1,l_2,j_0,j_1,j_2\ge 0} 
x^{l_1} t^{l_2}(u_0-a_0)^{j_0}(u_1-a_1)^{j_1}(u_2-a_2)^{j_2}\cdot
\gb^{l_1,l_2}_{j_0j_1j_2},\qquad\quad 
\gb^{0,l_2}_{000}=0,\\
\lb{gagab0zcr}
\sum_{k=0}^2u_{k+1}\frac{\pd\gb}{\pd u_k}-\Big(u_3-\frac32\frac{(u_2)^2}{u_1}+
\frac{(u_0-e_1)(u_0-e_2)(u_0-e_3)}{u_1}\Big)\frac{\pd\ga}{\pd u_0}+
[\ga,\gb]=0,\\
\notag 
\ga^{l_1,l_2}_{i_0},\,\gb^{l_1,l_2}_{j_0j_1j_2}\in\fd^0(\ce,a).
\end{gather}
A straightforward study of~\er{gagab0zcr},~\er{knab0} shows that 
equations~\er{gagab0zcr},~\er{knab0} imply 
$\ga^{l_1,l_2}_{i_0}=\quad\gb^{l_1,l_2}_{j_0j_1j_2}=0$ for all 
$l_1$, $l_2$, $i_0$, $j_0$, $j_1$, $j_2$. Hence $\fd^0(\CE,a)=0$.

For the Krichever-Novikov equation $\kne(e_1,e_2,e_3)$ given by~\er{knedef},
the algebra $\fd^1(\CE,a)$ is responsible for ZCRs of the form 
\beq
\lb{kn1zcr}
A=A(x,t,u_0,u_1),\qquad 
B=B(x,t,u_0,u_1,u_2,u_3),\qquad
D_x(B)-D_t(A)+[A,B]=0.
\ee
For this equation, 
the paper~\cite{igon-martini} constructed a somewhat similar Lie algebra 
which is responsible for ZCRs of the form 
\beq
\lb{knnxt}
A=A(u_0,u_1),\qquad 
B=B(u_0,u_1,u_2,u_3),\qquad
D_x(B)-D_t(A)+[A,B]=0.
\ee
In the case $e_1\neq e_2\neq e_3\neq e_1$ 
it is shown in~\cite{igon-martini} that this Lie algebra 
is isomorphic to the direct sum of~$\mR_{e_1,e_2,e_3}$ 
and a $2$-dimensional abelian Lie algebra.
Similarly to this result, 
one can prove that in the case $e_1\neq e_2\neq e_3\neq e_1$
we have $\fd^1(\CE,a)\cong\mR_{e_1,e_2,e_3}$.

For $\oc\in\zsp$, consider the surjective homomorphism 
$\vf_\oc\cl\fd^\oc(\CE,a)\to\fd^{\oc-1}(\CE,a)$ from~\er{fdnn-1}.
For equations of the form $u_t=u_3+f(x,t,u_0,u_1)$, 
property~\er{knvvvf} is proved in~\cite{scal13} for all $\oc\ge 1$.
For the Krichever-Novikov equation, the arguments from~\cite{scal13} 
(with some small modifications) allow one to prove property~\er{knvvvf} 
for all $\oc\ge 2$.
In particular, this means that, for each $\oc\ge 2$, 
the algebra $\fd^\oc(\CE,a)$ is obtained from the algebra $\fd^{\oc-1}(\CE,a)$
by central extension.

So we have property~\er{knvvvf} for all $\oc\ge 2$. 
It is easily seen that this implies that 
the kernel of the homomorphism 
$\fd^{\oc}(\CE,a)\to\fd^{1}(\CE,a)$ from~\er{fdnn-1} is nilpotent, 
because this homomorphism is equal to the composition of the homomorphisms
\beq
\notag
\fd^{\oc}(\CE,a)\xrightarrow{\vf_\oc}\fd^{\oc-1}(\CE,a)
\xrightarrow{\vf_{\oc-1}}\dots\xrightarrow{\vf_{2}}
\fd^1(\CE,a)\xrightarrow{\vf_{1}}\fd^0(\CE,a)
\ee
from~\er{fdnn-1}.

As has been shown above, in the case $e_1\neq e_2\neq e_3\neq e_1$ 
we have $\fd^1(\CE,a)\cong\mR_{e_1,e_2,e_3}$, 
and for each $\oc\ge 2$ 
the algebra $\fd^\oc(\CE,a)$ is obtained from the algebra $\fd^{\oc-1}(\CE,a)$
by central extension.
This implies that, in the case $e_1\neq e_2\neq e_3\neq e_1$, 
for each $\oc\ge 2$ the algebra $\fd^{\oc}(\CE,a)$ 
is obtained from $\mR_{e_1,e_2,e_3}$ by applying several times 
the operation of central extension.
\end{proof}

\section{Some algebraic constructions}

In this section we present some auxiliary algebraic constructions and results,
which will be needed in our study of B\"acklund transformations 
in the next sections.

\subsection{Lie algebras with topology and quasi-solvable elements}
\lb{latqse}

As has been said in Section~\ref{subs-conv},
all vector spaces and algebras are supposed to be over the field~$\kik$.
Since $\kik$ is either $\Com$ or $\mathbb{R}$, 
we have the standard topology on $\kik$.

This allows us to speak about Lie algebras with topology.
A \emph{Lie algebra $\bl$ with topology} is a topological vector space~$\bl$ 
over $\kik$ with a Lie bracket such that the Lie bracket 
is continuous with respect to the topology on~$\bl$.

\begin{example}
\lb{efdt}
Let $\CE$ be the infinite prolongation of an evolution equation.
Let $a\in\CE$. 
In Section~\ref{subsecbt} we have defined 
the Lie algebra $\fd(\CE,a)$ and the topology on~$\fd(\CE,a)$.
It is easy to check that $\fd(\CE,a)$ is a Lie algebra with topology 
in the above sense.
\end{example}

Let $\bl$ be a Lie algebra with topology.
For any ideal $\mathfrak{I}\subset\bl$, we denote by $\pi_{\mathfrak{I}}$  
the natural surjective homomorphism $\pi_{\mathfrak{I}}\cl\bl\to\bl/\mathfrak{I}$.

An ideal $\mathfrak{I}\subset\bl$ is called an \emph{open ideal} 
if $\mathfrak{I}$ is open in $\bl$ with respect to the topology on $\bl$.

An element $w\in\bl$ is said to be \emph{quasi-solvable} 
if, for any open ideal $\mathfrak{I}\subset\bl$,
the ideal generated by $\pi_{\mathfrak{I}}(w)$ in $\bl/\mathfrak{I}$
is solvable.
(Note that the ideal generated by $w$ in $\bl$ is not necessarily solvable, 
and we do not consider any topology on $\bl/\mathfrak{I}$.)

Let $\iqs(\bl)\subset\bl$ be the subset of all quasi-solvable elements of $\bl$.
It is easily seen that $\iqs(\bl)$ is an ideal of the Lie algebra $\bl$.
We set $\rdc(\bl)=\bl/\iqs(\bl)$. 
We do not consider any topology on $\rdc(\bl)$.
\begin{lemma}
\lb{simsl}
Consider the infinite-dimensional 
Lie algebra $\msl_2(\kik[\la])\cong \msl_2(\kik)\otimes_{\kik}\kik[\lambda]$. 
Let $\mh\subset\msl_2(\kik[\la])$ be a subalgebra of finite codimension.
Then any solvable ideal of the Lie algebra~$\mh$ is zero.
\end{lemma}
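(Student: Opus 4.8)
The plan is to reduce the statement to abelian ideals and then exploit the evaluation homomorphisms $\ev_c\cl\msl_2(\kik[\la])\to\msl_2(\kik)$, $X\otimes p(\la)\mapsto p(c)\,X$, together with the simplicity of $\msl_2(\kik)$ and the fact that $\kik\in\{\Com,\R\}$ is an infinite field.

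First I would reduce to abelian ideals. Suppose $\mathfrak{r}\subset\mh$ is a nonzero solvable ideal, and let $\mathfrak{r}^{(0)}=\mathfrak{r}\supset\mathfrak{r}^{(1)}\supset\dots$ be its derived series, with $\mathfrak{r}^{(k)}=0$ and $\mathfrak{a}:=\mathfrak{r}^{(k-1)}\neq 0$. A routine induction using the Jacobi identity shows that each $\mathfrak{r}^{(i)}$ is again an ideal of $\mh$ (characteristic ideals of an ideal are ideals of the ambient algebra), so $\mathfrak{a}$ is a nonzero \emph{abelian} ideal of $\mh$. Hence it suffices to prove that $\mh$ has no nonzero abelian ideal.

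Next I would produce enough ``regular points''. Fix the standard basis $e,h,f$ of $\msl_2(\kik)$, so that $\msl_2(\kik[\la])$ is free of rank $3$ over $\kik[\la]$ on $e,h,f$. Writing $d=\codim\mh<\infty$, among the $d+1$ elements $e\otimes\la^{0},\dots,e\otimes\la^{d}$ there is a nontrivial linear combination lying in $\mh$ (their images in the $d$-dimensional space $\msl_2(\kik[\la])/\mh$ are dependent); this gives a nonzero polynomial $p_e$ with $e\otimes p_e\in\mh$, and likewise $h\otimes p_h,\ f\otimes p_f\in\mh$ for nonzero $p_h,p_f$. For every $c\in\kik$ outside the finite zero set of $p_ep_hp_f$ the elements $\ev_c(e\otimes p_e)=p_e(c)e$, $\ev_c(h\otimes p_h)=p_h(c)h$, $\ev_c(f\otimes p_f)=p_f(c)f$ span $\msl_2(\kik)$, so the restriction $\ev_c\cl\mh\to\msl_2(\kik)$ is surjective. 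Call such $c$ regular; all but finitely many points are regular. I expect this step, extracting cofinitely many points at which $\mh$ still surjects onto $\msl_2(\kik)$, to be the main point of the argument.

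Finally, let $\mathfrak{a}\subset\mh$ be a nonzero abelian ideal, and let $c$ be regular. Since $\ev_c\cl\mh\to\msl_2(\kik)$ is a surjective homomorphism, the image $\ev_c(\mathfrak{a})$ is an ideal of $\msl_2(\kik)$; it is abelian as a homomorphic image of an abelian algebra, and since $\msl_2(\kik)$ is simple and nonabelian its only abelian ideal is $0$. Thus $\ev_c(X)=0$ for every $X\in\mathfrak{a}$ and every regular $c$. Writing $X=\sum_j x_j\la^j$ with $x_j\in\msl_2(\kik)$, each coordinate of $\ev_c(X)=\sum_j x_j c^{\,j}$ in a basis of $\msl_2(\kik)$ is a polynomial in $c$ vanishing at all but finitely many $c\in\kik$; as $\kik$ is infinite, these polynomials vanish identically, forcing $X=0$. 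This contradicts $\mathfrak{a}\neq 0$, so $\mh$ has no nonzero abelian ideal, and the lemma follows.
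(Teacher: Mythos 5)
Your proof is correct and takes essentially the same route as the paper's: both arguments rest on the evaluation homomorphisms $\msl_2(\kik[\la])\to\msl_2(\kik)$, $y\otimes f(\la)\mapsto f(c)\,y$, on the observation that finite codimension of $\mh$ yields nonzero polynomials $p$ with $b\otimes p\in\mh$ for basis elements $b$ (hence surjectivity of the evaluation restricted to $\mh$ for all but finitely many $c$), and on the simplicity of $\msl_2(\kik)$. The only differences are cosmetic: the paper skips your reduction to abelian ideals and instead evaluates at a single point $c$ chosen so that a fixed nonzero element of the solvable ideal has nonzero image, while you evaluate at all regular points and invoke the vanishing of polynomials over an infinite field.
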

\begin{proof}
For each $c\in\kik$, we have the surjective homomorphism 
\beq
\notag
\eta_c\cl\msl_2(\kik[\la])\cong\msl_2(\kik)\otimes\kik[\lambda]\,\to\,
\msl_2(\kik),\quad
\eta_c\big(y\otimes f(\la)\big)=f(c)y,\quad
y\in\msl_2(\kik),\quad f(\la)\in\kik[\lambda].
\ee

Let $b_1$, $b_2$, $b_3$ be a basis of $\msl_2(\kik)$.
Since $\mh$ is of finite codimension in $\msl_2(\kik[\la])$, 
there are nonzero polynomials $f_i(\la)\in\kik[\la]$, $i=1,2,3$, 
such that
\beq
\lb{bifi}
b_i\otimes f_i(\la)\in\mh,\qquad\quad i=1,2,3.
\ee

Suppose that there is a nonzero solvable ideal $I\subset\mh$.
Consider a nonzero element $\gamma\in I$. We have 
\beq
\notag
\gamma=b_1\otimes g_1(\la)+b_2\otimes g_2(\la)+b_3\otimes g_3(\la)
\ee
for some $g_1(\la),g_2(\la),g_3(\la)\in\kik[\la]$. As $\gamma\neq 0$, 
there is $k\in\{1,2,3\}$ such that the polynomial $g_k(\la)$ is nonzero.

Let $c\in\kik$ such that $f_i(c)\neq 0$, $i=1,2,3$, and $g_k(c)\neq 0$.
Then 
\beq
\lb{etagn0}
\eta_c(\mh)=\msl_2(\kik),\qquad\quad \eta_c(\gamma)\neq 0.
\ee
Since $\gamma$ belongs to a solvable ideal of $\mh$, relations~\er{etagn0} 
imply that the element $\eta_c(\gamma)$ generates a nonzero solvable ideal 
in~$\msl_2(\kik)$. This contradicts to the fact that 
$\msl_2(\kik)$ is a simple Lie algebra.
\end{proof}

\begin{remark}
\lb{rmhsl}
Taking $\mh=\msl_2(\kik[\la])$ in Lemma~\ref{simsl}, we see that 
there are no nonzero solvable ideals in $\msl_2(\kik[\la])$.
\end{remark}

\begin{lemma}
\lb{simmr}
In this lemma we assume $\kik=\Com$.
Let $e_1,e_2,e_3\in\Com$ such that $e_1\neq e_2\neq e_3\neq e_1$.
Consider the infinite-dimensional Lie algebra $\mR_{e_1,e_2,e_3}$ 
defined in Section~\ref{sfdce}.

Let $\mh\subset\mR_{e_1,e_2,e_3}$ be a subalgebra of finite codimension.
Then any solvable ideal of the Lie algebra~$\mh$ is zero.
\end{lemma}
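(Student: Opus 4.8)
The plan is to adapt the argument of Lemma~\ref{simsl}, replacing the scalar evaluation homomorphisms $\eta_c$ by evaluations at points of the elliptic curve. Recall that $\mR_{e_1,e_2,e_3}\subset\mathfrak{so}_3(\Com)\otimes_\Com E_{e_1,e_2,e_3}$ and that $E_{e_1,e_2,e_3}$ is the coordinate ring of the irreducible nonsingular curve $C\subset\Com^3$ defined by the polynomials~\er{elc}. For each point $p\in C$ one has the evaluation homomorphism $E_{e_1,e_2,e_3}\to\Com$, $h\mapsto h(p)$, which induces a Lie algebra homomorphism $\mathfrak{so}_3(\Com)\otimes_\Com E_{e_1,e_2,e_3}\to\mathfrak{so}_3(\Com)$, $\al\otimes h\mapsto h(p)\,\al$. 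Restricting it to $\mR_{e_1,e_2,e_3}$ gives a homomorphism $\eta_p\cl\mR_{e_1,e_2,e_3}\to\mathfrak{so}_3(\Com)$, and since $\eta_p(\al_i\otimes\hat v_i)=\hat v_i(p)\,\al_i$ while $\al_1,\al_2,\al_3$ is a basis of $\mathfrak{so}_3(\Com)$ (see~\er{xyz}), the image $\eta_p(\mR_{e_1,e_2,e_3})$ is already all of $\mathfrak{so}_3(\Com)$ whenever the $\hat v_i(p)$ are nonzero.

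Next I would exploit the finite codimension of $\mh$. For each fixed $i$ the elements $\al_i\otimes\hat v_i z^l$, $l\in\zp$, belong to the basis~\er{rbas} and are linearly independent, so they span an infinite-dimensional subspace $W_i\subset\mR_{e_1,e_2,e_3}$; as $\mh$ has finite codimension, $W_i\cap\mh$ has finite codimension in $W_i$ and is therefore nonzero. Hence there is a nonzero element $\al_i\otimes\hat v_i P_i(z)\in\mh$, where $P_i$ is a nonzero polynomial and $z$ is the function~\er{zvvv}. Since $\al_i\neq 0$, the function $\hat v_i P_i(z)$ is a nonzero element of $E_{e_1,e_2,e_3}$ and therefore vanishes only at finitely many points of the irreducible curve $C$. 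Thus $\eta_p\big(\al_i\otimes\hat v_i P_i(z)\big)=\big(\hat v_i P_i(z)\big)(p)\,\al_i$ is a nonzero multiple of $\al_i$ for all $p$ outside a finite set, and collecting the three indices $i=1,2,3$ I conclude that $\eta_p(\mh)=\mathfrak{so}_3(\Com)$ for all $p\in C$ outside a finite subset $S$.

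Finally, suppose towards a contradiction that $I\subset\mh$ is a nonzero solvable ideal, and choose $0\neq\gamma\in I$. Writing $\gamma=\sum_{i}\al_i\otimes h_i$ with $h_i\in E_{e_1,e_2,e_3}$ not all zero, the locus $\{p\in C\mid\eta_p(\gamma)=0\}=\{p\mid h_i(p)=0\ \forall\,i\}$ is contained in the zero set of some nonzero $h_i$, hence is a proper closed, therefore finite, subset of the irreducible curve $C$. I can thus pick $p_0\in C\setminus S$ with $\eta_{p_0}(\gamma)\neq 0$. For this $p_0$ the homomorphism $\eta_{p_0}\cl\mh\to\mathfrak{so}_3(\Com)$ is surjective, so $\eta_{p_0}(I)$ is an ideal of $\mathfrak{so}_3(\Com)$; as the image of the solvable ideal $I$ it is solvable, and it is nonzero because $\eta_{p_0}(\gamma)\neq 0$. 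This contradicts the fact that the simple Lie algebra $\mathfrak{so}_3(\Com)\cong\msl_2(\Com)$ has no nonzero solvable ideal. Therefore $\mh$ admits no nonzero solvable ideal.

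The step I expect to require the most care is the surjectivity of $\eta_p|_{\mh}$ for all but finitely many $p$: it rests on exploiting the infinite-dimensionality of each $\al_i$-component of $\mR_{e_1,e_2,e_3}$ through the powers of $z$, together with the fact, guaranteed by the hypothesis $e_1\neq e_2\neq e_3\neq e_1$, that $C$ is irreducible, so that a nonzero element of $E_{e_1,e_2,e_3}$ vanishes only on a finite set. Once this is in place, the remaining argument runs exactly parallel to Lemma~\ref{simsl}.
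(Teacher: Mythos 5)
Your proposal is correct and follows exactly the route the paper intends: the paper's own ``proof'' is just the one-line remark that Lemma~\ref{simmr} is proved ``similarly to Lemma~\ref{simsl}'' using the explicit structure (the basis~\er{rbas}) of $\mR_{e_1,e_2,e_3}$, and your argument is a faithful realization of that sketch, replacing the scalar evaluations $\eta_c$ by evaluations $\eta_p$ at points of the irreducible curve and using finite codimension to find elements $\al_i\otimes\hat v_i P_i(z)\in\mh$. Your handling of the final step (taking the image $\eta_{p_0}(I)$, which is a nonzero solvable ideal of the simple algebra $\mathfrak{so}_3(\Com)$) is in fact slightly cleaner than the wording in the paper's proof of Lemma~\ref{simsl}.
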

\begin{proof}
In Section~\ref{sfdce} we have described the explicit structure 
of $\mR_{e_1,e_2,e_3}$.
Using this description, one can prove Lemma~\ref{simmr} 
similarly to Lemma~\ref{simsl}.
\end{proof}

\begin{remark}
\lb{rmhmr}
Taking $\mh=\mR_{e_1,e_2,e_3}$ in Lemma~\ref{simmr}, we see that 
there are no nonzero solvable ideals in $\mR_{e_1,e_2,e_3}$.
\end{remark}

\begin{example}
\lb{rdckdv}
Let $\CE$ be the infinite prolongation of the KdV equation. 
According to Theorem~\ref{pfkdv}, one has 
$$
\fd^0(\CE,a)\cong\msl_2(\kik[\la])\oplus\kik^3,
$$
where $\kik^3$ is a $3$-dimensional abelian Lie algebra.

We have the surjective homomorphism 
$\rho_{0}\colon\fd(\CE,a)\to\fd^{0}(\CE,a)$ defined by~\er{drhok} 
in the case $k=0$.
Consider the surjective homomorphism $\psi\cl\fd(\CE,a)\to\msl_2(\kik[\la])$
equal to the composition of
\beq
\notag
\fd(\CE,a)\xrightarrow{\rho_{0}}\fd^{0}(\CE,a)\cong
\msl_2(\kik[\la])\oplus\kik^3\to\msl_2(\kik[\la]).
\ee
The definition of the topology on $\fd(\CE,a)$, Lemma~\ref{lkeropen}, 
Theorem~\ref{pfkdv}, and Remark~\ref{rmhsl} imply that an element 
$w\in\fd(\CE,a)$ is quasi-solvable iff $w\in\ker\psi$.
This yields $\rdc\big(\fd(\CE,a)\big)\cong\msl_2(\kik[\la])$.
\end{example}

\begin{example}
\lb{rdckn}
Let $e_1,e_2,e_3\in\Com$ such that $e_1\neq e_2\neq e_3\neq e_1$.
Let $\CE$ be the infinite prolongation of 
the Krichever-Novikov equation $\kne(e_1,e_2,e_3)$ given by~\er{knedef}.
According to Theorem~\ref{fdockn}, one has $\fd^0(\CE,a)=0$ and
$\fd^1(\CE,a)\cong\mR_{e_1,e_2,e_3}$.

We have the surjective homomorphism 
$\rho_{1}\colon\fd(\CE,a)\to\fd^{1}(\CE,a)$ defined by~\er{drhok} 
in the case $k=1$.
Consider the surjective homomorphism $\mu\cl\fd(\CE,a)\to\mR_{e_1,e_2,e_3}$
equal to the composition of 
\beq
\notag
\fd(\CE,a)\xrightarrow{\rho_{1}}\fd^{1}(\CE,a)\cong
\mR_{e_1,e_2,e_3}.
\ee
The definition of the topology on $\fd(\CE,a)$, Lemma~\ref{lkeropen}, 
Theorem~\ref{fdockn}, and Remark~\ref{rmhmr} imply that an element 
$w\in\fd(\CE,a)$ is quasi-solvable iff $w\in\ker\mu$.
This yields $\rdc\big(\fd(\CE,a)\big)\cong\mR_{e_1,e_2,e_3}$.
\end{example}

\subsection{Associative algebras related to Lie algebras}
\lb{sbarl}

Let $\bl$ be a Lie algebra. 
(In this subsection we do not consider any topology on~$\bl$.)
Consider a linear map $g\cl\bl\to\bl$ satisfying 
\beq
\lb{gpppp}
g([p_1,p_2])=[g(p_1),p_2]=[p_1,g(p_2)]\qquad\quad \forall\,p_1,p_2\in\bl. 
\ee
Property~\er{gpppp} is equivalent to 
\beq
\lb{gadp}
g\circ\ad(p_1)=\ad(p_1)\circ g\qquad\quad\forall\,p_1\in\bl,
\ee
where the map $\ad(p_1)\cl\bl\to\bl$ is given by the standard formula 
$\ad(p_1)(p_2)=[p_1,p_2]$ for all $p_2\in\bl$.
Relation~\er{gadp} means that the map $g\cl\bl\to\bl$ 
is an intertwining operator for the adjoint representation of $\bl$. 

Such operators are often used in the study of 
integrable PDEs with Lax pairs
(e.g., for construction of Poisson structures~\cite{reyman-semenov} 
and symmetry recursion operators~\cite{dem-sok}). 

Instead of operators $g\cl\bl\to\bl$, we need to consider 
linear maps $h\cl\mh\to\bl$, where $\mh\subset\bl$ is a Lie subalgebra.
We fix the Lie algebra $\bl$ and study linear maps defined 
on Lie subalgebras of $\bl$ of finite codimension, as follows.

An \emph{admissible pair} is a pair $(h,\mh)$, 
where $\mh\subset \bl$ is a Lie subalgebra of finite codimension 
and $h\cl \mh\to\bl$ is a linear map satisfying 
$h([p_1,p_2])=[h(p_1),p_2]=[p_1,h(p_2)]$ for any $p_1,p_2\in \mh$. 

Let $(\tilde h,\tilde \mh)$ be another admissible pair.
So $\tilde\mh\subset \bl$ is a subalgebra of finite codimension
and $\tilde h\cl\tilde\mh\to\bl$ is a linear map satisfying 
$\tilde{h}([p_1,p_2])=[\tilde{h}(p_1),p_2]=[p_1,\tilde{h}(p_2)]$ for any 
$p_1,p_2\in\tilde{\mh}$. 

Admissible pairs $(h,\mh)$ and $(\tilde h,\tilde \mh)$ are called \emph{equivalent}
if there is a subalgebra $\mathfrak{U}\subset \mh\cap \tilde \mh$ of finite codimension 
such that $h(w)=\tilde h(w)$ for all $w\in \mathfrak{U}$. 
It is easy to check that this is indeed an equivalence relation.
For each admissible pair $(h,\mh)$,
we denote by $[(h,\mh)]$ the corresponding equivalence class.
So $(h,\mh)$ and $(\tilde h,\tilde \mh)$ are equivalent iff 
$[(h,\mh)]=[(\tilde h,\tilde \mh)]$.

Let $\itw(\bl)$ be the set of such equivalence classes. 
So for each admissible pair $(h,\mh)$ we have $[(h,\mh)]\in\itw(\bl)$.

Note that, for any subalgebra $\mh\subset \bl$ of finite codimension, 
the pair $(0,\mh)$ is admissible, where $0\cl\mh\to\bl$ is the zero map.
For any other subalgebra $\tilde\mh\subset \bl$ of finite codimension, 
we have $[(0,\mh)]=[(0,\tilde \mh)]$, because $(0,\mh)$ and $(0,\tilde \mh)$
are equivalent. (To show that $(0,\mh)$ and $(0,\tilde \mh)$ are equivalent,
one can take $\mathfrak{U}=\mh\cap \tilde \mh$.)

As has been said in Section~\ref{subs-conv},
all algebras are supposed to be over the field~$\kik$.
The set $\itw(\bl)$ has a natural structure of associative algebra over~$\kik$, 
which is defined as follows.
\begin{itemize}
\item 
For an admissible pair $(h,\mh)$ and an element $c\in\kik$, 
we set $c\cdot[(h,\mh)]=[(ch,\mh)]$.
\item For admissible pairs $(h_1,\mh_1)$ and $(h_2,\mh_2)$, 
the sum and the product of the corresponding elements $[(h_1,\mh_1)]$, $[(h_2,\mh_2)]$ 
of $\itw(\bl)$ are defined as follows 
\begin{gather*}
[(h_1,\mh_1)]+[(h_2,\mh_2)]=[(h_1+h_2,\,\mh_1\cap \mh_2)],\qquad 
[(h_1,\mh_1)]\cdot[(h_2,\mh_2)]=[(h_1\circ h_2,\,\hat \mh)],\\
\hat \mh=\big\{w\in \mh_1\cap \mh_2\ \big|\ h_2(w)\in \mh_1\big\}.
\end{gather*}  
Here the map $h_1\circ h_2\colon\hat \mh\to \bl$ is given by the formula
$(h_1\circ h_2)(w)=h_1(h_2(w))$ for $w\in\hat \mh$.
Since $h_2(w)\in \mh_1$ for all $w\in\hat \mh$, the element $h_1(h_2(w))\in \bl$ 
is well defined.
It is easy to check that the pair $(h_1\circ h_2,\,\hat \mh)$ is admissible.
\end{itemize}
Note that $[(0,\mh)]\in\itw(\bl)$ is the zero element in the algebra $\itw(\bl)$.
As has been shown above, the equivalence class $[(0,\mh)]$ is the same 
for any subalgebra $\mh\subset \bl$ of finite codimension.

So for any Lie algebra $\bl$ we have defined the associative algebra $\itw(\bl)$.
Clearly, if $\bl$ is finite-dimensional then $\itw(\bl)=0$. 
So $\itw(\bl)$ can be nontrivial only for infinite-dimensional Lie algebras~$\bl$.

In the rest of this subsection we assume $\kik=\Com$.
Let $e_1,e_2,e_3\in\Com$ such that $e_1\neq e_2\neq e_3\neq e_1$.

Recall that $E_{e_1,e_2,e_3}$ defined by~\er{eeedef} 
is the commutative associative algebra of polynomial 
functions on the algebraic curve 
in $\Com^3$ defined by the polynomials~\eqref{elc}.
Since we assume $e_1\neq e_2\neq e_3\neq e_1$, 
the algebra $E_{e_1,e_2,e_3}$ is an integral domain. 
(That is, the product of any two nonzero elements of~$E_{e_1,e_2,e_3}$ 
is nonzero.)

Let $F_{e_1,e_2,e_3}$ be the fraction field of the ring $E_{e_1,e_2,e_3}$.
So elements of $F_{e_1,e_2,e_3}$ are fractions of the form $b/c$, 
where $b,c\in E_{e_1,e_2,e_3}$ and $c\neq 0$.

Recall that the \emph{function field} of an algebraic curve is   
the field of rational functions on this curve.
The element $z\in E_{e_1,e_2,e_3}$ is given by~\er{zvvv}.
Let $Q_{e_1,e_2,e_3}\subset F_{e_1,e_2,e_3}$ be the subfield generated by the elements 
$z$ and $y=\hat v_1\hat v_2\hat v_3$. 
Then $y^2=(z-e_1)(z-e_2)(z-e_3)$, and it is easily seen that 
the field $Q_{e_1,e_2,e_3}$ is isomorphic 
to the function field of the elliptic curve~\er{curez}.
So elements of $Q_{e_1,e_2,e_3}$ can be identified with rational functions 
on the curve~\er{curez}.

The infinite-dimensional Lie algebra $\mR_{e_1,e_2,e_3}$ 
has been described in Section~\ref{sfdce}.

\begin{theorem}
\label{itre}
For any Lie subalgebra $L\subset\mR_{e_1,e_2,e_3}$ of finite codimension, 
the associative algebra $\itw(L)$ is commutative and is isomorphic to 
the field $Q_{e_1,e_2,e_3}$.
\end{theorem}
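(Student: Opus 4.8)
The plan is to produce a ring isomorphism $Q_{e_1,e_2,e_3}\cong\itw(L)$ coming from multiplication by rational functions on the elliptic curve. First I would record the algebraic structure underlying $\mR_{e_1,e_2,e_3}$. Recall from Section~\ref{sfdce} that $\mR_{e_1,e_2,e_3}\subset\mathfrak{so}_3(\Com)\otimes_\Com E_{e_1,e_2,e_3}$ with bracket $[\al\otimes f,\be\otimes g]=[\al,\be]\otimes fg$, with basis~\er{rbas}, with $z$ as in~\er{zvvv} and $y=\hat v_1\hat v_2\hat v_3$. Let $R_0=\Com[z,y]\subset E_{e_1,e_2,e_3}$ be the coordinate ring of the affine elliptic curve; its fraction field is $Q_{e_1,e_2,e_3}$. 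Using $\hat v_i^2=z-e_i$ one checks on the generators $\al_i\otimes\hat v_i$ that multiplication by $z$ and by $y$ sends $\mR_{e_1,e_2,e_3}$ into itself (e.g.\ $y\cdot(\al_i\otimes\hat v_i)=\al_i\otimes(z-e_i)\hat v_j\hat v_k$). Hence $R_0$ acts on $\mR_{e_1,e_2,e_3}$ by multiplication, the bracket is $R_0$-bilinear, and $\mR_{e_1,e_2,e_3}$ is a torsion-free Lie algebra over $R_0$.

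Next I would pass to the generic fibre $\mathfrak{L}_Q:=\mR_{e_1,e_2,e_3}\otimes_{R_0}Q_{e_1,e_2,e_3}$. With respect to the $Q_{e_1,e_2,e_3}$-basis $\varepsilon_i=\al_i\otimes\hat v_i$, the relations~\er{xyz} give $[\varepsilon_i,\varepsilon_j]=\tfrac{y}{z-e_k}\,\varepsilon_k$ cyclically, with all structure constants in $Q_{e_1,e_2,e_3}^{\times}$. Thus $\mathfrak{L}_Q$ is a three-dimensional perfect Lie algebra over $Q_{e_1,e_2,e_3}$ and is a (possibly nonsplit) form of $\mathfrak{so}_3$, hence a central simple Lie algebra over $Q_{e_1,e_2,e_3}$. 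Consequently its centroid $\{g\in\mathrm{End}_{Q_{e_1,e_2,e_3}}(\mathfrak{L}_Q)\mid g[\xi,\eta]=[g\xi,\eta]=[\xi,g\eta]\}$ equals $Q_{e_1,e_2,e_3}$, i.e.\ consists exactly of the multiplication operators.

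To build the map, for $q\in Q_{e_1,e_2,e_3}$ let $m_q$ denote multiplication by $q$ on $\mathfrak{L}_Q$ and set $\mh_q=\{x\in L\mid m_q(x)\in L\}$. Since $q$ has only finitely many poles on the curve and $L$ has finite codimension in $\mR_{e_1,e_2,e_3}$, the subalgebra $\mh_q$ has finite codimension in $L$ and $(m_q,\mh_q)$ is admissible; I would then verify that $\Phi(q)=[(m_q,\mh_q)]$ defines a ring homomorphism $\Phi\colon Q_{e_1,e_2,e_3}\to\itw(L)$ (addition and composition of multiplication operators match the operations on admissible pairs). It is injective because $Q_{e_1,e_2,e_3}$ is a field and $\Phi(1)=[(\id,L)]\neq0$.

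Surjectivity is the crux. Given an admissible pair $(h,\mh)$, the identity $h([x,y])=[h(x),y]$ says exactly that $h$ commutes with $\ad(w)$ for all $w\in\mh$. The key lemma I must establish is that this forces $h$ to be $R_0$-linear after shrinking $\mh$ to a finite-codimension subalgebra, i.e.\ $h(zx)=z\,h(x)$ and $h(yx)=y\,h(x)$. I would prove this by exhibiting the multiplication operators $m_z$ and $m_y$ as fixed noncommutative polynomials in the operators $\ad(w)$, $w\in\mh$; such identities come from explicit double brackets (for instance $\ad(\al_1\otimes\hat v_1)^2$ acts as $-(z-e_1)$ off the $\al_1$-component, and a symmetric combination over the three indices recovers $m_z$, with $m_y$ handled analogously). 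Given $R_0$-linearity, since the $Q_{e_1,e_2,e_3}$-span of the finite-codimension subalgebra $\mh$ is all of $\mathfrak{L}_Q$, the map $h$ extends uniquely to a $Q_{e_1,e_2,e_3}$-linear endomorphism $h_Q$ of $\mathfrak{L}_Q$ lying in its centroid; by the previous paragraph $h_Q=m_q$ for a unique $q\in Q_{e_1,e_2,e_3}$, whence $[(h,\mh)]=\Phi(q)$. This makes $\Phi$ an isomorphism, and commutativity of $\itw(L)$ is then automatic since it is the image of the field $Q_{e_1,e_2,e_3}$. The main obstacle is precisely the $R_0$-linearity lemma: showing that the adjoint action of an arbitrary finite-codimension subalgebra already generates multiplication by the two coordinate functions $z$ and $y$, which is where the explicit elliptic structure of $\mR_{e_1,e_2,e_3}$ must be used.
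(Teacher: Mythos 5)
Your first half is fine: the embedding $\Phi\colon Q_{e_1,e_2,e_3}\to\itw(L)$ via multiplication operators, and the observation that the generic fibre $\mR_{e_1,e_2,e_3}\otimes_{R_0}Q_{e_1,e_2,e_3}$ is a form of $\mathfrak{so}_3$ with centroid $Q_{e_1,e_2,e_3}$, are both correct and the former coincides with the paper's map $\Psi(f)=[(G_f,L_f)]$, where the finite codimension of $L_f$ is proved by writing $f=d_1(y,z)/d_2(z)$ as in~\er{hd1d2}. The genuine gap is in surjectivity, exactly at what you yourself call the key lemma. You propose to realize $m_z$ and $m_y$ as fixed noncommutative polynomials in operators $\ad(w)$ with $w\in\mh$, citing identities such as $\ad(\al_1\otimes\hat v_1)^2$. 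This fails in two ways. First, the elements $\al_i\otimes\hat v_i$ (and their multiples) need not belong to $\mh$ at all: $\mh$ is an \emph{arbitrary} subalgebra of finite codimension, and a finite-codimension subspace of a finitely generated $\Com[z,y]$-module need not contain any prescribed element, nor even any finite-codimension submodule (already in $\Com[z]$ there exist hyperplanes whose largest contained ideal is zero, e.g.\ the kernel of $p\mapsto\sum_i i!\,[z^i]p$). So the lemma cannot be reduced to identities written in the full algebra; one must show that the intersections $\mh\cap\mR^i$ themselves contain enough elements, which requires a further argument (for instance a pole-order/sumset estimate showing that the span of pairwise products of a finite-codimension subspace of $E_{g_i}$ has finite codimension in $\Com[z,y]$), and no such argument appears in your proposal. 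Second, even with elements of the full algebra, the ``symmetric combination over the three indices'' is not a multiplication operator: $-\sum_i\ad(\al_i\otimes\hat v_i)^2$ acts on the component $\mR^i$ as multiplication by $2z-e_j-e_k$, and these functions differ across the three components because the $e_i$ are distinct. To produce an honest $m_q$ one must, within each index $i$, realize the \emph{same} function $q$ as a sum of products of two elements of $\mR^i$ (possible, using that $z-e_1$ and $(z-e_2)(z-e_3)$ are coprime), and then again this must be done with factors lying in $\mh$.

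For comparison, the paper's surjectivity argument sidesteps this entirely and is much shorter: setting $H^i=\mR^i\cap H$ and $\tilde H=H^1+H^2+H^3$ (a subalgebra of finite codimension), the relation $[h(w_i),w_i]=h([w_i,w_i])=0$ for $0\neq w_i\in H^i$ forces $h(w_i)=f_i\cdot w_i$ for some $f_i\in Q_{e_1,e_2,e_3}$, because by~\er{xyz} and~\er{wli} the centralizer of a nonzero element $\al_i\otimes u$ inside $\mathfrak{so}_3(\Com)\otimes F_{e_1,e_2,e_3}$ is $\al_i\otimes F_{e_1,e_2,e_3}$; then $h([w_i,w_j])=[h(w_i),w_j]=[w_i,h(w_j)]$ together with $[w_i,w_j]\neq0$ forces all the multipliers $f_i$ (over all elements and all indices) to coincide, so $h$ agrees with a single $G_{f'}$ on $\tilde H$. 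In other words, the $R_0$-linearity you are missing is exactly what the centralizer trick renders unnecessary. Your route can in principle be completed by supplying the subspace-product estimate described above, but as written the surjectivity half of the theorem is not proved.
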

\begin{proof}
The space $\mathfrak{so}_3(\mathbb{C})\otimes F_{e_1,e_2,e_3}$ 
has the $F_{e_1,e_2,e_3}$-module structure given by 
$$
f_1\cdot\big(w\otimes f_2\big)=w\otimes f_1f_2,\qquad 
w\in\mathfrak{so}_3(\mathbb{C}),\quad f_1,f_2\in F_{e_1,e_2,e_3}.
$$
Since $E_{e_1,e_2,e_3}\subset F_{e_1,e_2,e_3}$, one has the natural inclusions of Lie algebras
$$
\mR_{e_1,e_2,e_3}\,\subset\,\mathfrak{so}_3(\mathbb{C})\otimes E_{e_1,e_2,e_3}
\,\subset\,\mathfrak{so}_3(\mathbb{C})\otimes F_{e_1,e_2,e_3}. 
$$
For each $f\in F_{e_1,e_2,e_3}$ consider the map 
$$
G_f\colon\mR_{e_1,e_2,e_3}\to\mathfrak{so}_3(\mathbb{C})\otimes F_{e_1,e_2,e_3},
\qquad G_f(p)=f\cdot p,\qquad p\in\mR_{e_1,e_2,e_3}. 
$$
Obviously, 
\beq
\label{gfpp}
G_f([p_1,p_2])=[G_f(p_1),p_2]=[p_1,G_f(p_2)]\qquad \forall\,p_1,\,p_2.
\ee 

Recall that 
\beq
\label{zydef}
z=\hat v_1^2+e_1=\hat v_2^2+e_2=\hat v_3^2+e_3,\qquad 
y=\hat v_1\hat v_2\hat v_3.
\ee
Recall that the elements~\er{rbas} form a basis for $\mR_{e_1,e_2,e_3}$.
Let $d_1(y,z)$ be a polynomial in $y,\,z$ and $d_2(z)\neq 0$ be a polynomial in $z$.
Using the basis~\er{rbas}, one gets that
\beq
\notag
G_{d_1(y,z)}\big(\mR_{e_1,e_2,e_3}\big)\,\subset\,\mR_{e_1,e_2,e_3},\qquad 
G_{d_2(z)}\big(\mR_{e_1,e_2,e_3}\big)\,\subset\,\mR_{e_1,e_2,e_3},
\ee
and the space $G_{d_2(z)}\big(\mR_{e_1,e_2,e_3}\big)$ 
is of finite codimension in $\mR_{e_1,e_2,e_3}$. 
Using this property and the assumption $\codim L<\infty$, 
we obtain that 
\beq
\label{hd1d2}
\text{the subspace 
$\tilde L=\big\{w\in L\ \big|\ G_{d_1(y,z)}(w)\,\in\, G_{d_2(z)}(L)\big\}$ 
is of finite codimension in $L$.}
\ee
Since $y^2=(z-e_1)(z-e_2)(z-e_3)$, 
any element $f\in Q_{e_1,e_2,e_3}$ can be presented as a fraction 
of such polynomials $f=\dfrac{d_1(y,z)}{d_2(z)}$. 
Then from property~\er{hd1d2} it follows that the subspace 
$$
L_f=\big\{w\in L\ \big|\ G_f(w)\,\in\, L\big\}
$$
is of finite codimension in $L$. 
Relation~\er{gfpp} implies that $L_f$ is a Lie subalgebra of $L$. 
Therefore, the pair $(G_f,L_f)$ determines an element of $\itw(L)$, 
and we obtain the embedding 
$$
\Psi\cl Q_{e_1,e_2,e_3}\hookrightarrow\itw(L),\qquad \Psi(f)=[(G_f,L_f)].
$$    
It remains to show that the map $\Psi$ is surjective. 

Let $[(h,H)]\in\itw(L)$, where $H\subset L$ is a subalgebra 
of finite codimension and 
\beq
\label{hpp}
h\cl H\to L,\qquad h([p_1,p_2])=[h(p_1),p_2]=[p_1,h(p_2)]\qquad 
\forall\,p_1,\,p_2\in H.
\ee 
Let $\mR^i\subset\mR_{e_1,e_2,e_3}$ be the subspace spanned 
by the elements~\er{rbas} for fixed $i=1,2,3$.  
Then $\mR_{e_1,e_2,e_3}=\mR^1\oplus\mR^2\oplus\mR^3$ as vector spaces, and 
\begin{gather}
\label{wli}
\forall\,w\in\mR^i\quad\text{there is a unique $f\in Q_{e_1,e_2,e_3}$ 
such that}\,\ w=\al_i\otimes \hat v_if. 
\end{gather}

Set $H^i=\mR^i\cap H$. Due to properties~\er{xyz},~\er{wli}, 
the space $\tilde H=H^1+H^2+H^3$ is a Lie subalgebra of~$H$.
Since $H$ is of finite codimension in $\mR_{e_1,e_2,e_3}$, 
the subalgebra $\tilde H$ is of finite codimension in~$H$.

Let $w_i\in H^i,\ w_i\neq 0,\ i=1,2,3$. 
Then $[h(w_i),w_i]=h([w_i,w_i])=0$. From~\er{xyz},~\er{wli} 
it follows that $h(w_i)=f_i\cdot w_i$ for some $f_i\in Q_{e_1,e_2,e_3}$. 
Then 
\beq
\label{hww}
h([w_1,w_2])=[h(w_1),w_2]=[w_1,h(w_2)]=f_1\cdot[w_1,w_2]=f_2\cdot[w_1,w_2].
\ee
Since, by properties~\er{xyz},~\er{wli}, one has $[w_1,w_2]\neq 0$, 
relation~\er{hww} implies $f_1=f_2$. Similarly, one shows that $f_1=f_2=f_3$. 

Therefore, for any other nonzero elements $w'_i\in H^i$, 
we also get $h(w'_i)=f'\cdot w'_i$ for some $f'\in Q_{e_1,e_2,e_3}$. 
Similarly to~\er{hww}, 
one obtains $h([w_1,w'_2])=f_1\cdot[w_1,w'_2]=f'\cdot[w_1,w'_2]$, 
which implies $f'=f_1$. 

Thus there is a unique $f'\in Q_{e_1,e_2,e_3}$ such that 
$h\big|_{\tilde H}=G_{f'}\big|_{\tilde H}$. 
Therefore, $[(h,H)]=[(G_{f'},\tilde H)]$ in $\itw(L)$, 
that is, $[(h,H)]=\Psi(f')$.
\end{proof} 
 
Similarly to Theorem~\ref{itre}, one proves the following result.
\begin{theorem} 
\label{itsl}
For any Lie subalgebra $L\subset\msl_2(\Com[\la])$ of finite codimension, 
the associative algebra $\itw(L)$ is commutative and is isomorphic to 
the field of rational functions in $\la$.
\end{theorem}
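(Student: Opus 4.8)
The plan is to mirror the proof of Theorem~\ref{itre}, replacing the elliptic function field $Q_{e_1,e_2,e_3}$ by the rational function field $\Com(\la)$ and the decomposition $\mR_{e_1,e_2,e_3}=\mR^1\oplus\mR^2\oplus\mR^3$ by the root-space-type decomposition of $\msl_2(\Com[\la])$. First I would embed $\msl_2(\Com[\la])\cong\msl_2(\Com)\otimes_\Com\Com[\la]$ into $\msl_2(\Com)\otimes_\Com\Com(\la)$, which is a vector space over the field $\Com(\la)$ and carries the $\Com(\la)$-module structure $f\cdot(w\otimes g)=w\otimes fg$. For each $f\in\Com(\la)$ the scalar-multiplication map $G_f(p)=f\cdot p$ satisfies $G_f([p_1,p_2])=[G_f(p_1),p_2]=[p_1,G_f(p_2)]$, exactly as in~\er{gfpp}.

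Next I would construct an embedding $\Psi\cl\Com(\la)\hookrightarrow\itw(L)$. Writing $f=d_1(\la)/d_2(\la)$ with $d_1,d_2\in\Com[\la]$ and $d_2\neq 0$, I note that multiplication by any polynomial $d\in\Com[\la]$ maps $\msl_2(\Com[\la])$ into itself with image of finite codimension (equal to $3\deg d$). Combining this with the assumption $\codim L<\infty$, the argument of~\er{hd1d2} shows that $L_f=\{w\in L\mid G_f(w)\in L\}$ is of finite codimension in $L$; the intertwining property makes $L_f$ a Lie subalgebra, so $(G_f,L_f)$ is an admissible pair and $\Psi(f)=[(G_f,L_f)]$ is well defined. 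Additivity, multiplicativity, and injectivity of $\Psi$ then follow exactly as for the corresponding map in Theorem~\ref{itre}.

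The heart of the argument is surjectivity. Let $b_1,b_2,b_3$ be a basis of $\msl_2(\Com)$ and write $\msl_2(\Com[\la])=(b_1\otimes\Com[\la])\oplus(b_2\otimes\Com[\la])\oplus(b_3\otimes\Com[\la])$; each summand is an abelian subalgebra, and $[b_i,b_j]\neq 0$ for $i\neq j$, since linearly independent elements of $\msl_2(\Com)$ never commute (the centralizer of any nonzero element of $\msl_2(\Com)$ is one-dimensional). Given an admissible pair $(h,H)$, I set $H^i=H\cap(b_i\otimes\Com[\la])$; as in the proof of Theorem~\ref{itre}, the sum $H^1+H^2+H^3$ is a Lie subalgebra of finite codimension in $H$. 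For a nonzero $w=b_i\otimes g\in H^i$, the identity $[h(w),w]=h([w,w])=0$ forces $h(w)$ into the centralizer of $w$ in $\msl_2(\Com)\otimes\Com(\la)$; since $\ad(b_i)$ has one-dimensional kernel $\Com\,b_i$, this centralizer equals $b_i\otimes\Com(\la)$, so $h(w)=f_i\cdot w$ for a unique $f_i\in\Com(\la)$. Applying $h$ to $[w_1,w_2]$ for nonzero $w_1\in H^1$, $w_2\in H^2$ and using $[w_1,w_2]\neq 0$ yields $f_1=f_2$, and likewise $f_1=f_2=f_3=:f'$. Hence $h$ coincides with $G_{f'}$ on $H^1+H^2+H^3$, so $[(h,H)]=\Psi(f')$, and $\Psi$ is surjective. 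Since $\itw(L)$ is therefore isomorphic to $\Com(\la)$ via $\Psi$, it is in particular commutative.

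I expect the main obstacle to be the same bookkeeping that appears in Theorem~\ref{itre}: verifying that the successive subspaces $L_f$ and $H^1+H^2+H^3$ really remain of finite codimension, so that every pair produced stays admissible, and checking the centralizer identity $\ker\ad(b_i)=\Com\,b_i$ uniformly for whichever semisimple or nilpotent basis elements are chosen. These steps are routine for $\msl_2(\Com)$, but they must be carried out with care to ensure that $\Psi$ is a well-defined algebra homomorphism and not merely a set map.
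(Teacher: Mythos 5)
Your overall strategy is exactly the paper's: the paper proves Theorem~\ref{itsl} by the remark ``Similarly to Theorem~\ref{itre}, one proves the following result,'' and your construction of $\Psi\cl\Com(\la)\to\itw(L)$ via $G_f$, the finite-codimension argument for $L_f$ modelled on~\er{hd1d2}, and the surjectivity argument via cross-brackets are all faithful transplants of that proof. The centralizer step is also sound: every nonzero element of $\msl_2(\Com)$ has one-dimensional centralizer, and this persists under the field extension $\Com\subset\Com(\la)$, so $h(w)\in b_i\otimes\Com(\la)$ as you claim.

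There is, however, one step that fails as written. You take an \emph{arbitrary} basis $b_1,b_2,b_3$ of $\msl_2(\Com)$ and assert that, ``as in the proof of Theorem~\ref{itre},'' the sum $H^1+H^2+H^3$ is a Lie subalgebra of $H$. In the proof of Theorem~\ref{itre} this rests on the cyclic relations~\er{xyz} together with~\er{rbas}: one has $[\mR^i,\mR^j]\subset\mR^k$, whence $[H^i,H^j]\subset H\cap\mR^k=H^k$. For a generic basis of $\msl_2(\Com)$ the bracket $[b_i,b_j]$ has components along several basis vectors (e.g.\ $b_1=e$, $b_2=f$, $b_3=h+f$ gives $[b_1,b_3]=-2b_1-b_2+b_3$), so $[H^i,H^j]$ lands in $H$ but not, in general, in $H^1+H^2+H^3$, and the sum need not be closed under the bracket. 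This matters because the equivalence of admissible pairs, which you invoke to conclude $[(h,H)]=\Psi(f')$, requires a common \emph{subalgebra} of finite codimension on which the two maps agree; finite codimension alone (which does hold for any basis) is not enough. The gap is easy to close in either of two ways: (i) choose the basis so that each pairwise bracket is a multiple of a single basis vector --- the Chevalley basis $e,h,f$ works, as does the image of the $\mathfrak{so}_3$-basis~\er{xyz} under an isomorphism $\mathfrak{so}_3(\Com)\cong\msl_2(\Com)$, and then your argument goes through verbatim; or (ii) keep the arbitrary basis but replace $H^1+H^2+H^3$ by the agreement set $W=\{w\in H\mid G_{f'}(w)\in L,\ h(w)=G_{f'}(w)\}$, which contains $H^1+H^2+H^3$ and is automatically a subalgebra, since for $w_1,w_2\in W$ one has $f'[w_1,w_2]=[h(w_1),w_2]\in L$ and $h([w_1,w_2])=[h(w_1),w_2]=f'[w_1,w_2]$. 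Note also that the pitfall you flag in your last paragraph (the centralizer identity) is not actually basis-sensitive; the basis choice matters only for the closure property above.
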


\section{Necessary conditions for existence of B\"acklund transformations}
\lb{sflapde}

Recall that, for every topological space $X$ and every point $a\in X$, 
one has the fundamental group $\pi_1(X,a)$, which provides important information 
about the space $X$.
The preprint~\cite{cfg2017} introduces an analog of fundamental groups for PDEs. 
However, the ``fundamental group of a PDE'' is not a group, but a certain system of Lie algebras, which are called fundamental Lie algebras.

According to Remark~\ref{rgapde} and Section~\ref{spdejs}, 
a PDE can be viewed as a manifold $\CE$ with 
an $n$-dimensional distribution (the Cartan distribution) such that 
solutions of the PDE correspond to $n$-dimensional integral submanifolds, 
where $n$ is the number of independent variables in the PDE.
To simplify notation, we do not mention the Cartan distribution explicitly.

For every PDE $\CE$ and every point $a\in\CE$, the preprint~\cite{cfg2017} 
defines a Lie algebra $\fd(\CE,a)$, 
which is called the fundamental Lie algebra of the PDE $\CE$ at the point $a\in\CE$.
In general, $\fd(\CE,a)$ can be infinite-dimensional.
The definition of $\fd(\CE,a)$ in~\cite{cfg2017} is coordinate-free and uses 
geometry of the manifold $\CE$ and the Cartan distribution.
According to~\cite{cfg2017}, the Lie algebra $\fd(\CE,a)$ has a natural topology. 

The definition of $\fd(\CE,a)$ in~\cite{cfg2017}
is applicable to PDEs with any number of variables.
According to~\cite{cfg2017}, if $\CE$ is a $(1+1)$-dimensional evolution PDE,
then the fundamental Lie algebra $\fd(\CE,a)$ introduced in~\cite{cfg2017}
is isomorphic to the Lie algebra $\fd(\CE,a)$ defined in Section~\ref{subsecbt}
as the inverse limit of the sequence~\er{intfdoc1}, 
which is equal to the sequence~\er{fdnn-1}.

We need to recall a well-known property of topological coverings.
Let $\tau\colon\mcov\to M$ be a topological covering, 
where $M$ and $\mcov$ are finite-dimensional manifolds.
Let $\acov\in\mcov$. Consider the point $\tau(\acov)\in M$.
Then the fundamental group $\pi_1(\mcov,\acov)$ is isomorphic 
to a subgroup of the fundamental group $\pi_1(M,a)$.

One has an analogous property for differential coverings of PDEs.
The following proposition is proved in~\cite{cfg2017}.
\begin{proposition}[\cite{cfg2017}]
\lb{prcovfa}
Let $\tau\colon\ecov\to\ce$ be a differential covering, 
where $\ce$ and $\ecov$ are PDEs. We suppose that the fibers of $\tau$
are finite-dimensional.

Let $\acov\in\ecov$. Consider the point $\tau(\acov)\in\ce$.
Consider the fundamental Lie algebra $\fd(\ecov,\acov)$
of $\ecov$ at $\acov\in\ecov$ and 
the fundamental Lie algebra $\fd(\ce,\tau(\acov))$ of $\ce$ at $\tau(\acov)\in\ce$.
According to the definition of the fundamental Lie algebras, 
we have a topology on $\fd(\ecov,\acov)$ and a topology on $\fd(\ce,\tau(\acov))$.

Then one has an embedding 
$$
\vf\cl\fd(\ecov,\acov)\hookrightarrow\fd(\ce,\tau(\acov))
$$
such that 
\begin{itemize}
\item the subalgebra $\vf\big(\fd(\ecov,\acov)\big)\subset\fd(\ce,\tau(\acov))$
is of finite codimension in $\fd(\ce,\tau(\acov))$,
\item the subalgebra $\vf\big(\fd(\ecov,\acov)\big)$ 
is open and closed in $\fd(\ce,\tau(\acov))$
with respect to the topology on $\fd(\ce,\tau(\acov))$, 
\item the isomorphism 
$\vf\cl\fd(\ecov,\acov)\xrightarrow{\sim}\vf\big(\fd(\ecov,\acov)\big)$
is a homeomorphism with respect to the topologies on 
$\fd(\ecov,\acov)$ and $\vf\big(\fd(\ecov,\acov)\big)$.
\end{itemize}
\end{proposition}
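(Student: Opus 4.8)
The plan is to realize $\vf$ not as an abstract map but as the inclusion of an explicit ``isotropy'' subalgebra, in direct analogy with the classical fact that for a topological covering $\tau\cl\mcov\to M$ the induced map identifies $\pi_1(\mcov,\acov)$ with the stabilizer of the point $\acov$ in the fiber $\tau^{-1}(\tau(\acov))$ under the monodromy action of $\pi_1(M,\tau(\acov))$; the number of sheets (the cardinality of the fiber) becomes here the finite codimension, bounded by the dimension of the fiber. I use the geometric description of the fundamental Lie algebra from~\cite{cfg2017}: in the evolution case it is the inverse limit of~\er{fdnn-1}, so that (Remark~\ref{fdrzcr}) continuous homomorphisms of $\fd(\ce,\tau(\acov))$ into a finite-dimensional Lie algebra are the same as gauge classes of ZCRs near $\tau(\acov)$, with the topology governed by Lemma~\ref{lkeropen} and Definition~\ref{dtame}.

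First I would extract the holonomy of the covering. By Definition~\ref{dcpde} the map $\tau$ is locally a projection $\ce\times W\to\ce$ with $\dim W=q$, and its differential carries the Cartan subspaces of $\ecov$ isomorphically onto those of $\ce$; hence the lifts of $D_x$ and $D_t$ to $\ecov$ have the form $D_x+\Xi$ and $D_t+\mathrm{T}$, where $\Xi,\mathrm{T}$ are $W$-tangent vector fields depending on $x$, $t$ and on finitely many $u_k$, and flatness of the Cartan distribution on $\ecov$ is exactly a zero-curvature condition for this pair. This datum is a ZCR with values in vector fields on $W$, and because its coefficients involve the $u_k$ only up to a fixed finite order, the associated homomorphism $\theta\cl\fd(\ce,\tau(\acov))\to(\text{vector fields on }W)$ has open kernel; by Lemma~\ref{lkeropen} it therefore factors through $\rho_{\oc}$ for some $\oc$. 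The marked point $\acov$ lies in the fiber $W$ over $\tau(\acov)$.

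The core of the argument is then to identify the image $\vf\big(\fd(\ecov,\acov)\big)$ with the isotropy subalgebra $H=\{c\in\fd(\ce,\tau(\acov))\mid\theta(c)\big|_{\acov}=0\}$. Granting this, everything else follows formally: the vector fields on $W$ vanishing at $\acov$ form a subalgebra $\ms$ whose codimension among all vector fields is at most $\dim T_{\acov}W=q$, so $H=\theta^{-1}(\ms)$ has codimension at most $q$; since $\theta$ factors through $\rho_{\oc}$, the subalgebra $H$ is of the form $\rho_{\oc}^{-1}(\mathfrak{h})$ and is therefore tame in the sense of Definition~\ref{dtame}, hence open and closed by Remark~\ref{rtame}; and $\vf$, being the inclusion of $H$, is tautologically injective. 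Moreover, pulling a ZCR on $\ce$ back along $\tau$ corresponds, under Remark~\ref{fdrzcr}, precisely to restricting its classifying homomorphism from $\fd(\ce,\tau(\acov))$ to $H$, which both pins down $\vf$ intrinsically and shows that $\vf$ is a homeomorphism onto $H$, provided the order filtration on $\ecov$ matches the one induced from $\ce$.

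The step that needs real work, and that I would borrow from the coordinate-free construction of~\cite{cfg2017}, is the identification $\fd(\ecov,\acov)\cong H$ itself: one must show that the fundamental Lie algebra of the total space of the covering is canonically the isotropy subalgebra of the marked fiber point, i.e.\ that every ZCR on $\ecov$ compatible with $\acov$ comes, up to gauge and up to passing to a finite order, from a ZCR on $\ce$ whose holonomy fixes $\acov$ --- the Lie-algebraic counterpart of ``a loop upstairs projects to a loop that lifts''. The second delicate point is purely topological: to upgrade the bijection $\fd(\ecov,\acov)\to H$ to a homeomorphism I must check that the inverse-limit topologies on the two sides agree, which reduces to controlling by how much the order of a ZCR can change under pullback and push-forward along $\tau$. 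Once this bounded order shift is established, the matching of topologies and the finite codimension bound $q$ fall out together.
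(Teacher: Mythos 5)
You should note at the outset that the paper contains no proof of Proposition~\ref{prcovfa} to compare against: it is quoted from~\cite{cfg2017}, where the fundamental Lie algebra $\fd(\CE,a)$ is defined coordinate-freely for arbitrary PDEs and the proposition is established. Your sketch captures the intended heuristic --- the Lie-algebraic analogue of the stabilizer theorem for topological coverings, which is exactly the analogy the paper itself draws in the paragraph preceding the proposition --- but the two steps you defer are not peripheral technicalities; they are the entire content of the cited result, so the proposal is a reduction of the proposition to~\cite{cfg2017} rather than a proof of it.

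Concretely, two gaps. First, your monodromy homomorphism $\theta\cl\fd(\ce,\tau(\acov))\to(\text{vector fields on }W)$ does not follow from anything available in this paper. The correspondence between ZCRs and homomorphisms (Remark~\ref{fdrzcr}) rests on the normal-form Theorem~\ref{thnfzcr}, and both are proved in~\cite{scal13} only for ZCRs with values in a \emph{finite-dimensional matrix} Lie algebra $\mg$, with gauge transformations valued in the matrix group $\mathcal{G}$. A covering is a ``ZCR'' with values in the infinite-dimensional Lie algebra of vector fields on $W$, whose gauge transformations are fiberwise diffeomorphisms; the analogous normal-form and universality statements there require a separate argument (this is precisely the covering--action correspondence of~\cite{cfa,cfg2017}), and only after $\theta$ exists as a homomorphism can Lemma~\ref{lkeropen} be invoked to factor it through some $\rho_{\oc}$. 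Second, the identification $\fd(\ecov,\acov)\cong H$ with the isotropy subalgebra, which you rightly call the step needing real work, cannot even be \emph{formulated} with the tools of this paper: in the intended application $\ecov=\CE^3$ of a B\"acklund transformation is a general, typically non-evolution PDE, and the inverse-limit definition of $\fd$ via~\er{fdnn-1} exists only for scalar evolution equations, so $\fd(\ecov,\acov)$ has no meaning here without importing the coordinate-free construction of~\cite{cfg2017}; the ``bounded order shift'' underlying your homeomorphism claim has the same status. What does stand on its own is the formal part: granting $\theta$ and the isotropy identification, the codimension bound by $q=\dim W$, tameness in the sense of Definition~\ref{dtame}, openness and closedness via Remark~\ref{rtame}, injectivity of $\vf$, and the compatibility of the embedding with pullback of ZCRs all follow as you indicate.
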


For a $(1+1)$-dimensional scalar evolution equation $\CE$ and a point $a\in\CE$, 
the notion of tame Lie subalgebra $H\subset\fd(\CE,a)$ 
has been defined in Definition~\ref{dtame} and discussed in Remark~\ref{rtame}.
Now we can prove Theorem~\ref{scfaexbt}, which is repeated below.
\begin{theorem}
\label{ptncbt} 
Let $\CE^1$ and $\CE^2$ be $(1+1)$-dimensional scalar evolution equations. 
For each $i=1,2$, the symbol $\CE^i$ denotes also the infinite prolongation 
of the corresponding equation.
\textup{(}So on the manifold $\CE^i$ we have the Cartan distribution 
spanned by the total derivative operators.\textup{)}

Suppose that $\CE^1$ and $\CE^2$ are connected by a B\"acklund transformation. 
Then for each $i=1,2$ there are a point $a_i\in\CE^i$ and a tame 
subalgebra $H_i\subset\fd(\CE^i,a_i)$ such that 
\begin{itemize}
\item $H_i$ is of finite codimension in $\fd(\CE^i,a_i)$,
\item $H_1$ is isomorphic to $H_2$, and this isomorphism is a homeomorphism 
with respect to the topology induced by the embedding $H_i\subset\fd(\CE^i,a_i)$. 
\end{itemize} 
\end{theorem}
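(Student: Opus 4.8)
The plan is to apply Proposition~\ref{prcovfa} to each of the two coverings comprising the B\"acklund transformation and then to glue the resulting embeddings through the fundamental Lie algebra of the intermediate equation. First I would unwind the hypothesis: by Definition~\ref{defbt}, since $\CE^1$ and $\CE^2$ are connected by a B\"acklund transformation, there is a PDE $\CE^3$ together with differential coverings $\tau_1\cl\CE^3\to\CE^1$ and $\tau_2\cl\CE^3\to\CE^2$, and by Remark~\ref{rbtfd} these coverings have finite-dimensional fibers. I would fix an arbitrary point $a_3\in\CE^3$ and set $a_1=\tau_1(a_3)\in\CE^1$ and $a_2=\tau_2(a_3)\in\CE^2$; these are the points whose existence the theorem asserts.

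Next I would invoke Proposition~\ref{prcovfa} separately for $\tau_1$ and $\tau_2$. Applied to $\tau_i\cl\CE^3\to\CE^i$ with basepoint $a_3$, the proposition produces an embedding $\vf_i\cl\fd(\CE^3,a_3)\hookrightarrow\fd(\CE^i,a_i)$ whose image $H_i=\vf_i\big(\fd(\CE^3,a_3)\big)$ is of finite codimension in $\fd(\CE^i,a_i)$, is open and closed in the topology on $\fd(\CE^i,a_i)$, and onto which $\vf_i$ is a homeomorphism for the subspace topology. Here I use that, for the $(1+1)$-dimensional evolution equations $\CE^i$, the fundamental Lie algebra of~\cite{cfg2017} coincides, as a topological Lie algebra, with the inverse limit $\fd(\CE^i,a_i)$ of the sequence~\er{fdnn-1}, as recorded at the beginning of this section. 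The finite-codimension requirement is then immediate, and by Remark~\ref{rtame} the fact that $H_i$ is open and closed is exactly what makes $H_i$ tame in the sense of Definition~\ref{dtame}.

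Finally I would obtain the isomorphism between $H_1$ and $H_2$ by composing the two homeomorphisms through their common source. Since each $\vf_i\cl\fd(\CE^3,a_3)\xrightarrow{\sim}H_i$ is simultaneously an isomorphism of Lie algebras and a homeomorphism for the subspace topologies, the map $\vf_2\circ\vf_1^{-1}\cl H_1\xrightarrow{\sim}H_2$ is a Lie algebra isomorphism that is a homeomorphism for the topologies induced by the embeddings $H_i\subset\fd(\CE^i,a_i)$, which is precisely the second assertion.

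The substantive input is Proposition~\ref{prcovfa}, imported from~\cite{cfg2017}; granting it, the remaining argument is a short diagram chase. The point that demands care --- and which I expect to be the main obstacle --- is the identification of the two descriptions of $\fd(\CE^i,a_i)$, namely the coordinate-free construction of~\cite{cfg2017} and the inverse-limit construction of Section~\ref{subsecbt}, together with the verification that tameness (Definition~\ref{dtame}, Remark~\ref{rtame}), which is phrased via the inverse-limit topology, transports correctly along this identification. I would also note that $\CE^3$ need not be an evolution equation, so for the intermediate object only the topological Lie algebra $\fd(\CE^3,a_3)$ of~\cite{cfg2017} is used and no inverse-limit description of it is needed.
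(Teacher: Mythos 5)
Your proposal is correct and follows essentially the same route as the paper's own proof: fix $a_3\in\CE^3$, apply Proposition~\ref{prcovfa} to each covering $\tau_i$, use Remark~\ref{rtame} to see the images are tame of finite codimension, and compose $\vf_2\circ\vf_1^{-1}$ to get the homeomorphic isomorphism $H_1\cong H_2$. Your additional remarks on the identification of the inverse-limit and coordinate-free descriptions of $\fd(\CE^i,a_i)$ and on $\CE^3$ not being evolution are sound and consistent with the paper's setup.
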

\begin{proof}
According to Definition~\ref{defbt},
if $\ce^1$ and $\ce^2$ are connected by a B\"acklund transformation, 
then there are a PDE $\CE^3$ and coverings~\er{ctbt}.

Let $a\in\ce^3$. We set $a_i=\tau_i(a)$ for each $i=1,2$.
Applying Proposition~\ref{prcovfa} to the covering $\tau_i\cl\CE^3\to\CE^i$ 
and using Remark~\ref{rtame}, we get an embedding 
$$
\vf_i\cl\fd(\ce^3,a)\hookrightarrow\fd(\ce^i,a_i)
$$
such that $\vf_i\big(\fd(\ce^3,a)\big)$ 
is a tame Lie subalgebra of $\fd(\ce^i,a_i)$ of finite codimension
and the isomorphism $\vf\cl\fd(\ce^3,a)\xrightarrow{\sim}\vf_i\big(\fd(\ce^3,a)\big)$
is a homeomorphism with respect to the topologies on
$\fd(\ce^3,a)$ and $\vf_i\big(\fd(\ce^3,a)\big)$.

Then the subalgebras 
$H_i=\vf_i\big(\fd(\ce^3,a)\big)\subset\fd(\CE^i,a_i)$, $i=1,2$, 
satisfy all the required properties. In particular, the isomorphism 
$$
\vf_2\circ\vf_1^{-1}
\cl H_1=\vf_1\big(\fd(\ce^3,a)\big)\xrightarrow{\sim} 
H_2=\vf_2\big(\fd(\ce^3,a)\big)
$$
is a homeomorphism.
\end{proof}

\begin{remark}
\lb{rosubs}
In Theorem~\ref{ptncbt} we say that $\CE^i$ is the infinite prolongation 
of a $(1+1)$-dimensional scalar evolution equation, for each $i=1,2$.
Actually, the result and proof of Theorem~\ref{ptncbt} remain valid 
if $\CE^i$ is an open subset of the infinite prolongation of 
a $(1+1)$-dimensional scalar evolution equation.
\end{remark}

Theorem~\ref{ptncbt} provides  
a powerful necessary condition for two given evolution equations  
to be connected by a B\"acklund transformation. 
Using Theorem~\ref{ptncbt}, in Section~\ref{snebt} we prove Theorem~\ref{ptknkdv},
which describes some non-existence results for B\"acklund transformations.

\section{Some non-existence results for B\"acklund transformations}
\lb{snebt}

In this section we assume $\kik=\Com$.
Recall that, for any $e_1,e_2,e_3\in\Com$,
the Krichever-Novikov equation $\kne(e_1,e_2,e_3)$ is given by~\er{knedef}, 
and the algebraic curve $\cur(e_1,e_2,e_3)$ is given by~\er{curez}.
Now we can prove Theorem~\ref{scknprop}, which is repeated below.
\begin{theorem}
\label{ptknkdv}
Let $e_1,e_2,e_3,e'_1,e'_2,e'_3\in\Com$ such that 
\beq
\lb{eeeee}
e_1\neq e_2\neq e_3\neq e_1,\qquad\quad
e'_1\neq e'_2\neq e'_3\neq e'_1.
\ee

If the curve $\cur(e_1,e_2,e_3)$ is not birationally equivalent to 
the curve $\cur(e'_1,e'_2,e'_3)$, 
then the equation $\kne(e_1,e_2,e_3)$ is not connected 
with the equation $\kne(e'_1,e'_2,e'_3)$ by any B\"acklund transformation 
\textup{(}BT\textup{)}. 

Also, if $e_1\neq e_2\neq e_3\neq e_1$, then $\kne(e_1,e_2,e_3)$ is not connected with the KdV equation by any BT. 
\end{theorem}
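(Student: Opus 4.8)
The plan is to distill from $\fd(\CE,a)$ a birational invariant of the spectral curve attached to the equation, and then to contradict it by means of Theorem~\ref{ptncbt}. Assume first that $\kne(e_1,e_2,e_3)$ and $\kne(e'_1,e'_2,e'_3)$ are connected by a B\"acklund transformation, and let $\CE^1$, $\CE^2$ denote their infinite prolongations. By Theorem~\ref{ptncbt} there are points $a_1\in\CE^1$, $a_2\in\CE^2$ and tame subalgebras $H_i\subset\fd(\CE^i,a_i)$ of finite codimension together with a topological Lie algebra isomorphism $\Phi\cl H_1\xrightarrow{\sim}H_2$.

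To each $H_i$ I would associate the associative algebra $\itw\big(\rdc(H_i)\big)$. This assignment is an isomorphism invariant: the quasi-solvable ideal $\iqs(H_i)$, and hence $\rdc(H_i)=H_i/\iqs(H_i)$, depends only on the topological Lie algebra structure of $H_i$, while $\itw(-)$ depends only on the abstract Lie algebra structure of its argument. Therefore $\Phi$ forces $\itw\big(\rdc(H_1)\big)\cong\itw\big(\rdc(H_2)\big)$ as associative $\Com$-algebras. The heart of the argument is the identification of $\rdc(H_i)$. Recall from Example~\ref{rdckn} that the quasi-solvable elements of $\fd(\CE^i,a_i)$ are exactly the kernel of the canonical surjection $\mu_i\cl\fd(\CE^i,a_i)\to\mR_{e_1,e_2,e_3}$ (respectively onto $\mR_{e'_1,e'_2,e'_3}$), built from $\rho_1$ and the isomorphism $\fd^1(\CE^i,a_i)\cong\mR$. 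I would establish the lemma $\iqs(H_i)=H_i\cap\ker\mu_i$, which yields $\rdc(H_i)\cong L_i:=\mu_i(H_i)$, a subalgebra of finite codimension in $\mR_{e_1,e_2,e_3}$ (resp.\ $\mR_{e'_1,e'_2,e'_3}$).

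With this in hand, Theorem~\ref{itre} gives $\itw\big(\rdc(H_i)\big)\cong\itw(L_i)\cong Q_{e_1,e_2,e_3}$ (resp.\ $Q_{e'_1,e'_2,e'_3}$), the function field of the elliptic curve $\cur(e_1,e_2,e_3)$ (resp.\ $\cur(e'_1,e'_2,e'_3)$). Hence $Q_{e_1,e_2,e_3}\cong Q_{e'_1,e'_2,e'_3}$ as fields, so the two curves are birationally equivalent; the contrapositive is precisely the first statement. For the KdV statement the same scheme applies, using Example~\ref{rdckdv} and Theorem~\ref{itsl} on the KdV side: a BT would give $\rdc(H_2)\cong L_2$ of finite codimension in $\msl_2(\Com[\la])$ and hence $Q_{e_1,e_2,e_3}\cong\itw(L_1)\cong\itw(L_2)\cong\Com(\la)$, i.e.\ an isomorphism between the function field of a genus-one curve and that of $\mathbb{P}^1$, which is impossible.

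The main obstacle is the lemma $\iqs(H_i)=H_i\cap\ker\mu_i$: passing to a tame finite-codimension subalgebra must neither create nor destroy quasi-solvable elements. For the inclusion $H_i\cap\ker\mu_i\subseteq\iqs(H_i)$ I would use that, by Theorem~\ref{fdockn}, $\ker\mu_i$ is the inverse limit of the central kernels $\ker\vf_\oc$, so it is pro-nilpotent and its elements generate solvable ideals modulo every open ideal of $H_i$. For the reverse inclusion $\iqs(H_i)\subseteq H_i\cap\ker\mu_i$ I would invoke Lemma~\ref{simmr} (no nonzero solvable ideals in finite-codimension subalgebras of $\mR$), which prevents any quasi-solvable element from having nonzero image in $L_i$; the KdV case uses Lemma~\ref{simsl} in the same way. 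I expect the delicate point to be the interplay between the induced topology on $H_i$ and the open-ideal condition in the definition of quasi-solvability; once $\rdc(H_i)\cong L_i$ is secured, the remaining steps are formal applications of Theorems~\ref{itre},~\ref{itsl} together with the fact that birational equivalence of curves is detected by their function fields.
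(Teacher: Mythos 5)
Your proposal is correct and follows essentially the same route as the paper: Theorem~\ref{ptncbt} produces homeomorphically isomorphic tame subalgebras $H_1\cong H_2$, the identification $\iqs(H_i)=H_i\cap\ker\mu_i$ (which is exactly the paper's Lemmas~\ref{tskn} and~\ref{tskdv}, proved there via Lemma~\ref{lkeropen}, Theorems~\ref{fdockn},~\ref{pfkdv}, and Lemmas~\ref{simmr},~\ref{simsl}) gives $\rdc(H_i)\cong\mu_i(H_i)$, and then Theorems~\ref{itre},~\ref{itsl} convert this into an isomorphism of function fields, contradicting birational inequivalence (or, for KdV, the inequivalence of a genus-one function field with $\Com(\la)$). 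Your sketch of the key lemma, including the use of the homeomorphism to transport quasi-solvability and the pro-nilpotence of $\ker\mu_i$ from the central-extension structure in Theorem~\ref{fdockn}, matches the paper's intended argument.
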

\begin{proof}
Let $\CE$ be the infinite prolongation of a 
$(1+1)$-dimensional scalar evolution equation.
For each $a\in\CE$, the notion of a tame Lie subalgebra $H\subset\fd(\CE,a)$
has been defined in Section~\ref{subsecbt}.
Using the topology on $\fd(\CE,a)$ described in Section~\ref{subsecbt}, 
on any tame Lie subalgebra $H\subset\fd(\CE,a)$ we have 
the topology induced by the embedding $H\subset\fd(\CE,a)$.

In Section~\ref{latqse} for any Lie algebra $\bl$ with topology 
we have defined the Lie algebra $\rdc(\bl)=\bl/\iqs(\bl)$, 
where $\iqs(\bl)$ is the ideal of quasi-solvable elements in~$\bl$.
In particular, we can consider $\rdc(H)$ for a 
tame Lie subalgebra $H\subset\fd(\CE,a)$.

We suppose that $e_1,e_2,e_3,e'_1,e'_2,e'_3\in\Com$ obey~\er{eeeee}.
Let $\CE_{KdV}$ be the infinite prolongation of the KdV equation.
Let $\CE_{e_1,e_2,e_3}$, $\CE_{e'_1,e'_2,e'_3}$ 
be the infinite prolongations of the equations $\kne(e_1,e_2,e_3)$, 
$\kne(e'_1,e'_2,e'_3)$, respectively.

\begin{lemma}
\lb{tskdv}
Let $a\in\CE_{KdV}$. For any tame Lie subalgebra $H\subset\fd(\CE_{KdV},a)$ 
of finite codimension, 
the Lie algebra $\rdc(H)$ is isomorphic to a Lie subalgebra of 
$\msl_2(\Com[\la])$ of finite codimension.
\end{lemma}
\begin{proof}
Set $\CE=\CE_{KdV}$.
In Example~\ref{rdckdv} we have defined the 
surjective homomorphism 
$$
\psi\cl\fd(\CE,a)\to\msl_2(\kik[\la]).
$$
In this section we assume $\kik=\Com$, so $\msl_2(\kik[\la])=\msl_2(\Com[\la])$.
Since $H$ is of finite codimension in $\fd(\CE,a)$, 
the Lie subalgebra $\psi(H)\subset\msl_2(\Com[\la])$ 
is of finite codimension in $\msl_2(\Com[\la])$.

In Example~\ref{rdckdv} we have shown that 
an element $w\in\fd(\CE,a)$ is quasi-solvable iff $w\in\ker\psi$.
Similarly, the definition of $\psi$,
the definition of the topology on $\fd(\CE,a)$ and $H$, Lemma~\ref{lkeropen}, 
Theorem~\ref{pfkdv}, and Lemma~\ref{simsl} imply that an element 
$\tilde w\in H$ is quasi-solvable iff $\tilde w\in\ker\psi\cap H$.
Therefore, $\rdc(H)$ is isomorphic to $\psi(H)$.
\end{proof}
\begin{lemma}
\lb{tskn}
Let $a\in\CE_{e_1,e_2,e_3}$. 
For any tame Lie subalgebra $H\subset\fd(\CE_{e_1,e_2,e_3},a)$ 
of finite codimension, 
the Lie algebra $\rdc(H)$ is isomorphic to a Lie subalgebra of 
$\mR_{e_1,e_2,e_3}$ of finite codimension.
\end{lemma}
\begin{proof}
Set $\CE=\CE_{e_1,e_2,e_3}$.
In Example~\ref{rdckn} we have defined the 
surjective homomorphism 
$$
\mu\cl\fd(\CE,a)\to\mR_{e_1,e_2,e_3}.
$$
Since $H$ is of finite codimension in $\fd(\CE,a)$, 
the Lie subalgebra $\mu(H)\subset\mR_{e_1,e_2,e_3}$ 
is of finite codimension in $\mR_{e_1,e_2,e_3}$.

In Example~\ref{rdckn} we have shown that 
an element $w\in\fd(\CE,a)$ is quasi-solvable iff $w\in\ker\mu$.
Similarly, the definition of $\mu$,
the definition of the topology on $\fd(\CE,a)$ and $H$, Lemma~\ref{lkeropen}, 
Theorem~\ref{fdockn}, and Lemma~\ref{simmr} imply that an element 
$\tilde w\in H$ is quasi-solvable iff $\tilde w\in\ker\mu\cap H$.
Therefore, $\rdc(H)$ is isomorphic to~$\mu(H)$.
\end{proof}

Suppose that $\kne(e_1,e_2,e_3)$ and $\kne(e'_1,e'_2,e'_3)$ 
are connected by a BT.
Then, by Theorem~\ref{ptncbt}, there are points
$a_1\in\CE_{e_1,e_2,e_3}$, $a_2\in\CE_{e'_1,e'_2,e'_3}$ and 
tame subalgebras $H_1\subset\fd(\CE_{e_1,e_2,e_3},a_1)$, 
$H_2\subset\fd(\CE_{e'_1,e'_2,e'_3},a_2)$ of finite codimension 
such that $H_1$ is isomorphic to $H_2$, and this isomorphism is a homeomorphism.
Then $\rdc(H_1)\cong\rdc(H_2)$, which yields 
\beq
\lb{rdchh}
\itw(\rdc(H_1))\cong\itw(\rdc(H_2)).
\ee
By Lemma~\ref{tskn}, $\rdc(H_1)$ is isomorphic to a Lie subalgebra of 
$\mR_{e_1,e_2,e_3}$ of finite codimension, and 
$\rdc(H_2)$ is isomorphic to a Lie subalgebra of 
$\mR_{e'_1,e'_2,e'_3}$ of finite codimension.

Therefore, by Theorem~\ref{itre}, $\itw(\rdc(H_1))$ is isomorphic to 
$Q_{e_1,e_2,e_3}$, and $\itw(\rdc(H_2))$ is isomorphic to $Q_{e'_1,e'_2,e'_3}$.
Combining this with~\er{rdchh}, we get 
\beq
\lb{qeqe}
Q_{e_1,e_2,e_3}\cong Q_{e'_1,e'_2,e'_3}.
\ee
Since $Q_{e_1,e_2,e_3}$ is isomorphic to the field of rational functions 
on the curve $\cur(e_1,e_2,e_3)$, and $Q_{e'_1,e'_2,e'_3}$ 
is isomorphic to the field of rational functions on the curve $\cur(e'_1,e'_2,e'_3)$,
the isomorphism~\er{qeqe} implies that 
$\cur(e_1,e_2,e_3)$ is birationally equivalent to $\cur(e'_1,e'_2,e'_3)$.

Therefore, if $\cur(e_1,e_2,e_3)$ is not birationally equivalent to 
$\cur(e'_1,e'_2,e'_3)$, then 
$\kne(e_1,e_2,e_3)$ is not connected with $\kne(e'_1,e'_2,e'_3)$ by any BT.
So we have proved the first statement of Theorem~\ref{ptknkdv}.

To prove the second statement of this theorem, 
we suppose that, for some $e_1,e_2,e_3\in\Com$ satisfying 
$e_1\neq e_2\neq e_3\neq e_1$, 
the equation $\kne(e_1,e_2,e_3)$ is connected with the KdV equation by a BT. 

Then, by Theorem~\ref{ptncbt}, there are points
$\tilde{a}_1\in\CE_{e_1,e_2,e_3}$, $\tilde{a}_2\in\CE_{KdV}$ and 
tame subalgebras $\tilde{H}_1\subset\fd(\CE_{e_1,e_2,e_3},\tilde{a}_1)$, 
$\tilde{H}_2\subset\fd(\CE_{KdV},\tilde{a}_2)$ of finite codimension 
such that $\tilde{H}_1$ is isomorphic to $\tilde{H}_2$, 
and this isomorphism is a homeomorphism. 
Then $\rdc\big(\tilde{H}_1\big)\cong\rdc\big(\tilde{H}_2\big)$, which yields
\beq
\lb{rdcthh}
\itw\big(\rdc\big(\tilde{H}_1\big)\big)\cong
\itw\big(\rdc\big(\tilde{H}_2\big)\big).
\ee

By Lemma~\ref{tskn}, 
$\rdc\big(\tilde{H}_1\big)$ is isomorphic to a Lie subalgebra of 
$\mR_{e_1,e_2,e_3}$ of finite codimension.
According to Theorem~\ref{itre}, this implies that 
$\itw\big(\rdc\big(\tilde{H}_1\big)\big)$ is isomorphic 
to the field $Q_{e_1,e_2,e_3}$, 
which is isomorphic to the field of rational functions on $\cur(e_1,e_2,e_3)$.

Let $\Com(\la)$ be the field of rational functions in $\la$.
By Lemma~\ref{tskdv}, 
$\rdc\big(\tilde{H}_2\big)$ is isomorphic to a Lie subalgebra of 
$\msl_2(\Com[\la])$ of finite codimension.
By Theorem~\ref{itsl}, this implies that 
$\itw\big(\rdc\big(\tilde{H}_2\big)\big)$ is isomorphic to $\Com(\la)$.

The isomorphisms $\itw\big(\rdc\big(\tilde{H}_1\big)\big)\cong Q_{e_1,e_2,e_3}$,
$\itw\big(\rdc\big(\tilde{H}_2\big)\big)\cong\Com(\la)$, and~\er{rdcthh} 
yield that $Q_{e_1,e_2,e_3}$ is isomorphic to $\Com(\la)$,
but this contradicts to the fact that the elliptic curve $\cur(e_1,e_2,e_3)$
is not birationally equivalent to the rational curve~$\Com$ with coordinate~$\la$.
The obtained contradiction shows that 
$\kne(e_1,e_2,e_3)$ is not connected with the KdV equation by any BT. 
\end{proof}

\begin{remark}
\lb{rfst}
As we have shown in Theorem~\ref{knkn}, 
the first statement of Theorem~\ref{ptknkdv}
(which is the same as the first statement of Theorem~\ref{scknprop}) 
implies the following. If the numbers~\er{knnumb} satisfy~\er{vv12n}, 
then the equation $\kne(e_1,e_2,e_3)$ is not connected 
with the equation $\kne(e'_1,e'_2,e'_3)$ by any BT.
\end{remark}

\section*{Acknowledgements}

Part of this research was done when 
S.~Igonin was a research fellow of 
Istituto Nazionale di Alta Matematica (INdAM), Italy. 
G.~Manno is a member of GNSAGA of INdAM.

The authors acknowledge support by the project 
FIR-2013 Geometria delle equazioni differenziali. 
G.~Manno was also partially supported by 
``Starting grant per giovani ricercatori'' of Politecnico di Torino.
The work of S.~Igonin was carried out within the framework of the State Programme of the Ministry of Education and Science of the Russian Federation, 
project number 1.12873.2018/12.1.

S.~Igonin would like to thank A.~Henriques, A.~P.~Fordy, I.~S.~Krasilshchik, 
Yu.~I.~Manin, V.~V.~Sokolov, A.~M.~Verbovetsky, and A.~M.~Vinogradov 
for useful discussions.


S.~Igonin is grateful to the Max Planck Institute for Mathematics (Bonn, Germany) 
for its hospitality and excellent working conditions 
during 02.2006--01.2007 and 06.2010--09.2010, when part of this research was done.

\end{document}